\documentclass[aps,pra, two column,10pt,superscriptaddress,nofootinbib]{revtex4-2}
\usepackage{pgfplots}
\pgfplotsset{compat=1.18}
\usepackage{amsmath}
\usetikzlibrary{plotmarks}
\usepackage{eufrak} 
\usepackage{physics}
\usepackage{amsmath,amssymb,amsthm,amstext,mathrsfs}
\usepackage[colorlinks=true,citecolor=blue,urlcolor=blue]{hyperref}
\usepackage{comment}
\usepackage{eufrak} %
\usepackage{enumerate}
\usepackage{times,txfonts}
\usepackage{braket}
\usepackage{svg}
\usepackage{color}
\usepackage{natbib}
\usepackage{amsmath,blkarray}
\usepackage{mathtools}
\usepackage{latexsym}
\usepackage{tabularx, booktabs}
\usepackage{graphics,epstopdf}
\usepackage{graphicx, lipsum}
\usepackage{amsfonts}
\usepackage{enumerate}
\usepackage{color,soul}

\newcommand{\be}{\begin{equation}}
\newcommand{\ee}{\end{equation}}
\newcommand{\ba}{\begin{eqnarray}}
\newcommand{\ea}{\end{eqnarray}}

\newtheorem{proposition}{Proposition}

\newtheorem{thm}{Theorem}
\newtheorem{corollary}{Corollary}

\definecolor{ss}{RGB}{250,80,220}
\begin{document}
\title{Certified quantum supremacy in entanglement-assisted prepare-measure random-access-code}

	\author{Rajdeep Paul}\email{rajdeeppaul22@gmail.com}\affiliation{Department of Physics, Indian Institute of Technology Hyderabad, Telangana-50228 India }
		\author{Prabuddha Roy}
        \email{prabuddhar@kias.re.kr}
         \affiliation{Quantum Universe Center, Korea Institute for Advanced Study, Seoul 02455, Republic of Korea }
         \author{A. K. Pan}
	\email{akp@phy.iith.ac.in}
	 \affiliation{Department of Physics, Indian Institute of Technology Hyderabad, Telangana-50228 India }
     
\begin{abstract} 

We develop a family of semi-device-independent (SDI) entanglement-assisted prepare-measure (PM) communication games involving two parties,  within the $n\rightarrow l$ random-access code (RAC) framework where the sender Alice holds a $n$-bit string and communicates  $l<n$ bits or qubits to the receiver Bob. In contrast to the standard quantum PMRAC, here the parties share a prior entanglement, and  Alice applies quantum operations on her sub-system to encode her inputs and sends to  Bob.  We first consider the $4\rightarrow l$ entanglement-assisted PMRAC with $l=1$ and $2$ and derive the optimal quantum success probabilities using an elegant analytical technique. We demonstrate quantum supremacy over both classical RACs and conventional quantum PMRACs. Moreover, we exhibit that the optimal quantum advantage allows one to certify Alice’s unitary operations. We then derive an upper bound on the quantum success probabilities for  $5\rightarrow  l$  entanglement-assisted PMRAC with $l=1,2$ and $3$.  Further, we extend the demonstration of quantum advantage for  $n\rightarrow n-2$ case where $n$ is arbitrary. 

%Thus, our results open a new avenue for quantifying how the number of communicated qubits in a prepare-and-measure game can be harnessed to enhance information-processing tasks.}

%{\color{blue}Quantum resources yield correlations with no classical equivalent. Recently, there has been a surge of interest in leveraging these correlations—specifically entanglement paired with quantum communication—as operational assets within semi-device-independent (SDI) prepare-and-measure (PM) frameworks. These frameworks serve as the foundational architecture for a diverse array of quantum communication protocols. Motivated by this, we investigate a newly introduced entanglement-assisted PM task in the setting of random-access codes (RACs), namely the $4$-bit PMRAC variant of \cite{Raj2025}. Building on \cite{Raj2025}, we analytically examine two instances of this $4$-bit PMRAC game, each involving a different number of communicated qubits, and establish quantum superiority over both classical RACs and the standard $4$-bit PMRAC scheme. The optimal quantum success probabilities are obtained in closed form via the sum-of-squares (SOS) technique. We additionally demonstrate that, in the single-qubit communication setting, this optimal quantum advantage permits self-testing of Alice’s unitary operations, whereas such certification fails once two qubits are communicated. Conversely, GHZ-type states achieve the maximal success probability, enabling a sharp discrimination between the GHZ and W entanglement classes.}

\end{abstract}
\maketitle
\section{Introduction}

A quintessential feature of quantum information theory is to seek novel tasks for which quantum communication offers an advantage over classical communication. A widely used framework for studying such advantages is the PM paradigm, where a sender (Alice) encodes information into a physical state, transmits it through a classical or quantum channel, and a receiver (Bob) performs a measurement to decode a specific property of the input without access to Alice's encoding choice. This paradigm underpins numerous information-processing applications, including foundational tests of quantum theory \cite{Pawlwski2009,Brassard2006}, dimension witnessing \cite{Brunner2013,Wehner2008,Gallego2010,De2017,Shyam2020}, random access coding \cite{Ambainis1999,Ambainis2008,Armin2015prl}, quantum random-number generation \cite{Li2011,Li2012}, quantum key distribution \cite{Paw2011,Woodhead2015}. 

A PM scenario is naturally studied in the  SDI framework, in that an upper bound on the Hilbert-space dimension of the communicated system is assumed, without relying on the inner workings of the preparation and measurement devices \cite{Tavakoli2018st,Farkas2019,Mironowicz2019,Tavakoli2020sc, Smania2020, Miklin2020, Gomez2016, Gomez2018, Pan2021, Gomez2016, supic2020input}. Lately, various SDI protocols in the PM scenario have been proposed to certify non-projective measurements \cite{Mironowicz2019,Smania2020}, quantum states and sharp measurements \cite{Tavakoli2018,Tavakoli2020sdic}, mutually unbiased bases \cite{Farkas2019}, randomness \cite{Li2011, Li2012, Lunghi2015}, unsharpness parameters \cite{Mohan2019,Miklin2020,Abhyoudai2023,Paul2024}, unitary operations \cite{Raj2025}, and dimension witnesses~\cite{Vicente2017,Tavakoli2020sdic}. 

 Recently, motivated from the dense-coding protocol \cite{Holevo1973}, Tavakoli et al.~\cite{TavakoliPRXQ2021} introduced a hybrid PM scenario involving prior entanglement between the sender and receiver, termed as the entanglement-assisted PM scenario. In this work, we consider a family of generalized $n\rightarrow l$ two-level RAC in entanglement-assisted PM scenario and demonstrate that $n$-bit PMRAC with pre‑shared entanglement involving $l$ qubit communication outperforms both classical RACs \cite{Ambainis2009} and standard  quantum RACs \cite{Galvao2001,Hayashi2006,Tavakoli2015,Chailloux2016,Pawlowski2017,Hameedi2017a,Tavakoli2018}.
 
 %In particular, we put forth various cases of $n=4,5$‑bit entanglement-assisted PMRAC depending on the number of qubits $l$ transmitted from Alice to Bob. In our $n \rightarrow l$ entanglement-assisted PMRAC, Alice and Bob initially share a bipartite entangled state $\rho\in \mathcal{L}(\mathbb{C}^{2^l} \otimes \mathbb{C}^{2})$. Alice receives an input string $x \in \{0,1\}^n$, sampled uniformly at random, while Bob is independently given a uniformly random input $y \in [n]$. Upon receiving $l$ qubits from Alice, Bob's task is to guess the $y$-th bit of Alice’s string $x$. Here, we aim to find suitable encoding and decoding strategies that maximize the winning probability.

In a standard two-level $n \rightarrow l$ PMRAC game, Alice receives an input string $x \in \{0,1\}^n$, sampled uniformly at random, while Bob is independently given a uniformly random input $y \in [n]$. Bob's task is to guess the $y^{th}$ bit of Alice’s string $x$. In classical RAC, Alice sends $l$ bits to Bob. In the standard two-level quantum PMRAC, Alice encodes her inputs $x$ into the $l$-qubit quantum states $\rho_x$ and sends them to Bob. In our entanglement-assisted PMRAC, Alice and Bob initially share a bipartite entangled state $\rho\in \mathcal{L}(\mathbb{C}^{2^l} \otimes \mathbb{C}^{2})$, and upon receiving the input $x$, Alice applies quantum operations $\Lambda_x$ on her part of the subsystem and sends her $l$ qubits to Bob. The rest of the protocol resembles the standard RAC scenario. Note that $2\rightarrow 1$ entanglement-assisted PMRAC is trivial, which perfectly mimics the dense-coding protocol. The $3\rightarrow 1$ entanglement-assisted PMRAC has been explicitly demonstrated in \cite {Raj2025}, where the self-testing of unitaries and the associated measurements were explored.

In this work, we first consider the entanglement-assisted $4 \rightarrow l$ PMRAC for $l=1$ and $2$. We analytically derive the optimal success probability in each case and demonstrate the quantum advantage over  $4$-bit classical and standard quantum PMRAC. We further show that the optimal success probability in $4 \rightarrow 1$ PMRAC certifies the maximally entangled two-qubit state shared by Alice and Bob, Alice's local unitary operations, and Bob’s measurements. For $4 \rightarrow 2$ entanglement-assisted PMRAC, we prove that a three-qubit GHZ state is necessary to attain the optimal quantum value and provide an explicit example for the construction of states and measurements achieving the optimal quantum success probability. We then consider $5 \rightarrow l$ PMRAC with $l=\{1,2,3\}$ and analytically derive upper bounds on the quantum success probabilities. Finally, we extend the framework to general $n\rightarrow l$ entanglement-assisted PMRAC with $l=n-2$ and demonstrate the quantum advantage over  classical RAC.

The paper is organized as follows. In Sec.~\ref {prem}, we briefly describe the classical, quantum RAC along with the $n$-bit entanglement-assisted PMRAC. In Sec .~\ref{4PMRAC}, we derive the maximum quantum success probabilities for the $4 \to 1$ and $4 \to 2$ entanglement-assisted PMRAC. Subsequently we demonstrate certification of unitary operations for the $4 \to 1$ case, while for $4 \to 2$, we establish the necessity of GHZ state for obtaining the optimal value. In Sec .~\ref{5PMRAC} and \ref{nPMRAC}, we introduced the $5 \rightarrow l, \forall l \in [3]$ and generalized $n\rightarrow n-2$ entanglement-assisted PMRAC respectively and derive the upper bound on the quantum success probabilities. Finally, in Sec.~\ref{conc}, we summarize our findings and discuss future directions.
%%%%%%%%%%%%%%%%%%%%%%%%%%%%%%%%%%%%%%%%%%%%%%%%%%%%%%%%%%%%%%%%%%%%%%%%%%%%
\section{Preliminaries}\label{prem} 
To set the stage for our main results, we first provide a brief description of the  classical RAC, the standard quantum PMRAC and the entangled-assisted PMRAC~\cite{Hayashi2006, Ambainis2009, Spekkens2009,  Chailloux2016, Hameedi2017b, Tavakoli2018, Mohan2019, Pauwels2022, Pauwels2022njp,Piveteau2022,Abhyoudai2023, Singh2025}.

%In the standard PM scenario, the sender (Alice) encodes an input variable $x$ into a message of reduced dimension and transmits it to the receiver (Bob) through a communication channel. The classical or quantum message is constrained by an underlying Hilbert-space dimension $d$. Bob then selects an input $y$ and performs a decoding procedure that yields an output $b$ according to the conditional distribution $p(b|x,y)$. More recently, PM scenarios have also been investigated considering the assumption that Alice and Bob pre-share entanglement~\cite{TavakoliPRXQ2021}.

The two-level RAC can be described as follows. Alice receives an input from a uniform random set of $n$-bit strings
$x\in\{0,1\}^{n}$. Upon receiving the input $x$
she uses the preparation procedure $P_x$ to prepare the physical state and sends it to
Bob. On the other hand, Bob randomly receives an input $y\in[n]$ and performs a dichotomic
measurement $B_y$ which produces output $b\in \{0,1\}$. The winning condition of the game is $b = x_y$, i.e., when Bob outputs the  $y^{\text{th}}$ bit of Alice's input. The success of the RAC task is determined by optimizing the winning probability $p(b=x_y|P_{x},B_{y})$, given by
	\begin{align}
		\label{GSP}
		\mathcal{P}_n=\dfrac{1}{2^{n}n}\sum_{y=1}^{n}\sum_{x\in \{0,1\}^{n}}p(b=x_{y}|P_{x},B_{y}).
	\end{align}

   In the following, we briefly recapitulate the classical RAC, the standard  quantum PMRAC, and the  entanglement-assisted PMRAC.

\subsection{The classical and the standard quantum PMRAC}

In a $n \to l$ two-level classical RAC \cite{Ambainis2008}, where given a $n$-bit string, Alice sends a message of length $l(<n)$. The average classical success probability $\mathcal{P}^{n\rightarrow l}_C$ for arbitrary $n$ and $l$ remains a challenging task. Nonetheless, it is simple to understand that for a given $n$, the higher the value of $l$, the higher the success probability can be obtained. For example, for $n=3$, if Alice transmits a single bit ($l=1$) to Bob, the classical success probability is $\mathcal{P}_{C}^{3\rightarrow 1}=0.75$. However, if Alice sends two bits ($l=2$), the success probability becomes $\mathcal{P}_{C}^{3\rightarrow 2}=0.833$.

 In the standard quantum PMRAC ~\cite{Galvao2001,Hayashi2006,Tavakoli2015,Chailloux2016,Tavakoli2018}, corresponding to random input string $x\in\{0,1\}^n$, Alice prepares quantum states $\rho_x\in\mathcal{L}(\mathbb{C}^{2^l})$. The state is then transmitted to Bob, who receives a random input $y\in [n]$ and performs a dichotomic measurement $\{\Pi_{y}^b\}$, where $\Pi_{y}^{b} = (\openone+(-1)^{b}B_{y})/{2}$ with $b\in\{0,1\}$. For the winning condition $b = x_y$, the average success probability $\mathcal{P}_{SQ}^{n\rightarrow l}$ can be derived. 

\subsection{The Entanglement-assisted PMRAC}

\begin{figure}[h]
                 \centering
                 \includegraphics[width=8.0cm, height=4.3cm]{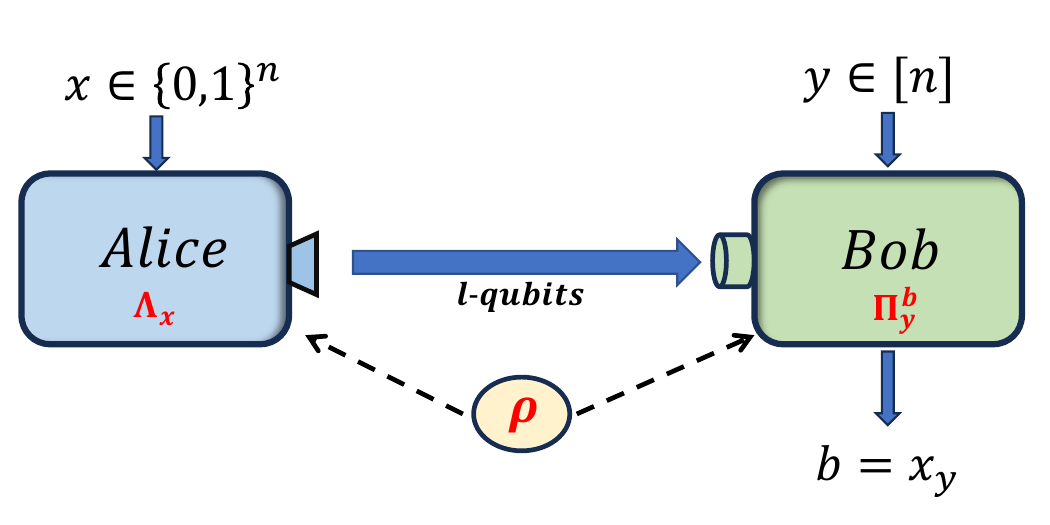}
                 \caption{\textbf{Entanglement-assisted quantum PMRAC:} Alice and Bob share a quantum state $\rho$. Alice performs quantum channel operation $\Lambda_x$ on her qubit given input $x\in\{0,1\}^n$ and transmits it to Bob, who performs projective measurement $\Pi_{y}^{b}$ on both qubits and produces a binary outcome.}\label{eapmrac}
                 \end{figure}
                 
In this work, we consider the entanglement-assisted PMRAC first introduced in \cite{TavakoliPRXQ2021}. In an $n \to l$ entanglement-assisted PMRAC, Alice and Bob share a bipartite entangled state $\rho\in \mathcal{L}(\mathbb{C}^{2^l} \otimes \mathbb{C}^{2})$. Given an input string $x \in \{0,1\}^n$, Alice applies a completely positive trace-preserving map $\Lambda_x$ to her subsystem, as depicted in Fig.~\ref{eapmrac}, and then transmits her $l$ qubits to Bob. Consequently, Bob receives one of the states $\rho_x = (\Lambda_x\otimes\openone)(\rho)$ and performs a joint  projective measurement with two possible outcomes $\{\Pi_{y}^b\}$ on this system. Given the winning condition $b=x_y$, the average quantum success probability of the $n \to l$ entanglement-assisted PMRAC is thus given as

\begin{eqnarray}\label{VQSP}
  \mathcal{P}^{n\rightarrow l}_{Q}=\dfrac{1}{2^{n}n}\sum_{y=1}^{n}\sum_{x\in \{0,1\}^{n}}\Tr\hspace{2pt}[\rho_{x}\Pi^b_y]
\end{eqnarray}
where $l$ denotes the number of qubits transmitted from Alice to Bob. It should be noted that, for a meaningful comparison, $n> l$ must be satisfied. 

In the $3\rightarrow 1$ entanglement-assisted PMRAC \cite{Raj2025}, the optimal quantum success probability is derived as 
\begin{eqnarray}\label{ps31}
  \mathcal{P}^{3\rightarrow 1}_{Q}=\dfrac{1}{2}+\frac{1}{\sqrt{6}}\approx 0.908
\end{eqnarray}
Note that the standard quantum PMRAC framework, the maximal success probability is $\mathcal{P}^{3\rightarrow 1}_{SQ}=0.788$, which implies $\mathcal{P}^{3\rightarrow 1}_{Q}>\mathcal{P}^{3\rightarrow 1}_{SQ}>\mathcal{P}^{3\rightarrow 1}_C$.
%The detailed derivation of the success probability along with optimization is provided in \cite{Raj2025}. 

In the preceding section, we explicitly construct the $4 \to l$ entanglement-assisted PMRAC and analytically derive their optimal success probabilities, thereby exhibiting a quantum advantage over classical RAC. \\

%It is important to note that in all the related works in the prepare-and-measure RAC,the state preparation involves arbitrary unitaries to rotate the shared state $\rho$ between Alice and Bob so that the RAC task attains the maximum success probability. Whereas, in order to establish our result, we have considered one shared state between Alice and Bob and rotated it using $2^n-1$ number of unitaries to prepare the corresponding states. This, in turn, helps us optimize the RAC's success probability with fewer resources. Crucially, we employ an analytical approach to demonstrate the certification protocol considering the fixed dimension of the system.

\section{\texorpdfstring{$4\rightarrow l$ entanglement-assisted PMRAC}{4-bit entanglement-assisted PMRAC}}\label{4PMRAC}

In  $4 \rightarrow l$ entanglement-assisted PMRAC, there are two distinct RACs corresponding to $l=1$ and $2$. We explicitly demonstrate quantum advantage in each case and discuss the certification of unitary operation.

In classical RAC, Alice receives $4$-bit strings, drawn uniformly at random from $x \in \{0,1\}^4$. Bob receives inputs $y \in [4]$ uniformly at random and performs a dichotomic measurement with the outcome $b \in \{0,1\}$. Given the task's winning condition as $b=x_y$, if Alice communicates one or two bits to Bob then the optimal classical success probability is $\mathcal{P}^{4\rightarrow 1}_C=0.69$ and $\mathcal{P}^{4\rightarrow 2}_C=0.75$ respectively. We derive the quantum success probabilities when Alice transmits one or two qubits. 

\subsection[4 to 1 case]{\(4 \rightarrow 1\) entanglement-assisted PMRAC}

In quantum theory, Alice and Bob share a two-qubit quantum state $\rho \in \mathcal{L}(\mathbb{C}^2\otimes \mathbb{C}^2)$. Alice encodes her input $x\in\{0,1\}^4$ by applying local unitary operation $U_{x}$ corresponding to $x$ so that $\rho_{x}=(U^\dagger_{x}\otimes\openone) \ \rho \ (U_{x}\otimes\openone)$. She then communicates her  qubit to Bob. Bob then has two qubits in his possession, and he performs binary outcome measurements $\{\Pi^b_{4,y}\}$ where $\Pi_{4,y}^{b}=(\openone_4+B_{4,y})/2, \forall y\in[4]$ (with $\sum_{b}\Pi^b_{4,y}=\openone_4$ and $b\in\{0,1\}$) for decoding $y^{\text{th}}$ bit of $x$. The quantum success probability from Eq.~(\ref{VQSP}) can be written as
\begin{widetext}
    \begin{eqnarray}\label{cal41}
	\mathcal{P}^{4\rightarrow 1}_{Q}=\dfrac{1}{2}&
    +\dfrac{1}{128}\ Tr[(\rho_{0000}-\rho_{1111}+\rho_{0011}-\rho_{1100})(B_{4,1}+B_{4,2})+(\rho_{0000}-\rho_{1111}-\rho_{0011}+\rho_{1100})(B_{4,3}+B_{4,4})\nonumber \\
	&+(\rho_{0101}-\rho_{1010}+\rho_{0110}-\rho_{1001})(B_{4,1}-B_{4,2})+(\rho_{0101}-\rho_{1010}-\rho_{0110}+\rho_{1001})(B_{4,3}-B_{4,4})\nonumber \\
	&+(\rho_{0001}-\rho_{1110}+\rho_{0010}-\rho_{1101})(B_{4,1}+B_{4,2})+(\rho_{0001}-\rho_{1110}-\rho_{0010}+\rho_{1101})(B_{4,3}-B_{4,4})\nonumber \\
	&+(\rho_{0100}-\rho_{1011}+\rho_{0111}-\rho_{1000})(B_{4,1}-B_{4,2})+(\rho_{0100}-\rho_{1011}-\rho_{0111}+\rho_{1000})(B_{4,3}
 +B_{4,4})]
\end{eqnarray}
\end{widetext}

Now, let us examine the first term in Eq. (\ref{cal41}). To obtain the maximum value of $\mathcal{P}^{4\rightarrow 1}_{Q}$, the states $\rho_{0000}, \rho_{0011}, \rho_{1100}$ and $\rho_{1111}$ must be eigen-states of $B_{4,1}$ and $B_{4,2}$. This demands that the states are mutually orthogonal pure states and satisfy $\rho_{0000}+\rho_{1100}+\rho_{0011}+\rho_{1111}=\openone_4$. A similar argument holds for each of such terms in Eq. (\ref{cal41}). This leads us to define two-qubit operators of the form
\begin{widetext}
    \begin{eqnarray}
	\label{obserg42}
	\nonumber
	M^1_{4,1}&=&\rho_{0000}-\rho_{1111}+\rho_{0011}-\rho_{1100}; \  
	M^1_{4,3}=
	\rho_{0001}-\rho_{1110}+\rho_{0010}-\rho_{1101},
	M^2_{4,1}=\rho_{0000}-\rho_{1111}-\rho_{0011}+\rho_{1100};\nonumber\\
    M^2_{4,3}&=&\rho_{0001}-\rho_{1110}-\rho_{0010}+\rho_{1101};
	M^1_{4,2}=
	\rho_{0101}-\rho_{1010}+\rho_{0110}-\rho_{1001};\
 M^1_{4,4}=
	\rho_{0100}-\rho_{1011}+\rho_{0111}-\rho_{1000},\nonumber \\
	M^2_{4,2}&=&
	\rho_{0101}-\rho_{1010}-\rho_{0110}+\rho_{1001};\
	M^2_{4,4}=\rho_{0100}-\rho_{1011}-\rho_{0111}+\rho_{1000}
\end{eqnarray}
\end{widetext}
  
Thus, the success probability in Eq. (\ref{cal41}) takes the form \begin{eqnarray}
	\label{QSP42}
	\mathcal{P}^{4\rightarrow 1}_{Q}&=&\dfrac{1}{2}+\dfrac{\mathscr{I}^{4\rightarrow 1}}{128}
\end{eqnarray}
where $\mathscr{I}^{4\rightarrow 1}$ is the correlation function 
\begin{eqnarray}\label{p142}
\mathscr{I}^{4\rightarrow 1}&=&\Tr\Big[(M^1_{4,1}+M^1_{4,2}+M^1_{4,3}+ M^1_{4,4})B_{4,1}\nonumber\\ 
	&&+(M^1_{4,1}-M^1_{4,2}+M^1_{4,3}- M^1_{4})B_{4,2}\nonumber\\ 
	&&+ (M^2_{4,1}+M^2_{4,2}+ M^2_{4,3}+M^2_{4,4})B_{4,3}\nonumber\\ 
	&&+(M^2_{4,1}-M^2_{4,2}- M^2_{4,3}+M^2_{4,4})B_{4,4}
 \Big]
\end{eqnarray}

Following we introduce an elegant analytical technique to derive the maximum value of the above correlation function.%\begin{thm}\label{thm1} If Alice communicates a qubit $\rho_x$ and Bob performs a binary outcome measurement $\{\Pi^b_{4,y}\}$, the maximum success probability $(\mathcal{P}^{4\rightarrow 1}_Q)^{opt}=\frac{1}{2}+\frac{1}{2\sqrt{2}}$, is attained. 
%\end{thm}
%\begin{proof}
In order to find the maximum value of $\mathscr{I}^{4\rightarrow 1}$, we write the correlation function in Eq.~(\ref{p142}) as
\begin{eqnarray}
\mathscr{I}^{4\rightarrow 1}&=&\Tr[\sum\limits_{y=1}^{4}\omega_y\mathscr{M}_{4,y} B_{4,y} ]
\end{eqnarray}
where $\mathscr{M}_{4,y}$ and $\omega_y$'s are defined as following
\begin{widetext}
    {\small \begin{eqnarray}\label{Mop}
     &&\mathscr{M}_{4,1} = \frac{M^1_{4,1}+M^1_{4,2}+M^1_{4,3}+ M^1_{4,4}}{\omega_1}, \mathscr{M}_{4,2} = \frac{M^1_{4,1}-M^1_{4,2}+M^1_{4,3}- M^1_{4,4}}{\omega_2}, \mathscr{M}_{4,3} = \frac{M^2_{4,1}+M^2_{4,2}+ M^2_{4,3}+M^2_{4,4}}{\omega_3}, \mathscr{M}_{4,4} = \frac{ M^2_{4,1}-M^2_{4,2}- M^2_{4,3}+M^2_{4,4}}{\omega_4}\quad\\ 
     &&\omega_1=||M^1_{4,1}+M^1_{4,2}+M^1_{4,3}+M^1_{4,4}||,\omega_2=||M^1_{4,1}-M^1_{4,2}+M^1_{4,3}-M^1_{4,4}||,\omega_3=||M^2_{4,1}+M^2_{4,2}+M^2_{4,3}+M^2_{4,4}||,\omega_4=||M^2_{4,1}-M^2_{4,2}-M^2_{4,3}+M^2_{4,4}||
     \label{omegai42aap}
\end{eqnarray}}
\end{widetext}

with $||\mathscr{M}_{4,y}||=1$, where $||\cdot||$ denotes the scaled Frobenius norm defined by $||\mathcal{O}||=\frac{1}{2}\sqrt{\Tr[\mathcal{O}^{\dagger}\mathcal{O}] }$.

Notably, since both $\mathscr{M}_{4,y}$ and $B_{4,y}$ are dichotomic, normalized observables, the maximum value of $\mathscr{M}_{4,y} B_{4,y}$ reaches only when $\mathscr{M}_{4,y}=B_{4,y}$, which in turn leads to
\begin{equation}\label{optbn42}
\mathscr{I}^{4\rightarrow 1}_{opt}= \Tr[\max\left(\sum\limits_{y=1}^{4}\omega_{y} \right)\openone_4]=4\max\left(\sum\limits_{y=1}^{4}\omega_{y} \right)
\end{equation}
where $\mathscr{I}_{opt}^{4\rightarrow 1}$ denotes the optimal value of the correlation function for the qubit system.

Now, applying the convex inequality $\sum_{y=1}^{n}\omega_{y}\leq \sqrt{n \sum_{y=1}^{n} (\omega_{y})^{2}}$, where equality is achieved only if $\omega_y=\omega_{y^{\prime}},\forall y\neq y^{\prime}\in [n]$, we obtain from Eq.~(\ref{optbn42}) the expressions for $\omega_y$ as
\begin{eqnarray}\label{A141}
\mathscr{I}^{4\rightarrow 1}&\leq&4\qty( \max \sqrt{4 \ \qty(\omega_1^2+\omega_2^2+\omega_3^2+\omega_4^2)} )\nonumber \\
&\leq&4\max\Big[8 \ \Big(8+Tr[M^1_{4,1} M^1_{4,3} +M^1_{4,2}M^1_{4,4}+M^2_{4,1}M^2_{4,4}\nonumber\\
&&+M^2_{4,2} M^2_{4,3}]/2\Big)\Big]^\frac{1}{2}
\end{eqnarray}
We explicitly derived the maximum value of Eq.~(\ref{A141}) in Appendix.~\ref{app:optimal} and get
\begin{eqnarray}\label{AMr}
    \Tr[M^1_{4,1} M^1_{4,3} +M^1_{4,2}M^1_{4,4}+M^2_{4,1}M^2_{4,4}+M^2_{4,2} M^2_{4,3}]=16\ 
\end{eqnarray}
which fixes the value of $\omega_y=2\sqrt{2},\forall y\in[4]$. Therefore, from Eq.~(\ref{A141}), we get $\mathscr{I}^{4\rightarrow 1}_{opt}=32\sqrt{2}$. 

Hence, the optimal quantum success probability of Eq.~(\ref{QSP42}) can be derived as
\begin{equation}\label{p41max}
(\mathcal{P}^{4\rightarrow 1}_Q)^{opt}=\dfrac{1}{2}+\frac{1}{2\sqrt{2}}\approx 0.853
\end{equation}
A detailed derivation of Eq.~(\ref{AMr}) is provided in the Appendix~\ref{app:optimal}.
%\end{proof}
Hence, $(\mathcal{P}^{4\rightarrow 1}_Q)^{opt}>\mathcal{P}_C^{4\rightarrow 2}>\mathcal{P}_C^{4\rightarrow 1}$, meaning that the optimal quantum success probability exceeds that of the classical RAC even when two classical bits are communicated. Moreover, the best-known success probability for the standard $4\rightarrow 1$ quantum PMRAC is $\mathcal{P}_{SQ}^{4\rightarrow 1} = 0.7415$~\cite{Mukherjee2021}, which is less than $(\mathcal{P}^{4\rightarrow 1}_Q)^{opt}$.

Note that the optimal quantum success probability for the $4 \to 1$ entanglement-assisted PMRAC fixes certain conditions on state and measurements which is encapsulated in the following corollary. 
\begin{corollary}
\label{coro1} The optimal quantum success probability $(\mathcal{P}^{4\rightarrow 1}_Q)^{opt}$ certifies the following statements.

\begin{enumerate}[(i)]
 \item Given two-qubit density operators $\{\rho_{x_1x_2x_3x_4}\}_{x_i\in\{0,1\}}$, the set of mutually orthogonal states satisfy
\begin{eqnarray}\label{cor1i}
    \sum_{\substack{x_3,x_4\in\{0,1\}\\ x_3\oplus x_4\in\{0,1\}}}\Big(\rho_{s\,x_3x_4}+\rho_{\bar s\,x_3x_4}\Big)=\openone_4;\ \forall s
\end{eqnarray}
 
    where $s\in\{0,1\}^2$, and $\bar s:=(s\oplus 11)$ denotes its bitwise complement.

    \item The states follow $$\ \Tr[\rho_{x}  \ \rho_{x^{\prime}}]_{\forall x,x^{\prime}\in\{0,1\}^4}=   
   \begin{cases}
    1 ;\ \ \ \ \quad \text{when}\ \  x^{\prime}=x\\
    \mathcal{O}\otimes \mathcal{O} \quad \text{when}\, x^{\prime} \neq x
    \end{cases}$$
    \text{where $\mathcal{O}=\begin{pmatrix} 1 & 1/2 & 1/2 & 0 \\ 1/2 & 1 & 0 & 1/2 \\ 1/2 & 0 & 1 & 1/2 \\ 0 & 1/2 & 1/2 & 1 \end{pmatrix}$ \  (Table~\ref{tab1}, Appendix~\ref{app:coro1})}\\ 
    \item  Bob's observables $B_{4,y}=\mathscr{M}_{4,y} \quad \forall y\in[4]$, where $\mathscr{M}_{4,y}$ are two-qubit operators defined in Eq.~(\ref{Mop}) with $\omega_y=2\sqrt{2}$, $\forall y$. 
\end{enumerate}
\end{corollary}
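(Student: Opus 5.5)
We read off each item of Corollary~\ref{coro1} from the conditions under which every inequality in the chain leading to Eq.~(\ref{p41max}) holds with equality. If $\mathcal{P}^{4\rightarrow 1}_Q=\frac{1}{2}+\frac{1}{2\sqrt2}$, then $\mathscr{I}^{4\rightarrow 1}=32\sqrt2$, so all of the following must be saturated: (a) the eigenstate condition used to pass from Eq.~(\ref{cal41}) to the operators $M^{k}_{4,y}$ in Eq.~(\ref{obserg42}); (b) the alignment $\Tr[\mathscr{M}_{4,y}B_{4,y}]\le 4$ in Eq.~(\ref{optbn42}); (c) the convexity inequality in Eq.~(\ref{A141}), which forces $\omega_1=\omega_2=\omega_3=\omega_4$; and (d) the value in Eq.~(\ref{AMr}), which gives the common value $\omega_y=2\sqrt2$.

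For item (i), we start from the structure of Eq.~(\ref{cal41}). Each bracket has the form $\rho_{s x_3x_4}-\rho_{\bar s\bar x_3\bar x_4}\pm\cdots$. For this bracket to contribute $\pm4$ against a $\pm1$-valued $B_{4,y}$, the two states with positive sign must lie in the $+1$ eigenspace of $B_{4,y}$ and the two with negative sign in the $-1$ eigenspace. Each eigenspace has dimension at most four overall, and the same quadruple pairs against two different observables ($B_{4,1}\pm B_{4,2}$, and so on). Together these force each quadruple to consist of pure, mutually orthogonal states spanning $\mathbb{C}^4$. Since $\Tr\rho_x=1$, this gives exactly Eq.~(\ref{cor1i}) for each grouping $s$. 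We would check that the four quadruples in Eq.~(\ref{obserg42}) are precisely those indexed by $s$ and $\bar s$.

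Item (iii) is immediate from (b) and (c). Equality in $\Tr[\mathscr{M}_{4,y}B_{4,y}]\le\|\mathscr M_{4,y}\|\,\|B_{4,y}\|$ (Cauchy--Schwarz in the scaled Frobenius inner product, with both operators of unit norm) holds only if $B_{4,y}=\mathscr{M}_{4,y}$. Combining this with (c) and (d) gives $\omega_y=2\sqrt2$.

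Item (ii) is the main obstacle. Here we would expand $\omega_y^2=\frac14\Tr[(\sum_k \pm M^{k}_{4,\cdot})^2]$ and, using orthogonality within each quadruple, write $\Tr[M M']$ as signed sums of overlaps $\Tr[\rho_x\rho_{x'}]$ across different quadruples. Eq.~(\ref{AMr}) then fixes these sums. The sums alone do not determine the individual overlaps. To pin them down, we would additionally use that each $\mathscr{M}_{4,y}$ must be a genuine dichotomic observable, i.e. $\mathscr{M}_{4,y}^2=\openone_4$. This imposes anticommutation-type relations $\{M^1_{4,1},M^1_{4,3}\}$-like constraints, which force the cross-quadruple overlaps to take only the values $0$, $1/4$ and $1/2$, matching the tensor structure $\mathcal O\otimes\mathcal O$ in Table~\ref{tab1}.

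As a consistency check, and to confirm that the listed values are attained, we would exhibit the optimal realization: the maximally entangled state with $U_x$ built from Pauli products rotated by $\pi/4$. Computing its Gram matrix should reproduce Table~\ref{tab1}.
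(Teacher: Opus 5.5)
Your arguments for (i) and (iii) follow the paper. For (i) you use the eigenstate/completeness reasoning after Eq.~(\ref{cal41}); for (iii) you use equality in Cauchy--Schwarz together with Eq.~(\ref{AMr}).

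One caveat you share with the paper: the per-bracket saturation you invoke for (i) does not actually occur at the optimum. With $B_{4,1}=\mathscr{M}_{4,1}$ one finds $\Tr[\rho_{0000}B_{4,1}]=1/\sqrt{2}$, so $\rho_{0000}$ is not an eigenstate of $B_{4,1}$. In both treatments, (i) really serves as the structural input that makes $(M^k_{4,y})^2=\openone_4$ and the formulas for $\omega_y$ valid.

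For (ii) your route differs from the paper's. The paper works only with scalar identities. From $\omega_1=\cdots=\omega_4$ it gets Eqs.~(\ref{m41}), (\ref{rhoc2}), (\ref{c2m}). Together with the vanishing overlaps (\ref{rho0}) and completeness, it then splits these into individually vanishing terms to read off the entries $1/2$ and $1/4$. Positivity alone does not justify that splitting, since the terms are signed. You instead use the operator condition $\mathscr{M}_{4,y}^2=\openone_4$. This is legitimate, because equality in Cauchy--Schwarz forces $B_{4,y}=\mathscr{M}_{4,y}$ and $B_{4,y}^2=\openone_4$, and it is stronger than the paper's scalar identities.

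However, you only assert that this condition fixes the overlaps, and the mechanism you point to is wrong: at the optimum $\{M^1_{4,1},M^1_{4,3}\}=2\openone_4$, not $0$. The missing chain is as follows.

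\begin{enumerate}[(a)]
\item Each of the four traces in Eq.~(\ref{AMr}) is at most $4$ by Cauchy--Schwarz, so all must equal $4$. This forces the operator identities $M^1_{4,3}=M^1_{4,1}$, $M^1_{4,4}=M^1_{4,2}$, $M^2_{4,4}=M^2_{4,1}$, $M^2_{4,3}=M^2_{4,2}$, which are equivalent to Eq.~(\ref{rho0}). So Eq.~(\ref{AMr}) yields more than fixed sums.
\item Hence $\mathscr{M}_{4,1},\mathscr{M}_{4,2}\propto M^1_{4,1}\pm M^1_{4,2}$ and $\mathscr{M}_{4,3},\mathscr{M}_{4,4}\propto M^2_{4,1}\pm M^2_{4,2}$. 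Dichotomicity together with $\omega_1=\omega_2$ and $\omega_3=\omega_4$ then gives $\{M^1_{4,1},M^1_{4,2}\}=\{M^2_{4,1},M^2_{4,2}\}=0$, and $\omega_y=2\sqrt{2}$.
\item The four quadruples are the joint eigenbases of the commuting pairs $(M^1_{4,1},M^2_{4,1})$, $(M^1_{4,1},M^2_{4,2})$, $(M^1_{4,2},M^2_{4,2})$, $(M^1_{4,2},M^2_{4,1})$. So, up to a unitary, these four observables are $\sigma_z\otimes\openone_2$, $\sigma_x\otimes\openone_2$, $\openone_2\otimes\sigma_z$, $\openone_2\otimes\sigma_x$ in a virtual two-qubit split.
\item In that split, $\rho_x$ is the projector onto $|s_{x_1x_2}\rangle\otimes|s_{x_3x_4}\rangle$, with $(s_{00},s_{01},s_{10},s_{11})=(|0\rangle,|+\rangle,|-\rangle,|1\rangle)$. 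Since $\mathcal{O}$ is exactly the squared-overlap matrix of these four states, Table~\ref{tab1} is $\mathcal{O}\otimes\mathcal{O}$.
\end{enumerate}

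With these steps your route proves (ii) and explains its tensor form, which the paper obtains only entry by entry. Without them (ii) is not proved. Computing the Gram matrix of an explicit realization only shows the values are attainable, not that they are certified.
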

\begin{proof}
    Note that statement (i) has already been proved immediately after Eq. (\ref{cal41}). A detailed proof of statements (ii) and (iii) is placed in Appendix~\ref {app:coro1}.

\end{proof}
In particular, Corollary \ref{coro1} establishes the relationships among the states $\rho_{x}$ and shows that Bob’s observables can be completely specified with respect to these states.

 We now use these findings to demonstrate that the initially shared state must be maximally entangled.

     \begin{proposition}\label{prop1}
For optimal quantum success probability $(\mathcal{P}^{4\rightarrow 1}_Q)^{opt}$ the initial shared state between Alice and Bob is maximally entangled of the form
    \begin{eqnarray}
\label{entan}
    \rho=\frac{1}{4}\left(\openone_4-Q_1\otimes Q_2-P_1\otimes P_2-R_1\otimes R_2\right)
\end{eqnarray}
where $\{P_1, Q_1, R_1\}$ and $\{P_2, Q_2,R_2\}$ are the sets of mutually anti-commuting qubit operators. 
\end{proposition}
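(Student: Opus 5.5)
The plan is to turn the structural facts of Corollary~\ref{coro1} into a statement about the reduced states of $\rho$, and then into its correlation matrix.

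\textbf{Step 1: purity and reduced states.} Corollary~\ref{coro1}(i) says that, at optimality, each group of four states appearing in a term of Eq.~(\ref{cal41}) consists of mutually orthogonal pure states summing to $\openone_4$. In particular $\rho_{0000}=(U^\dagger_{0000}\otimes\openone)\rho(U_{0000}\otimes\openone)$ is pure. Purity is invariant under local unitaries, so $\rho=|\psi\rangle\langle\psi|$ is pure.

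Next, use $\rho_{0000}+\rho_{0011}+\rho_{1100}+\rho_{1111}=\openone_4$ and take the partial trace over Alice's qubit. Each $\rho_x$ has the same reduced state on Bob's original qubit, namely $\rho_B=\Tr_A\rho$, because Alice's unitary does not act there. Hence $4\rho_B=2\,\openone_2$, i.e. $\rho_B=\openone_2/2$.

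For a pure two-qubit state this forces Schmidt coefficients $1/\sqrt2$, so $|\psi\rangle$ is maximally entangled. This already gives the main claim.

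\textbf{Step 2: Pauli form.} Write $\rho=\frac14(\openone_4+\sum_{ij}T_{ij}\,\sigma_i\otimes\sigma_j)$. The local Bloch vectors vanish because both marginals are $\openone_2/2$; for $\rho_A$ this follows from Schmidt symmetry. Purity then requires $\Tr\rho^2=\frac14(1+\sum_{ij}T_{ij}^2)=1$, together with the positivity constraints on the correlation matrix.

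For a maximally entangled pure state, $T=-O$ with $O$ orthogonal and $\det O=1$, since every such state is locally unitarily equivalent to the singlet, whose correlation matrix is $-\openone$. Therefore, writing
\begin{equation*}
Q_2=\sum_j O_{1j}\sigma_j,\qquad P_2=\sum_j O_{2j}\sigma_j,\qquad R_2=\sum_j O_{3j}\sigma_j,
\end{equation*}
and taking $(Q_1,P_1,R_1)=(\sigma_1,\sigma_2,\sigma_3)$, we obtain exactly Eq.~(\ref{entan}). Rows of an orthogonal matrix give Pauli-type operators that mutually anticommute, so both triples have the required property.

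\textbf{Step 3: consistency with the correlation pattern (main obstacle).} The hardest part is checking that the family $\rho_x=(U_x^\dagger\otimes\openone)\rho(U_x\otimes\openone)$ can realize the overlap pattern $\mathcal{O}\otimes\mathcal{O}$ of Corollary~\ref{coro1}(ii). This is what shows the optimum is genuinely reached within the assumed framework, rather than only that the state has the stated form.

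Because all $\rho_x$ are local-unitary images of a maximally entangled state, each has the form $\frac14(\openone-\sum_k A^{(x)}_k\otimes C_k)$. Here $A^{(x)}_k=U_x^\dagger Q_1U_x$ and the analogous rotations of $P_1,R_1$, and the $C_k$ are fixed. The overlaps then become
\begin{equation*}
\Tr[\rho_x\rho_{x'}]=\tfrac14\big(1+\textstyle\sum_k\tfrac12\Tr[A^{(x)}_kA^{(x')}_k]\ldots\big),
\end{equation*}
which is fixed by the rotations $R_xR_{x'}^{T}\in SO(3)$.

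I would first try the explicit choice in which the $U_x$ are products of Paulis and Hadamard/phase-type Clifford rotations. I would then verify against the overlap values $1,\tfrac12,0$ listed in Table~\ref{tab1}, which I take as given from Appendix~\ref{app:coro1}.

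If that verification is messy, Step 3 can be dropped for the proposition as stated. Steps 1–2 rely only on Corollary~\ref{coro1}(i), and they suffice for the claim.
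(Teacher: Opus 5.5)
Your argument is correct and follows the paper's skeleton: Corollary~\ref{coro1}(i) $\Rightarrow$ $\rho$ is pure and maximally entangled $\Rightarrow$ Pauli form with anticommuting triples. You carry out both key steps differently, though.

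For maximal entanglement, the paper argues only by example: starting from $a\ket{00}+b\ket{11}$, one-sided unitaries yield an orthonormal basis only if $|a|=|b|$. Your partial trace of $\rho_{0000}+\rho_{0011}+\rho_{1100}+\rho_{1111}=\openone_4$, giving $4\rho_B=2\openone_2$, is the general and rigorous version of that remark. It uses only the fact that Alice's operation cannot change Bob's marginal.

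For the form (\ref{entan}), Appendix~\ref{app:maxent} assumes vanishing local Bloch vectors and imposes $\rho^2=\rho$ on $\frac14(\openone_4+\sum_i c_i\,\sigma_i\otimes A_i)$. It then derives $c_1A_1=c_2c_3\,i[A_2,A_3]/2$ and its cyclic analogues, concludes $c_i=\pm1$ with $A_1,A_2,A_3$ mutually anticommuting, and fixes $c_i=-1$ by convention. You instead quote local-unitary equivalence to the singlet, so $T=-O$ with $O\in SO(3)$, and read the anticommuting triple off the orthonormal rows of $O$.

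Your route is shorter and cleaner but depends on that standard lemma. The paper's route is self-contained algebra from purity, though its sign-fixing and ``global unitary'' relabeling steps are looser.

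As you say, Step~3 is unnecessary for the proposition. If you pursue it, the completed formula is $\Tr[\rho_x\rho_{x'}]=\frac14\bigl(1+\Tr[R_xR_{x'}^{T}]\bigr)$. Note also that Table~\ref{tab1} contains overlaps of $\frac14$ (rotations with $\Tr R=0$), not only $1,\frac12,0$.
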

\begin{proof}

       The optimal quantum value requires each set of $\{\rho_{x_1x_2x_3x_4}\}$  to be mutually orthogonal pure states and form a complete basis, as explicitly proved in statement Eq.~(\ref{cor1i}). Since, only Alice can apply the quantum operation on the shared state $\rho$, to produce four mutually orthogonal two-qubit states, the state $\rho$ has to be maximally entangled. For example, starting with $\ket{\psi}=a\ket{00} + b\ket{11}$ and applying local unitary operations on one side of the state, a basis of four orthogonal states can be produced only if $|a| = |b|$ holds. This proves the maximally entangled two-qubit state can always be written in the form in Eq. (\ref{entan}). The detailed proof is given in the appendix \ref{app:maxent}.
       %For example, starting with $\ket{\psi}=a\ket{00} + b\ket{11}$ and applying local unitary operations on one side of the state, a basis of four orthogonal states can be produced only if $|a| = |b|$ holds. This proves that a maximally entangled two-qubit state can always be written in the form in Eq. (\ref{entan}). 
     
     \end{proof}

     In the preceding subsection, we use the Corollary \ref{coro1} and proposition \ref{prop1} to establish the fact that the maximally entangled state of the form Eq. (\ref{entan}) has the power to self-test Alice's unitary operations when the optimal success probability is achieved. This is captured through the following Theorem \ref{thm2}.
 
\begin{thm}\label{thm2}
 The optimal quantum success probability $(\mathcal{P}^{4\rightarrow 1}_Q)^{opt}$ certifies
 \begin{enumerate}
     \item Alice's unitary operators are $U_{0000}=\openone_2$,  $U_{1100}=i\ Q_1$, $U_{0011}=-i \ P_1$ and $U_{1111}=-i \ R_1$ are mutually anti-commuting.
     \item There exists grand unitaries $U^1_{G}= \frac{-\openone_2+U_{0011}}{\sqrt{2}}$, $U^2_{G}= \frac{\openone_2+U_{1100}}{\sqrt{2}}$ and $U^3_G= \frac{\openone_2+U_{1100}-U_{0011}+U_{1111}}{2}$ which gives 
    {\small \begin{eqnarray}\label{ur41}
    &&(U^1_G)^\dagger\{\rho_{0000},\rho_{1111},\rho_{0011},\rho_{1100}\}U^1_G \rightarrow \{\rho_{0001},\rho_{1110},\rho_{0010},\rho_{1101}\}\quad \nonumber \\
    &&(U^2_G)^\dagger\{\rho_{0000},\rho_{1111},\rho_{0011},\rho_{1100}\}U^2_G \rightarrow \{\rho_{0100},\rho_{0111},\rho_{1011},\rho_{1000}\}  \\ \label{unitary}
    &&(U^3_G)^\dagger\{\rho_{0000},\rho_{1111},\rho_{0011},\rho_{1100}\}U^3_G \rightarrow \{\rho_{0101},\rho_{0110}, \rho_{1010}, \rho_{1001}\}\nonumber
\end{eqnarray}}
 \end{enumerate} 
\end{thm}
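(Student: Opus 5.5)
We will combine Proposition~\ref{prop1} with Corollary~\ref{coro1}. By Proposition~\ref{prop1}, at the optimum the shared state is maximally entangled, $\rho=\frac14(\openone_4-Q_1\otimes Q_2-P_1\otimes P_2-R_1\otimes R_2)$, i.e.\ a singlet-type state written in the local frames $\{Q_1,P_1,R_1\}$ and $\{Q_2,P_2,R_2\}$. Without loss of generality we fix $U_{0000}=\openone_2$, absorbing any prior unitary into the definition of $\rho$. Then $\rho_{0000}=\rho$.

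\textbf{Step 1: the anticommuting unitaries.} Statement (i) of Corollary~\ref{coro1} says that $\{\rho_{0000},\rho_{1100},\rho_{0011},\rho_{1111}\}$ are mutually orthogonal pure states summing to $\openone_4$. So the local unitaries $U_{1100},U_{0011},U_{1111}$ must map the maximally entangled state onto the three orthogonal Bell-type states. For a maximally entangled state $\ket{\psi}$, the condition
\[
\bra{\psi}(U\otimes\openone)\ket{\psi}=\tfrac12\Tr U=0
\]
forces each $U_x$ (up to phase) to be traceless. Pairwise orthogonality additionally forces $\Tr(U_x^\dagger U_{x'})=0$. A set of three unitaries with these properties, together with $\openone$, is an orthonormal operator basis of $\mathbb{C}^{2\times2}$. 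Traceless unitaries on a qubit are, up to phase, of the form $\hat n\cdot\vec\sigma$. The trace-orthogonality of the products gives $\hat n_i\cdot\hat n_j=0$, so the three operators mutually anticommute. Matching the resulting states to the three non-identity terms of $\rho$ fixes them as $iQ_1$, $-iP_1$, $-iR_1$. The phases are chosen so that $\Tr[\rho_x\rho_{x'}]$ reproduces the overlaps in statement (ii) of Corollary~\ref{coro1}, together with the sign pattern in $M^1_{4,1}$ and $M^2_{4,1}$ that Bob's observables $B_{4,y}=\mathscr{M}_{4,y}$ require.

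\textbf{Step 2: the grand unitaries.} Each remaining block of four states, such as $\{\rho_{0001},\rho_{1110},\rho_{0010},\rho_{1101}\}$, is also an orthonormal basis by Corollary~\ref{coro1}(i). By statement (ii), its overlaps with the first block all equal $1/2$ or $0$ according to the matrix $\mathcal{O}$. We will verify that conjugating by $U^1_G=(-\openone+U_{0011})/\sqrt2$ produces exactly these overlaps. Because $U_{0011}$ squares to $-\openone$ and anticommutes with the other two operators, $U^1_G$ is unitary. Moreover, $|\Tr(U_x^\dagger U^1_G)|^2/4$ takes the values $1/2$ or $0$ in the required pattern. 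The same check applies to $U^2_G$. For $U^3_G=(\openone+U_{1100}-U_{0011}+U_{1111})/2$, unitarity again follows from the anticommutation relations, and all overlaps with the first block equal $1/4$ in modulus squared. That is the uniform pattern needed for the block $\{\rho_{0101},\ldots\}$, which is jointly unbiased with respect to both $M^1$ directions. Since the maximally entangled state plus a complete set of overlaps determines the local unitary up to phase, these grand unitaries are certified, giving Eq.~(\ref{ur41}). The assignment of the labels within each block follows from requiring $\omega_y=2\sqrt2$, i.e.\ from the trace condition in Eq.~(\ref{AMr}).

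\textbf{Main obstacle.} The delicate step is the label and phase bookkeeping: proving that the overlap data of Corollary~\ref{coro1}(ii), together with Bob's optimal observables, pins down the specific labelling of states within each block, and not merely the block structure. We would handle this by explicitly computing each $\mathscr{M}_{4,y}$ in the Bell basis and demanding $\mathscr{M}_{4,y}^2=\openone_4$. This should determine the signs uniquely up to an overall local unitary, which is the freedom already present in the choice of $\{Q_1,P_1,R_1\}$.
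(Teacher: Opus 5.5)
Your Step~1 is sound, and it is more intrinsic than the paper's version. The identity $\langle\psi|(U\otimes\openone)|\psi\rangle=\tfrac12\Tr U$ turns orthogonality of the first block into Hilbert--Schmidt orthogonality of $\{\openone,U_{1100},U_{0011},U_{1111}\}$, and hence into anticommuting Pauli-type operators along orthogonal axes. The paper instead simply posits $U^\dagger Q_1U=Q_1$, $U^\dagger P_1U=-P_1$, $U^\dagger R_1U=-R_1$ from its ansatz (\ref{s1}). One small point: the phases of the $U_x$ never enter $\rho_x$, so they cannot be ``chosen'' to reproduce overlaps; they are pure convention.

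\textbf{The gap is in Step~2.} Checking that the $U^k_G$ reproduce Table~\ref{tab1} shows only that the paper's configuration is \emph{an} optimal one. Certification requires that \emph{every} optimal realization has this form, and the uniqueness principle you invoke for that is false. Overlaps with the first block fix only the moduli $|\tfrac12\Tr(UV^\dagger)|$ for $V\in\{\openone,iQ_1,-iP_1,-iR_1\}$, and these do not distinguish the candidates:
\begin{itemize}
\item the mutually orthogonal unitaries $(\openone\pm iP_1)/\sqrt2$, which generate $\rho_{0001}$ and $\rho_{0010}$, both give $(\tfrac12,\tfrac12,0,0)$;
\item $\rho_{1110}$ and $\rho_{1101}$ have identical first-block overlaps;
\item all eight $\tfrac12(\openone+i(\pm Q_1\pm P_1\pm R_1))$ give $\tfrac14$.
\end{itemize}
Applied to the whole table, the principle is circular, because blocks 2--4 are unknown simultaneously. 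For the same reason Eq.~(\ref{AMr}) cannot fix the labels inside a block.

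What is missing is the rigidity step the paper carries out. It converts Table~\ref{tab1} into correlator identities between the stabilizer triples of different blocks ($\Tr[ve]=4$, $\Tr[ux]=4$, $\Tr[dp]=-4$, \dots, all other cross terms zero). It then uses commutation and the locality of Alice's action to get $e=P_1\otimes P_2$, $d=\pm R_1\otimes Q_2$, $f=\pm Q_1\otimes R_2$, etc. Finally it solves $U^\dagger Q_1U=-R_1,\dots$ for $U=p_1\openone+q_1Q_1+r_1P_1+s_1R_1$. The block-4 cross overlaps (e.g.\ $\Tr[\rho_{0101}\rho_{1110}]=0$ versus $\Tr[\rho_{0101}\rho_{1101}]=\tfrac12$) are what tie the labellings of blocks 2 and 3 to each other.

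Your plan to fix the remaining signs via $\mathscr{M}_{4,y}^2=\openone_4$ ``up to an overall local unitary'' also cannot work. For the singlet (the general case differs by a fixed unitary on Bob's original qubit) and $U\in SU(2)$,
\[
(U\otimes\openone)|\psi^-\rangle=-\mathrm{SWAP}\,(U^\dagger\otimes\openone)|\psi^-\rangle .
\]
So replacing every $U_x$ by $U_x^\dagger$ merely swaps Bob's two qubits. All overlaps are preserved, the first block (a Bell basis) is unchanged, and the value stays optimal with $B_{4,y}\mapsto\mathrm{SWAP}\,B_{4,y}\,\mathrm{SWAP}$.

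In this realization, the unitary taking $\rho_{0000}$ to $\rho_{0001}$ (unique up to phase) sends $\rho_{1111}$ to $\rho_{1101}$, not $\rho_{1110}$. In the paper's notation this is $d=-R_1\otimes Q_2$, $f=-Q_1\otimes R_2$, which Table~\ref{tab1} and the commutation relations allow just as well. The SWAP is not a local unitary, so no sign bookkeeping removes this case.

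Hence the element-by-element correspondence in Eq.~(\ref{ur41}) is certified only up to this SWAP (complex-conjugation) ambiguity; set-wise it does hold. The paper's derivation fixes this sign tacitly. A complete proof must either state the result set-wise or adopt that orientation explicitly as a convention.
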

\begin{proof}
  The proof is carried out in two steps. The first step relies on two key elements: the mutually orthogonality condition satisfied by the states as stated in Eq.~(\ref{cor1i}) along with the general form of the maximally entangled state given in Eq. (\ref{entan}). Together, these two results allow for the construction of the states. In the second step, we use the relation between qubit operators of Alice's subsystem to construct a unitary for transformation within the same basis, as well as three grand unitaries for transformation within different bases. The detailed proof is presented in
Appendix \ref{app:unitary}.
\end{proof}
This establishes a one-to-one correspondence between the optimal quantum success probability of our $\mathcal{P}^{4\rightarrow 1}_Q$ entanglement-assisted PMRAC game and certification of the unitary encoding operation performed by Alice.

\emph{An example of state and measurements:} Consider the shared state $\rho_{0000}=\frac{1}{4}\left(\openone_2\otimes\openone_2 - \sigma_x\otimes\sigma_x - \sigma_y\otimes\sigma_y - \sigma_z\otimes\sigma_z\right)$. A transformation in the same basis, defined as $(U_x^\dagger\otimes \openone_2)\rho_{0000}(U_x \otimes \openone_2)\rightarrow \rho_x,\forall x\in\{0011,1100,1111\}$, gives 
$\rho_{0011}=(i\sigma_y\otimes\openone_2)\rho_{0000}(-i\sigma_y\otimes\openone_2),\rho_{1100}=(-i\sigma_x\otimes\openone_2)\rho_{0000}(i\sigma_x\otimes\openone_2),\rho_{1111}=(i\sigma_z\otimes\openone_2)\rho_{0000}(-i\sigma_z\otimes\openone_2).$

To generate the remaining states in the different bases, we apply unitary operations,  such that Eq.~(\ref{ur41}) is satisfied, where $U^1_G=-\frac{\openone_2+ i \ \sigma_y}{\sqrt{2}}$, $U^2_G=\frac{\openone_2+ i \ \sigma_x}{\sqrt{2}}$, and $U^3_G=\frac{\openone_2 +i\ (\sigma_x+\sigma_y-\sigma_z)}{2}$. 

Bob’s measurement operators are of the form
\begin{eqnarray}
    &&B_1=\frac{\sigma_y\otimes (\sigma_z-\sigma_y)}{\sqrt{2}},\quad
    B_2=-\frac{\sigma_y\otimes (\sigma_z+\sigma_y)}{\sqrt{2}},\\
    &&
    \nonumber
    B_3=\frac{(\sigma_z-\sigma_x)\otimes \sigma_x}{\sqrt{2}},\quad
    B_4=-\frac{(\sigma_z+\sigma_x)\otimes \sigma_x}{\sqrt{2}}.
\end{eqnarray}

%%%%%%%%%%%%%%%%%%%%%%%%%%%%%%%%%%%%%%%%%%%%%%%%%%%%%%%%%%%%%%%%%%%%%%%%%%%%%%%%%%%%%%%%%%%%%%%%%%%%%%%%%%%%%%%%%%%%%%%%%%%%%%%%%%%%%%%%%%%%%%%%%%%%%%%%%%%%%%%%%%%%%%%%%%%%%%%%%%
\subsection{\texorpdfstring{$4 \rightarrow 2$}{4 to 2} entanglement-assisted PMRAC}\label{4to2main}
Similar to $4 \to 1$ scenario, Alice and Bob employ same encoding and decoding strategy for same set of input choices with the modification that for $4 \rightarrow 2$ case the shared state is a three-qubit entangled state $\rho\in\mathcal{L}( \mathbb{C}^4\otimes \mathbb{C}^2)$. The quantum success probability for this case is derived as

\begin{eqnarray}
	\label{QSP43}
	\mathcal{P}^{4\rightarrow 2}_{Q}&=&\dfrac{1}{2}+\dfrac{\mathscr{I}^{4\rightarrow 2}}{128}
\end{eqnarray}
 
where $\mathscr{I}^{4\rightarrow 2}$ denotes the correlation function of the form
\begin{eqnarray}\label{p1}
\mathscr{I}^{4\rightarrow2}&=&Tr[(N^1_{4,1}+N^1_{4,2}) B_{4,1}+ (N^2_{4,1}+N^2_{4,2}) B_{4,2}\nonumber \\
&&+ (N^3_{4,1}+N^3_{4,2}) B_{4,3}+( N^4_{4,1}+N^4_{4,2}) B_{4,4}]
\end{eqnarray}

 $N^y_{4,1}$ and $N^y_{4,2}$ where $y\in [4]$ are defined in Appendix.~\ref{optm42} along with the detailed derivation of quantum success probability. 

%\begin{thm} If Alice and Bob share an entangled state $\rho\in\mathcal{L} (\mathbb{C}^4\otimes \mathbb{C}^2)$, and Alice communicates two qubits to Bob. Consequently, Bob performs a two-outcome measurement $\{\Pi^{b}_{4,y}\}$, then the optimal quantum success probability $(\mathcal{P}^{4\rightarrow 2}_Q)^{\mathrm{opt}}=\frac{1}{2}+\frac{\sqrt{3}}{4}$ is achieved.
%\end{thm}
%\begin{proof}
We again employ the same optimization technique to derive the maximum value of $\mathscr{I}^{4\rightarrow 2}$. Then, the correlation function in Eq.~(\ref{p1}) can be expressed as
\begin{eqnarray}
\mathscr{I}^{4\rightarrow 2}&=&\Tr[\sum\limits_{y=1}^{4}\mu_y\mathscr{N}_{4,y} B_{4,y} ]\label{wab43}
\end{eqnarray}
where $\mathscr{N}_{4,y}$ and the $\mu_y$'s are defined as follows
\begin{eqnarray}
     &&\mathscr{N}_{4,y} = \frac{N^y_{4,1}+N^y_{4,2}}{\mu_y} \ ; \ \ 
    \mu_y=||N^y_{4,1}+N^y_{4,2}||  \ , \forall y\in [4] \label{omegai43}
\end{eqnarray}
 such that $||\mathscr{N}_{4,y}||=1$ and $||\cdot||$ is the scaled Frobenious norm, given by $||\mathcal{O}||=\frac{1}{2\sqrt{2}}\sqrt{Tr[\mathcal{O}^{\dagger}\mathcal{O}]}$.
  
 The maximum value of $\mathscr{N}_{4,y} B_{4,y}$ occurs only when $\mathscr{N}_{4,y}=B_{4,y}$, which gives the optimal value as
\begin{equation}\label{optbn43}
\mathscr{I}^{4\rightarrow 2}_{opt}= \Tr[\max\left(\sum\limits_{y=1}^{4}\mu_{y} \right)\openone_8]=8\max\left(\sum\limits_{y=1}^{4}\mu_{y} \right)
\end{equation}

Now, by evaluating the $\mu_y$'s from Eqs.~(\ref{optbn43}) and applying the convex inequality $\sum\limits_{y=1}^{n}\mu_{y}\leq \sqrt{n \sum\limits_{y=1}^{n} (\mu_{y})^{2}}$, we obtain
\begin{eqnarray}\label{A143}
\mathscr{I}^{4\rightarrow 2}&\leq&8\qty( \max \sqrt{4 \ \qty(\mu_1^2+\mu_2^2+\mu_3^2+\mu_4^2)} )\nonumber \\
&\leq&8\max\Bigg(4 \ \Big[8+Tr[N^1_{4,1} N^1_{4,2}+N^2_{4,1}N^2_{4,2}+N^3_{4,1}N^3_{4,2}\nonumber\\
&&+N^4_{4,1} N^4_{4,2}]/4\Big]^\frac{1}{2}\Bigg)
\end{eqnarray}

Then, the maximum value attained by $\mathscr{I}^{4\rightarrow 2}$ is 
\begin{equation}
    \mathscr{I}^{4\rightarrow 2}_{opt}=32\sqrt{3} \label{optbn143}
\end{equation} 
when $ \Tr[N^1_{4,1} N^1_{4,2}+N^2_{4,1}N^2_{4,2}+N^3_{4,1}N^3_{4,2}+N^4_{4,1} N^4_{4,2}]=16$.

Hence, the quantum success probability given in Eq.~(\ref{QSP43}) can be expressed as
\begin{equation}\label{p4max}
(\mathcal{P}^{4\rightarrow 2}_Q)^{opt}=\dfrac{1}{2}+\frac{\sqrt{3}}{4}\approx 0.933
\end{equation}
%\end{proof}

The detailed quantum success probability is derived in detail in Appendix \ref{optm42}. Note that \((\mathcal{P}^{4\rightarrow 2}_Q)^{opt} > \mathcal{P}_C^{4\rightarrow 3} > \mathcal{P}_C^{4\rightarrow 2}\); implying that the optimal quantum success probability surpasses the classical RAC, even when three classical bits are communicated.

However, in this case, it is particularly challenging to provide analytical proof of both the existence and the verification of a grand unitary. In Appendix~\ref{app:42u}, we examine a particular example of a shared three-qubit state of the form $\rho\in\mathcal{L} (\mathbb{C}^4 \otimes \mathbb{C}^2)$ for which we explicitly construct two different orthonormal bases such that the maximal quantum success probability is achieved. However, no grand unitary transformation exists that converts one basis into the other.

\begin{proposition}
\label{prop2}
To achieve the optimal quantum success probability \((\mathcal{P}^{4\rightarrow 2}_Q)^{opt}\)  the initial state must be a GHZ class of maximally entangled state.
\end{proposition}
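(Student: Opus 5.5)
To prove Proposition~\ref{prop2}, I would follow the same route as Proposition~\ref{prop1}: first extract the structural constraints that optimality forces on the eight-dimensional states $\rho_x$, then ask which shared three-qubit states $\rho\in\mathcal{L}(\mathbb{C}^4\otimes\mathbb{C}^2)$ can produce them when only Alice's two qubits are acted on.

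\textbf{Step 1: purity and orthogonality.} The bound $\mathscr{I}^{4\rightarrow 2}_{opt}=32\sqrt{3}$ is saturated only if $\mu_y=2\sqrt{3}$ for every $y$, $B_{4,y}=\mathscr{N}_{4,y}$, and $\Tr[\sum_y N^y_{4,1}N^y_{4,2}]=16$. Each $N^y_{4,j}$ is a signed sum of eight states $\rho_x$. As in the argument after Eq.~(\ref{cal41}), saturating $\|N^y_{4,j}\|$ forces the states appearing in one such combination to be mutually orthogonal pure states on $\mathbb{C}^8$, summing to $\openone_8$. So the sixteen $\rho_x$ split into two orthonormal bases of $\mathbb{C}^8$.

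\textbf{Step 2: the shared state must be maximally entangled across the cut.} Every $\rho_x=(\Lambda_x\otimes\openone_2)(\rho)$ has the same reduced state $\rho_B$ on Bob's qubit. Summing eight orthogonal pure states that resolve the identity gives
\begin{equation}
\rho_B=\tfrac{1}{8}\Tr_A\openone_8=\tfrac{1}{2}\openone_2 .
\end{equation}
Purity of the $\rho_x$ then forces $\rho$ itself to be pure; if $\rho$ were mixed, a mixed input could only yield pure outputs through channels that destroy correlations with Bob, which contradicts the full-rank requirement. Hence the shared state is a pure three-qubit state $|\psi\rangle$ whose $A|B$ Schmidt decomposition has two equal coefficients:
\begin{equation}
|\psi\rangle=\tfrac{1}{\sqrt2}\left(|a_0\rangle|0\rangle+|a_1\rangle|1\rangle\right),
\end{equation}
with $|a_0\rangle,|a_1\rangle$ orthonormal in $\mathbb{C}^4$.

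\textbf{Step 3: identify the entanglement class.} This is the step I expect to be the main obstacle. By Step~2 the state is local-unitarily equivalent to a maximally entangled state on qubit~$B$ tensored with one qubit of Alice, with the span of $\{|a_0\rangle,|a_1\rangle\}$ free to be entangled between Alice's two qubits. There are three candidates:
\begin{enumerate}[(a)]
\item a product $|\phi\rangle\otimes|\Phi^+\rangle$ (biseparable);
\item a W-class state;
\item a GHZ-class state.
\end{enumerate}

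For (a), Alice's operations reduce effectively to encoding on one qubit plus a classical label. I would show that $\mu_y$ is then bounded by the $4\rightarrow 1$ value plus one extra bit, which cannot reach $2\sqrt3$; equivalently, the correlations $\Tr[N^y_{4,1}N^y_{4,2}]$ cannot reach $16$.

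For (b), I would use the fact that every W state has $A|B$ reduced state $\rho_B\neq\openone/2$ whenever it has full three-tangle zero with generic coefficients. More precisely, a W-class state written in the form of Step~2 requires $|a_0\rangle,|a_1\rangle$ of the form $|01\rangle+|10\rangle$ and $|00\rangle$ up to local unitaries. I would check that this forces reduced states on Alice's individual qubits incompatible with the overlap pattern imposed by $\Tr[\sum N N]=16$.

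For (c), I would compute the three-tangle of the required states and show it is nonzero, placing them in the GHZ class, with maximal tangle for the optimal configuration.

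\textbf{Step 4: achievability.} Finally, I would exhibit the GHZ state with Pauli encodings from Appendix~\ref{app:42u} as achieving $32\sqrt3$, confirming that GHZ-class maximally entangled states are both necessary and sufficient.
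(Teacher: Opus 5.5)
Steps 1 and 2 match the paper's reasoning: the optimum forces two orthonormal bases of $\mathbb{C}^8$, hence $\rho_B=\openone_2/2$ and a pure state $(V\otimes\openone_2)|\Phi^+\rangle$ with $V:\mathbb{C}^2\to\mathbb{C}^4$ an isometry. \textbf{Step 3 cannot be completed}, because the property it targets is not invariant under the operations the protocol allows.

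Replace $\rho$ by $(W\otimes\openone_2)\rho(W^\dagger\otimes\openone_2)$, with $W$ any unitary on $A_1A_2$, and replace the encodings $U_x$ by $WU_x$. Then every $\rho_x=(U_x^\dagger\otimes\openone_2)\rho(U_x\otimes\openone_2)$ is unchanged, so the success probability is the same. Any two states of the form found in Step 2 are related by such a $W$ (extend $V'V^\dagger$ to a unitary). The GHZ/W/biseparable classification, however, refers to the tripartition $A_1|A_2|B$, and unitaries acting jointly on $A_1A_2$ do not preserve it. Two explicit examples:
\begin{enumerate}[(i)]
\item A CNOT on Alice's qubits (control $A_1$, target $A_2$) maps $\tfrac{1}{\sqrt2}(|000\rangle+|111\rangle)$ to $\tfrac{1}{\sqrt2}(|0\rangle_{A_1}|0\rangle_B+|1\rangle_{A_1}|1\rangle_B)\otimes|0\rangle_{A_2}$, which is exactly your candidate (a).
\item The state $\tfrac{1}{\sqrt2}|001\rangle+\tfrac12(|010\rangle+|100\rangle)$ (order $A_1A_2B$) is genuinely tripartite entangled with vanishing three-tangle, i.e.\ W class, yet it has $\rho_B=\openone_2/2$.
\end{enumerate}
Both states therefore reach whatever value a GHZ state reaches.

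This breaks both of your planned exclusions:
\begin{enumerate}[(i)]
\item For (a), Alice's encoding is not ``one qubit plus a classical label''. Her unitaries may entangle $A_1$ with $A_2$ and reproduce the GHZ-generated family $\{\rho_x\}$ exactly. Your own remark opening Step 3, that the state is equivalent to one qubit of Alice maximally entangled with $B$, already shows that (a) does as well as GHZ.
\item For (b), the premise that W-class states cannot have a maximally mixed marginal on Bob's qubit is false.
\end{enumerate}

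So the strongest available conclusion is what Steps 1--2 already give: the shared state is pure and maximally entangled across the $A|B$ cut. If the proposition is read in that sense, you are done after Step 2 and should drop Step 3. Read literally, as a claim about the three-qubit SLOCC class, it cannot be proved, since the non-GHZ states above match any value a GHZ state attains. The paper does not close this gap either. After recording $\sum_{\oplus_y x_y=0}\rho_x=\sum_{\oplus_y x_y=1}\rho_x=\openone_8$, it simply asserts that only GHZ-class states can realize this, so there is no lemma there for you to borrow.

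Two smaller points:
\begin{enumerate}[(i)]
\item The purity claim in Step 2 holds because the encodings are unitary, not for the reason you give. With general CPTP maps, the mixed state $\tfrac12\openone_{A_1}\otimes|\Phi^+\rangle\langle\Phi^+|_{A_2B}$ also works: Alice resets $A_1$, and no correlation with Bob is destroyed.
\item In Step 4, the paper's appendix example uses Pauli-type unitaries only for the even-parity basis. The odd-parity basis is generated by numerically obtained unitaries.
\end{enumerate}
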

\begin{proof}
The proof follows the same logic as Proposition~\ref{prop1}. To optimize the winning probability, Alice performs a local unitary operation on her part of the shared initial three-qubit state. As proved in Appendix~\ref{optm42}, the optimal value is achieved if Alice's unitary operations are able to generate two sets of mutually orthogonal bases satisfying  $\sum\limits_{\hspace{-0.1cm}\substack{\bigoplus_{y=1}^4 x_y=0}}\rho_x= \sum\limits_{\hspace{-0.1cm}\substack{ \bigoplus_{y=1}^4 x_y=1}}\rho_x=\openone_8,$  $\forall x\in \{0,1\}^4$ . This is only possible if the initial state belongs to GHZ class of maximally entangled state. 
%{\color{magenta}The insights presented in Appendix~\ref{optm42} reveal an intriguing result: the shared state is, in fact, a pure state. To further clarify the nature of these shared states, we now examine Alice’s encryption protocol in detail. Through her skilled local operations, Alice generates two sets of bases, each consisting of eight distinct states. Within this framework, we explored various candidates for the shared state, focusing in particular on W-type states and states belonging to the GHZ class. Our analysis yielded a clear outcome: the encryption scheme is realized exclusively by GHZ states, while W-states do not exhibit the required behavior.}
\end{proof}
%The considerations underpin GHZ state to be a powerful resource for the $4\rightarrow 2$-bit entanglement-assisted PMRAC game to attain optimal value. 

%%%%%%%%%%%%%%
%We emphasize that, for both the $4\to1$ and $4\to2$ cases, we independently reproduced this bound via explicit numerical optimization. Hence, the value is not merely a theoretically derived upper limit but is, in fact, a tight one. In the following, we present a table comparing the success probability of our variant of PMRAC with that of existing RAC tasks.

%%%%%%%%%%%%%%%
\section{\texorpdfstring{$5$-bit entanglement-assisted PMRAC}{5-bit entanglement-assisted PMRAC}}\label{5PMRAC}

Similarly to $4 \rightarrow l$ entanglement-assisted PMRAC, here we explicitly demonstrate $5 \rightarrow l$ entanglement-assisted PMRAC with $l=1,2$ and $3$. The average classical success probability for one, two, and three bits communications are calculated as $\mathcal{P}^{5\rightarrow 1}_C=0.69$, $\mathcal{P}^{5\rightarrow 2}_C=0.83$, and $\mathcal{P}^{5\rightarrow 3}_C=0.85$, respectively.

\subsection{\texorpdfstring{$5 \to 1$}{5 to 1} entanglement-assisted PMRAC}
	
Analogous to the $4 \rightarrow 1$ scenario, Alice and Bob share a two-qubit maximally entangled state $\rho\in\mathcal{L}(\mathbb{C}^2\otimes \mathbb{C}^2)$. Alice applies a local unitary $U_x$, chosen uniformly at random with $x \in \{0,1\}^5$, and then sends one qubit to Bob. For each bit of $x$ that he intends to retrieve, Bob performs a binary measurement $\{\Pi^b_{5,y}\}$ with $\sum_b \Pi^b_{5,y} = \openone_4,\forall y \in[5], b \in \{0,1\}$. The resulting quantum success probability is then given by
\begin{eqnarray}
	\label{QSP51}
	\mathcal{P}^{5\rightarrow 1}_{Q}&=&\dfrac{1}{2}+\dfrac{\mathscr{I}^{5\rightarrow 1}}{320}
\end{eqnarray}
where $\mathscr{I}^{5\rightarrow 1}$ is the correlation function of the form
\begin{eqnarray}\label{p152}
\mathscr{I}^{5\rightarrow 1}&&=\Tr\Big[
(M^1_{5,1}+M^1_{5,3}+M^1_{5,5}+M^1_{5,7}+M^1_{5,2}+M^1_{5,4}\nonumber\\
&&+M^1_{5,6}
+M^1_{5,8})B_{5,1}+(M^1_{5,1}+M^1_{5,3}-M^1_{5,5}-M^1_{5,7}+M^1_{5,2}\nonumber\\
&&+M^1_{5,4}-M^1_{5,6}-M^1_{5,8})B_{5,2}+(M^1_{5,1}-M^1_{5,3}+M^1_{5,5}-M^1_{5,7}\nonumber\\
&&+M^1_{5,2}-M^1_{5,4}+M^1_{5,6}-M^1_{5,8})B_{5,3}+(M^2_{5,1}+M^2_{5,3}+M^2_{5,5}\nonumber\\
&&+M^2_{5,7}+M^2_{5,2}+M^2_{5,4}+M^2_{5,6}+M^2_{5,8})B_{5,4}+(M^3_{5,1}+M^3_{5,3}\nonumber\\
&&+M^3_{5,5}+M^3_{5,7}+M^3_{5,2}+M^3_{5,4}+M^3_{5,6}+M^3_{5,8})B_{5,5}\Big]
\end{eqnarray}

where $M^y_{5,i},\forall y\in[3], i\in[8]$ are defined in Appendix.~\ref{optm51}. Employing the optimization method adopted in the preceding two sections, the optimal value of the correlation function is obtained as
\begin{equation}\label{A151}
\mathscr{I}^{5\rightarrow 1}\leq4\qty( \max \sqrt{5 \ \qty(\sum\limits_{y=1}^{5}\omega_y^2)}) \leq 4\sqrt{770}
\end{equation}
and substituting Eq.~(\ref{A151}) into Eq.~(\ref{QSP51}), we obtain
\begin{equation}\label{p51max}
\mathcal{P}^{5\rightarrow 1}_Q\leq\dfrac{1}{2}+\dfrac{4\sqrt{770}}{320}\approx 0.847
\end{equation}

 The detailed derivation of the quantum success probability is provided in Appendix.~\ref{optm51}.

%%%%%%%%%%%%%%%%%%%%%%%%%%%%%%%%%%

\subsection{\texorpdfstring{$5 \rightarrow 2$}{5 -> 2} entanglement-assisted PMRAC}

In this case, Alice and Bob share a three-qubit maximally entangled state $\rho\in\mathcal{L}( \mathbb{C}^4\otimes \mathbb{C}^2)$. Repeating the same encoding-decoding process, if Alice communicates two qubits of information to Bob, the quantum success probability can be explicitly derived as

\begin{eqnarray}
	\label{QSP52}
	\mathcal{P}^{5\rightarrow 2}_{Q}&=&\dfrac{1}{2}+\dfrac{\mathscr{I}^{5\rightarrow 2}}{320}
\end{eqnarray}
 
where $\mathscr{I}^{5\rightarrow 2}$ is the correlation function of the form
\begin{eqnarray}\label{I52}
\mathscr{I}^{5\rightarrow 2}&=&Tr[(N^1_{5,1}+N^1_{5,3}+N^1_{5,2}+N^1_{5,4})B_{5,1}\nonumber\\
    &&+(N^1_{5,1}-N^1_{5,3}+N^1_{5,2}-N^1_{5,4})B_{5,2}\nonumber\\
    &&+(N^2_{5,1}+N^2_{5,3}+N^2_{5,2}+N^2_{5,4})B_{5,3}\nonumber\\
&&+(N^3_{5,1}+N^3_{5,3}+N^3_{5,2}+N^3_{5,4})B_{5,4}\nonumber\\
    &&+(N^4_{5,1}+N^4_{5,3}+N^4_{5,2}+N^4_{5,4})B_{5,5}]
\end{eqnarray}
By optimizing Eq.~(\ref{I52}), we obtain the maximum value of 
 $\mathscr{I}^{5\rightarrow 2}$ a
\begin{equation}
    \mathscr{I}^{5\rightarrow 2}_{max}\leq 8\sqrt{5(20+28)}\leq 32\sqrt{15} \label{optbn151}
\end{equation}

 Hence, the quantum success probability from Eq.~(\ref{QSP42}) is derived as
\begin{equation}\label{p52max}
\mathcal{P}^{5\rightarrow 2}_Q\leq \dfrac{1}{2}+\dfrac{32\sqrt{15}}{320}\approx 0.887
\end{equation}
 The detailed derivation of the quantum success probability is provided in the appendix~\ref{optm52}. 
%%%%%%%%%%%%%%%%%%%%%%%%%%%%%%%%%%%%%%%%%%%%%%%%%%%%%%%%%%%%%%%%%%%%%%%%%%%%%%%%
\subsection{\texorpdfstring{$5 \rightarrow 3$}{5 -> 3} entanglement-assisted PMRAC}

Here, Alice and Bob share a four-qubit maximally entangled state $\rho\in\mathcal{L}( \mathbb{C}^8\otimes \mathbb{C}^2)$. If Alice communicates three qubits of information to Bob, the quantum success probability can be explicitly derived as

\begin{eqnarray}
	\label{QSP53}
	\mathcal{P}^{5\rightarrow 3}_{Q}&=&\dfrac{1}{2}+\dfrac{\mathscr{I}^{5\rightarrow 3}}{320}
\end{eqnarray}
 
where $\mathscr{I}^{5\rightarrow 3}$ is the correlation function of the form
\begin{eqnarray}\label{I53m}
\mathscr{I}^{5\rightarrow 3}&=&\Tr[\sum\limits_{y=1}^{5}\eta_y\mathscr{L}_{5,y} B_{5,y}]
\end{eqnarray}
The exact form of $\mathscr{I}^{5\rightarrow 3}$ is given in Appendix~\ref{optm53}. Optimizing Eq.~(\ref{I53m}), we arrive at the maximal value of 
 $\mathscr{I}^{5\rightarrow 3}$, that is,
\begin{eqnarray}
    \mathscr{I}_{opt}^{5\rightarrow 3}\leq 64\sqrt{5}
\end{eqnarray}

 Hence, the quantum success probability from Eq.~(\ref{QSP53}) is derived as
\begin{eqnarray}\label{p53max}
    \mathcal{P}^{5\rightarrow 3}_{Q}\leq\dfrac{1}{2}+\dfrac{64\sqrt{5}}{320}\approx 0.947
\end{eqnarray}
 The detailed derivation of the quantum success probability is provided in the Appendix~\ref{optm53}. 
  
%%%%%%%%%%%%%%%%%%%%%%%%%%%%%%%%%%%%%%%%%%%%%%%%%%%%%%%%%%%%%%%%%%%%%%%%%%%%%%%%%%%%%%%%%%%%%%%%%%%%%%

In the following, we present a table that compares the success probability of our entanglement-assisted PMRAC with that of existing RAC tasks.

\begin{table}[h]
    \centering
    \caption{Here we provide a table displaying upper bounds and quantum success probabilities for $n \rightarrow l$ RAC games with different qubit communication   \cite{ Ambainis2009,imamichi2018constructions, Doriguello2021, Mukherjee2021,Farkas2025}. }
    \label{tab:rac_comparison}. 
    \begin{tabular}{lcccc}
        \toprule
        \textbf{$n \to l$ } & \textbf{Classical } & \textbf{Standard quantum} & \textbf{Entanglement-assisted } \\
        & \textbf{RAC} $
         (\mathcal{P}_C)$ & \textbf{ PMRAC} $
         (\mathcal{P}_{SQ})$& \textbf{PMRAC}$
         (\mathcal{P}_{Q})$&  \\
        \midrule
        $3 \to 1$ & 0.75 & $ 0.785$ &   0.908 (tight) \\
        $4 \to 1$ & 0.69 & $0.741$ &  0.854 (tight) \\
        $4 \to 2$ & 0.75 & 0.854 &  0.933 (tight) \\
        $5 \to 1$ & 0.69 & $< 0.5$ &  0.853 \\
        $5 \to 2$ & 0.83 & 0.811 & 0.887 \\
      $5 \to 3$ & 0.85 & 0.887 &  0.947 \\
        \bottomrule
    \end{tabular}
\end{table}

\begin{figure}[ht]
\centering
\begin{tikzpicture}
\begin{axis}[
    domain=3:93,
    samples=200,
    xlabel = {Value of $(n)$},
    ylabel = {Success Probability $\mathcal{P}^{n\rightarrow n-2}_Q$},
    xmin=3, xmax=93,
    ymin=0.83, ymax=1.0,
    axis lines=box,
    thick,
    width=8.2cm,
    height=6cm,
    tick label style={font=\small},
    legend style={
        at={(0.65,0.1)},  % ← move this to adjust placement inside graph
        anchor=south west,
        draw=none,
        fill=white,
        font=\small
    },
]

% Quantum curve
\addplot[
    blue,
    thick
]
{0.5 + sqrt(x*(x - 1)) / (2 * x)};
\addlegendentry{Quantum}

% Classical curve
\addplot[
    red,
    dashed,
    thick
]
{1 - 1/(2*x)};
\addlegendentry{Classical}

\end{axis}
\end{tikzpicture}
\caption{The Graph depicts how the quantum success probability changes with the number of settings ($n$) of an $n\rightarrow n-2$ entanglement-assisted PMRAC.}\label{sucp}
\end{figure}
Next, we extend the entanglement-assisted PMRAC game to a general $n$-bit communication scenario by considering $l=(n-2)$ as an example, so that $n-2$ qubits are transmitted from Alice to Bob.
%%%%%%%%%%%%%%%%%%%%%%%%%%%%%%%%%%%%%%%%%%%
\section{\texorpdfstring{$n \rightarrow n-2$}{n to (n-2)} entanglement-assisted PMRAC}\label{nPMRAC}

For the $n\rightarrow n- 2$ classical RAC, Alice sends $l = n-2$ bits to Bob. The upper-bound of the average classical success probability is $\mathcal{P}_C^{n\rightarrow n-2}=(1 - \frac{1}{2n})$ \cite{Nayak1999}. To achieve the bound, Alice and Bob have a prior agreement that she always transmits the first $n-2$ bits, which Bob can decode perfectly, while he guesses the remaining two bits uniformly at random with probability $1/2$.

In quantum theory, Alice and Bob share an entangled state $\rho \in \mathcal{L}(\mathbb{C}^{2^{n-2}}\otimes \mathbb{C}^{2})$. Now, Alice prepares her input states by applying local unitary operation $U_{x}$ \ uniformly at random such that $\rho_{x}=(U^\dagger_{x}\otimes\openone) \ \rho \ (U_{x}\otimes\openone)$ where $x \in\{0,1\}^n $. Then, Alice communicates $n-2$ qubits to Bob. To decode, Bob on the other hand, performs projective measurement $\{\Pi_{n,y}^b\}$  with $\sum_{b}\Pi_{n,y}^b=\openone_{2^{n-1}},\forall y\in[n],b\in\{0,1\}$, for every bit of $x$ he wants to decode. Then,  using Eq. (\ref{VQSP}) the quantum success probability can be written as
\begin{eqnarray}
	\label{B6nn-1}
	\mathcal{P}^{n\rightarrow n-2}_{Q}&=&\dfrac{1}{2}+\dfrac{\mathscr{I}^{n\rightarrow n-2}}{n \ 2^{n+1}}
\end{eqnarray}
 
where the correlation function $\mathscr{I}^{n\rightarrow n-2}$ is derived as
\begin{eqnarray}\label{p1n}
\mathscr{I}^{n\rightarrow n-2}&=&\Tr\Big[\sum_{y=1}^n (M^y_{n,1}+M^y_{n,2})B_{n,y}\Big]
\end{eqnarray}
with 
\begin{eqnarray}\label{Mn1y}
    &&M^y_{n,1}=\sum\limits_{\substack{x\in\{0,1\}^n\\ \bigoplus_{y=1}^n x_y=0}}(-1)^{x_y}\rho_x, M^y_{n,2}=\sum\limits_{\substack{x\in\{0,1\}^n\\ \bigoplus_{y=1}^n x_y=1}}(-1)^{x_y}\rho_x\\
 &&\text{and the states satisfy}\sum\limits_{\substack{x\in\{0,1\}^n\\ \bigoplus_{y=1}^n x_y=0}}\rho_x=\hspace{-0.2cm} \sum\limits_{\substack{x\in\{0,1\}^n\\ \bigoplus_{y=1}^n x_y=1}}\rho_x=\openone_{2^{n-1}}
\end{eqnarray}

 Let $\mathscr{I}^{n\rightarrow n-2}_{opt}$ be the maximum quantum value of the function $\mathscr{I}^{n\rightarrow n-2}$ for the qubit system. The correlation function in Eq.~(\ref{p1n}) can be written as
\begin{eqnarray}
\mathscr{I}^{n\rightarrow n-2}&=&\Tr[\sum\limits_{y=1}^{n}\omega_y\mathscr{M}_{n,y} B_{n,y} ]\label{wabn}
\end{eqnarray}
where $\mathscr{M}_{n,y}$ and $\omega_y$'s are defined as following
\begin{eqnarray}
     &&\mathscr{M}_{n,y} = \frac{M^y_{n,1}+M^y_{n,2}}{\omega_y} \ , \ \ \omega_y=||M^y_{n,1}+M^y_{n,2}||,\ \forall y\in[n] \label{omegain}
\end{eqnarray}
The optimal quantum value of the functional is
\begin{equation}\label{optbnn}
\mathscr{I}^{n\rightarrow n-2}_{opt}= \Tr[\max\left(\sum\limits_{y=1}^{n}\omega_{y} \right)\openone_{2^{n-1}}]=2^{n-1}\max\left(\sum\limits_{y=1}^{n}\omega_{y} \right)
\end{equation}
And it will be achieved only when $\mathscr{M}_{n,y}=B_{n,y}$. Now,
evaluating $\omega_y$ from Eq.~(\ref{omegain}), we get
\begin{eqnarray}
\omega_y&=&\sqrt{2+\langle\{M^y_{n,1}, M^y_{n,2}\}\rangle_\frac{\openone_{2^{n-1}}}{2^{n-1}}}=\sqrt{2+ \frac{\Tr[M^y_{n,1} M^y_{n,2}]}{2^{n-2}}}\ \ 
\end{eqnarray}
Therefore, by applying the convex inequality $\sum\limits_{y=1}^{n}\omega_{y}\leq \sqrt{n \sum\limits_{y=1}^{n} (\omega_{y})^{2}}$, with equality achieved only if $\omega_y=\omega_{y^\prime}$ for all $y\neq y^{\prime}\in [n]$, to Eqs.~(\ref{optbnn}) and (\ref{omegain}), we obtain
\begin{eqnarray}\label{A1n}
\mathscr{I}^{n\rightarrow n-2}&\leq&2^{n-1}\qty( \max \sqrt{n \ \qty(\sum_{y=1}^n\omega_y^2)} )\nonumber \\
&\leq&2^{n-1}\max\qty[n \ \Big[2n+\frac{1}{2^{n-2}}\sum_{y=1}^n\Tr[{M^y_{n,1} M^y_{n,2}}]\Big]]^\frac{1}{2}
\end{eqnarray}
Now, substituting the values of $M^y_{n,1}$ and $M^y_{n,2}$ from Eq.~(\ref{Mn1y}), we get 
\begin{eqnarray}\label{mnnn1}
    \sum_{y=1}^n\Tr[{M^y_{n,1} M^y_{n,2}}]=\hspace{-.5cm}\sum\limits_{\substack{x\in\{0,1\}^n\\ \bigoplus_{y=1}^n x_y=0}}\sum\limits_{\substack{x'\in\{0,1\}^n\\ \bigoplus_{y=1}^n x'_{y}=1}}\hspace{-.3cm}\Tr[\rho_x\rho_{x'}]\sum\limits_{y=1}^n (-1)^{x_y\oplus x'_{y}}
\end{eqnarray}
Note that for $x_y = x'_y$ and $x_y \neq x'_y$, the corresponding values of $(-1)^{x_y\oplus x'_{y}}$ are $+1$ and $-1$, respectively. Now we assume that, for a given $\rho_x$ and $\rho_{x'}$ with opposite parities, the entries differ at position $k_{x,x'}$ and coincide at the remaining $n - k_{x,x'}$ positions. From this assumption, we derive
\begin{eqnarray}
    \sum\limits_{y=1}^n (-1)^{x_y\oplus x'_{y}}=(n-k_{x,x'})-k_{x,x'}=n-2k_{x,x'}
\end{eqnarray}
Putting this in Eq.~(\ref{mnnn1}) we get
\begin{eqnarray}\label{mnnn2}
    \sum_{y=1}^n\Tr[{M^y_{n,1} M^y_{n,2}}]=\hspace{-.5cm}\sum\limits_{\substack{x\in\{0,1\}^n\\ \bigoplus_{y=1}^n x_y=0}}\sum\limits_{\substack{x'\in\{0,1\}^n\\ \bigoplus_{y=1}^n x'_{y}=1}}(n-2k_{x,x'})\Tr[\rho_x\rho_{x'}]
\end{eqnarray}
Since $\rho_x$ and $\rho_{x'}$ have opposite parity, the number of bit positions on which they differ is $k_{x,x'} \in \{1,3,5,\ldots\}$. Furthermore, $0 \leq \Tr[\rho_x \rho_{x'}] \leq 1$ holds. Thus, to maximize Eq.~(\ref{mnnn2}), the factor $n - 2k_{x,x'}$ must be as large as possible, which is attainable only when the number of differing bits is $k_{x,x'} = k = 1$, yielding a value of $(n-2)$ alone. Therefore, the remaining term in Eq.~(\ref{mnnn2}) can be expressed as
\begin{eqnarray}
    \sum\limits_{\substack{x\in\{0,1\}^n\\ \bigoplus_{y=1}^n x_y=0}}\Tr[\rho_x\sum\limits_{\substack{x'\in\{0,1\}^n\\ \bigoplus_{y=1}^n x'_{y}=1}}\rho_{x'}]= \sum\limits_{\substack{x\in\{0,1\}^n\\ \bigoplus_{y=1}^n x_y=0}}\Tr[\rho_x]\leq2^{n-1}
\end{eqnarray}
Substituting this into Eq.~(\ref{mnnn2}) yields

\begin{eqnarray}\label{mnnn3}
    \sum_{y=1}^n\Tr[{M^y_{n,1} M^y_{n,2}}]\leq(n-2)2^{n-1}
\end{eqnarray}
Putting the value in Eq.~(\ref{A1n}), we get  
\begin{equation}
    \mathscr{I}^{n\rightarrow n-2}_{opt}\leq 2^{n}\sqrt{n(n-1)}
\end{equation} 
which gives,, the quantum success probability from Eq.~(\ref{B6nn-1}) as
\begin{equation}\label{nn-2mq}
\mathcal{P}^{n\rightarrow n-2}_Q\leq \dfrac{1}{2}+\dfrac{\sqrt{n(n-1)}}{2n}
\end{equation}
Note that the values of quantum success probability $\mathcal{P}^{n\rightarrow n-2}_Q$ in Table \ref{tab:rac_comparison} satisfy Eq.~(\ref{nn-2mq}) for $n=\{3,4,5\}$.

%In the case of the $4 \rightarrow 2$ protocol, there does not exist any global unitary acting solely on the first two qubits that is able to map one orthonormal basis into another orthonormal basis, which is necessary to achieve the optimal success probability.

%%%%%%%%%%%%%%%%%%%%%%%%%%%%%%%%%%%%%%%%%%%

%We demonstrate that any such $n$-bit communication task where one qubit information is send from Alice to Bob with the assistance of a two-qubit maximally entangled state, not only we get quantum supremacy over classical resources but also demonstrate significant advantage over all the existing quantum RAC protocols.  

%%%%%%%%%%%%%%%%%%%%%%%%%%%%%%%%%%%%%%%%%%%%%%%%%%%%%%%%%%%%%%%%

 %%%%%%%%%%%%%%%%%%%%%%%%%%%%%%%%%%%%%%%%%%%%%%%%%%%%%%%%%%%%%%%%%%%%%%%%%%%%%%%%%
%%%%%%%%%%%%%%%%%%%%%%%%%%%%%%%%%%%%%%%%%%%%%%%%%%%%%%%%%%%%%%%%%%%%%%%%%%%%%%%%%
\section{Summary and Discussion}\label{conc}
In sum, we demonstrate quantum supremacy by leveraging quantum correlations in a hybrid PM communication game where Alice and Bob pre-share an entangled state of fixed dimension and communicate qubits. In particular, we consider a bipartite $n \to l$ entanglement-assisted PMRAC game where Alice applies a set of local unitary operations on her system to encode the information from the given input and communicates $l$ qubit information to Bob, who subsequently performs a joint measurement on the whole system to decode the information of interest.

First, we constructed two cases of $4 \rightarrow l, \forall l\in[2]$ entanglement-assisted PMRAC depending on the number of qubits communicated. We then analytically derived the optimal quantum success probabilities for each case, demonstrating quantum supremacy over classical RACs. Moreover, the optimal quantum success probability of $4\rightarrow 1$ certifies the shared state to be a maximally entangled two-qubit state, Alice's unitary operations, and Bob's measurements. In contrast, for $4\rightarrow 2$ we prove that the optimal value necessarily requires a GHZ state. We then derive the upper bounds of the quantum success probabilities for $5\rightarrow l,\forall l\in[3]$ entanglement-assisted PMRAC and demonstrate a classical advantage. Further, we consider a particular case for a general $n$-bit entanglement-assisted PMRAC game when $n-2$ qubits are communicated and derive the optimal quantum success probability. We leave the general $n$-bit cases for any arbitrary $l(n>l)$ for future work due to substantial analytical and computational demands.  The study of the robustness of our protocols in the presence of noise could be another interesting line of work.

Our conceptual and technical framework lays the necessary foundation for a systematic investigation of entanglement-assisted communication games in PMRAC scenarios. It would be interesting to study other entanglement-assisted communication games in PM scenario by using the analytical technique developed here. % it would be interesting to understand whether our framework can be upgraded to a device-independent setting and to quantify how imperfect state preparation and measurement affect the success probability.

%%%%%%%%%%%%%%%%%%%%%%%%%%%%%%%%%%
\section*{ACKNOWLEDGMENTS} 
RP acknowledges the financial support from the Council of Scientific and Industrial Research (CSIR, 09/1001(12429)/2021-EMR-I), Government of India. PR acknowledges the support by KIAS individual Grant No. QP100601 at the Korea Institute for Advanced Study. AKP acknowledges support from Research Grant No. SG160 of IIT Hyderabad, India.

%%%%%%%%%%%%%%%%%%%%%%%%%%%%%%%%%%
\appendix
\begin{widetext}   		
\section{Detailed derivation of optimal quantum success probability for \texorpdfstring{$4\rightarrow 1$-bit entanglement-assisted PMRAC}{4-bit entanglement-assisted PMRAC}}
\label{app:optimal}
  
To optimize the correlation $\mathscr{I}^{4\rightarrow 1}_{opt}=4\max\left(\sum\limits_{y=1}^{4}\omega_{y} \right)$ in Eq.~(\ref{optbn42}), we first expand $\omega_1$ from Eq.~(\ref{omegai42aap}) in the main text as
\begin{eqnarray}
\omega_1&=&\frac{1}{\sqrt{4}}\sqrt{\Tr[4\openone_4+\qty(\{M^1_{4,1}, M^1_{4,2}\}+\{M^1_{4,1}, M^1_{4,3}\}+\{M^1_{4,1}, M^1_{4,4}\}+\{M^1_{4,2}, M^1_{4,3}\}+\{M^1_{4,2}, M^1_{4,4}\}+\{M^1_{4,3}, M^1_{4,4}\})\openone_4]}\nonumber\\
&=&\sqrt{4+\frac{1}{2}\Tr[M^1_{4,1} M^1_{4,2}+M^1_{4,1}M^1_{4,3}+M^1_{4,1} M^1_{4,4}+M^1_{4,2}M^1_{4,3}+M^1_{4,2} M^1_{4,4}+M^1_{4,3} M^1_{4,4}]}\label{aw1}
\end{eqnarray}
where $\Big\langle\{M^1_{4,i} , M^1_{4,j} \}\Big\rangle_\frac{\openone_4}{4} = \Tr[\{M^1_{4,i} , M^1_{4,j} \}\frac{\openone_4}{4}]$. 

Similarly, we derive,
\begin{eqnarray}\label{omegai142aap}
\omega_2&=&
\sqrt{4+\frac{1}{2}\Tr[-M^1_{4,1}M^1_{4,2}+M^1_{4,1} M^1_{4,3}-M^1_{4,1}M^1_{4,4}-M^1_{4,2} M^1_{4,3}+M^1_{4,2}M^1_{4,4}-M^1_{4,3}M^1_{4,4}]}\label{aw2}\\
\omega_3&=&
\sqrt{4+\frac{1}{2}\Tr[M^2_{4,1} M^2_{4,2}+M^2_{4,1}M^2_{4,3}+M^2_{4,1} M^2_{4,4}+M^2_{4,2}M^2_{4,3}+M^2_{4,2} M^2_{4,4}+M^2_{4,3} M^2_{4,4}]}\label{aw3}\\
\omega_4&=&
\sqrt{4+\frac{1}{2}\Tr[-\{M^2_{4,1} M^2_{4,2}-M^2_{4,1}M^2_{4,3}+M^2_{4,1}M^2_{4,4}+M^2_{4,2}M^2_{4,3}-M^2_{4,2} M^2_{4,4}-M^2_{4,3} M^2_{4,4}]}\label{aw4}
\end{eqnarray}

Using the convex inequality $\sum\limits_{y=1}^{n}\omega_{y}\leq \sqrt{n \sum\limits_{y=1}^{n} (\omega_{y})^{2}}$ where equality holds only when $\omega_y=\omega_{y'}, \forall y\neq y'\in [n]$, hence we can rewrite $\mathscr{I}^{4\rightarrow 1}$ as 
\begin{eqnarray}\label{A142aap}
\mathscr{I}^{4\rightarrow 1}&\leq&4\qty( \max \sqrt{4 \ \qty(\omega_1^2+\omega_2^2+\omega_3^2+\omega_4^2)} )\nonumber \\
&\leq&4\max\Bigg[8 \ \Big[8+\frac{1}{2}\qty(\Tr[M^1_{4,1} M^1_{4,3}]+\Tr[M^1_{4,2} M^1_{4,4}]+\Tr[M^2_{4,1} M^2_{4,4}]+\Tr[M^2_{4,2} M^2_{4,3}])\Big]\Bigg]^{1/2}
\end{eqnarray}
From Eq.~(\ref{obserg42}) of the main text, the mutually orthogonal state satisfies
\begin{eqnarray}
&&\rho_{0000}+\rho_{1111}+\rho_{0011}+\rho_{1100}=\openone_4;\ \ \ \  \rho_{0001}+\rho_{1110}+\rho_{0010}+\rho_{1101}=\openone_4;
	\ \\
	&&\rho_{0101}+\rho_{1010}+\rho_{0110}+\rho_{1001}=\openone_4;\ \ \ \ 
	\rho_{0100}+\rho_{1011}+\rho_{0111}+\rho_{1000}=\openone_4.
\end{eqnarray}
Now, we can write
\begin{eqnarray}
    \Tr[M^1_{4,1}  M^1_{4,3}]&=&\Tr[(\rho_{0000}-\rho_{1111}+\rho_{0011}-\rho_{1100})(\rho_{0001}-\rho_{1110}+\rho_{0010}-\rho_{1101})]\nonumber\\
    &=&\begin{cases}
        4-4\qty(\Tr[\rho_{1111}(\rho_{0001}+\rho_{0010})]+\Tr[\rho_{1100}(\rho_{0001}+\rho_{0010})])\\
        4-4\qty(\Tr[\rho_{0000}(\rho_{1110}+\rho_{1101})]+\Tr[\rho_{0011}(\rho_{1110}+\rho_{1101})])
    \end{cases}\nonumber\\
    &=& 4-2\qty(\Tr[\rho_{1111}(\rho_{0001}+\rho_{0010})]+\Tr[\rho_{1100}(\rho_{0001}+\rho_{0010})]+\Tr[\rho_{0000}(\rho_{1110}+\rho_{1101})]+\Tr[\rho_{0011}(\rho_{1110}+\rho_{1101})])\quad\quad
\end{eqnarray}
Hence, maximum value of $\Tr[M^1_{4,1}  M^1_{4,3}]=4$, only when $\Tr[\rho_{1111}\rho_{0001}]=\Tr[\rho_{1111}\rho_{0010}]=\Tr[\rho_{1100}\rho_{0001}]=\Tr[\rho_{1100}\rho_{0010}]=\Tr[\rho_{0000}\rho_{1110}]=\Tr[\rho_{0000}\rho_{1101}]=\Tr[\rho_{0011}\rho_{1110}]=\Tr[\rho_{0011}\rho_{1101}]=0$. Similarly,
\begin{eqnarray}
    \Tr[M^1_{4,2}  M^1_{4,4}]&=& 4-2\qty(\Tr[\rho_{1010}(\rho_{0100}+\rho_{0111})]+\Tr[\rho_{1001}(\rho_{0100}+\rho_{0111})]+\Tr[\rho_{0101}(\rho_{1011}+\rho_{1000})]+\Tr[\rho_{0110}(\rho_{1011}+\rho_{1000})])\quad\quad\\
    \Tr[M^2_{4,1}  M^2_{4,4}]&=&4-2\qty(\Tr[\rho_{1111}(\rho_{0100}+\rho_{1000})]+\Tr[\rho_{0011}(\rho_{0100}+\rho_{1000})]+\Tr[\rho_{0000}(\rho_{1011}+\rho_{0111})]+\Tr[\rho_{1100}(\rho_{1011}+\rho_{0111})])\\
    \Tr[M^2_{4,2}  M^2_{4,3}]&=&4-2\qty(\Tr[\rho_{1010}(\rho_{0001}+\rho_{1101})]+\Tr[\rho_{0110}(\rho_{0001}+\rho_{1101})]+\Tr[\rho_{0101}(\rho_{1110}+\rho_{0010})]+\Tr[\rho_{1001}(\rho_{1110}+\rho_{0010})])
\end{eqnarray}
Also, the maximum value of $\Tr[M^1_{4,2}  M^1_{4,4}]=\Tr[M^2_{4,1}  M^2_{4,4}]=\Tr[M^2_{4,2}  M^2_{4,3}]=4$ is achieved when
\begin{eqnarray}\label{rho0}
    &&\Tr[\rho_{1010}\rho_{0100}]=\Tr[\rho_{1010}\rho_{0111}]=\Tr[\rho_{1001}\rho_{0100}]=\Tr[\rho_{1001}\rho_{0111}]=\Tr[\rho_{0101}\rho_{1011}]=\Tr[\rho_{0101}\rho_{1000}]=\Tr[\rho_{0110}\rho_{1011}]=0\nonumber\\
    &&\Tr[\rho_{0110}\rho_{1000}]=\Tr[\rho_{1111}\rho_{0100}]=\Tr[\rho_{1111}\rho_{1000}]=\Tr[\rho_{0011}\rho_{0100}]=\Tr[\rho_{0011}\rho_{1000}]=\Tr[\rho_{0000}\rho_{1011}]=\Tr[\rho_{0000}\rho_{0111}]=0\nonumber\\
    &&\Tr[\rho_{1100}\rho_{1011}]=\Tr[\rho_{1100}\rho_{0111}]=0\Tr[\rho_{1010}\rho_{0001}]=\Tr[\rho_{1010}\rho_{1101}]=\Tr[\rho_{0110}\rho_{0001}]=\Tr[\rho_{0110}\rho_{1101}]=\Tr[\rho_{0101}\rho_{1110}]=0\nonumber\\
    &&\Tr[\rho_{0101}\rho_{0010}]=\Tr[\rho_{1001}\rho_{1110}]=\Tr[\rho_{1001}\rho_{0010}]=0\quad\quad
\end{eqnarray}
which further gives
\begin{eqnarray}\label{con1}
    \Tr[M^1_{4,1}  M^1_{4,3} ]+ \Tr[M^1_{4,2} M^1_{4,4} ]+ \Tr[M^2_{4,1} M^2_{4,4}]+ \Tr[M^2_{4,2}  M^2_{4,3}]=16
\end{eqnarray}
which is Eq. (\ref{AMr}) in the main text.

\section{Proof of statement (ii) and (iii) of Corollary \ref{coro1} of the main text}\label{app:coro1}
At the optimal success probability, we have $\omega_1=\omega_2=\omega_3=\omega_4$. Consequently, by considering $\omega_1=\omega_2$ from Eqs.~(\ref{aw1}) and (\ref{aw2}) we obtain
\begin{eqnarray}\label{m41}
     \Tr[M^1_{4,1}  M^1_{4,2}]+\Tr[M^1_{4,1} M^1_{4,4}]+\Tr[M^1_{4,2} M^1_{4,3}]+\Tr[M^1_{4,3} M^1_{4,4}]=0
\end{eqnarray}
Eq.~(\ref{m41}) is further reduced to 
\begin{eqnarray}\label{crho1}
    &&\Tr[(\rho_{0000}-\rho_{1111}+\rho_{0011}-\rho_{1100})(\rho_{0101}-\rho_{1010}+\rho_{0110}-\rho_{1001})]+\Tr[(\rho_{0001}-\rho_{1110}+\rho_{0010}-\rho_{1101})(\rho_{0100}-\rho_{1011}+\rho_{0111}-\rho_{1000})]\nonumber\\
    &&+\Tr[\rho_{0000}(\rho_{0100}-\rho_{1000})]+\Tr[\rho_{1111}(\rho_{1011}-\rho_{0111})]+\Tr[\rho_{0011}(\rho_{0111}-\rho_{1011})]+\Tr[\rho_{1100}(\rho_{1000}-\rho_{0100})]+\Tr[\rho_{0101}(\rho_{0001}-\rho_{1101})]\nonumber\\
    &&+\Tr[\rho_{1010}(\rho_{1110}-\rho_{0010})]+\Tr[\rho_{0110}(\rho_{0010}-\rho_{1110})]+\Tr[\rho_{1001}(\rho_{1101}-\rho_{0001})]=0
\end{eqnarray}
Similarly $\omega_3=\omega_4$ implies that 
\begin{eqnarray}\label{rhoc2}
    \Tr[M^2_{4,1} M^2_{4,2}]+\Tr[M^2_{4,1}M^2_{4,3}]+\Tr[M^2_{4,2} M^2_{4,4}]+\Tr[M^2_{4,3} M^2_{4,4}]=0
\end{eqnarray}
which can further be simplified as
\begin{eqnarray}\label{crho2}
    &&\Tr[(\rho_{0000}-\rho_{1111}-\rho_{0011}+\rho_{1100})(\rho_{0101}-\rho_{1010}-\rho_{0110}+\rho_{1001})]+\Tr[(\rho_{0001}-\rho_{1110}-\rho_{0010}+\rho_{1101})(\rho_{0100}-\rho_{1011}-\rho_{0111}+\rho_{1000})]\nonumber\\
    &&+\Tr[\rho_{0000}(\rho_{0001}-\rho_{0010})]+\Tr[\rho_{1111}(\rho_{1110}-\rho_{1101})]+\Tr[\rho_{0011}(\rho_{0010}-\rho_{0001})]+\Tr[\rho_{1100}(\rho_{1101}-\rho_{1110})]+\Tr[\rho_{0101}(\rho_{0100}-\rho_{0111})]\nonumber\\
    &&+\Tr[\rho_{1010}(\rho_{1011}-\rho_{1000})]+\Tr[\rho_{0110}(\rho_{0111}-\rho_{0100})]+\Tr[\rho_{1001}(\rho_{1000}-\rho_{1011})]=0
\end{eqnarray}
Similarly, using $\omega_1=\omega_4$ or $\omega_2=\omega_3$ and applying the condition of Eq.~(\ref{m41}) and (\ref{rhoc2}), we obtain
\begin{eqnarray}\label{c2m1}
    \Tr[M^1_{4,1}  M^1_{4,3} ]+ \Tr[M^1_{4,2} M^1_{4,4} ]= \Tr[M^2_{4,1} M^2_{4,4}]+ \Tr[M^2_{4,2}  M^2_{4,3}]
\end{eqnarray}
Using Eq.~(\ref{con1}) in Eq.~(\ref{c2m1}) we get 
\begin{eqnarray}\label{c2m}
    \Tr[M^1_{4,1}  M^1_{4,3} ]+ \Tr[M^1_{4,2} M^1_{4,4} ]= \Tr[M^2_{4,1} M^2_{4,4}]+ \Tr[M^2_{4,2}  M^2_{4,3}]=8
\end{eqnarray}
Now, lets first derive the relation between $\rho_{0000}$ with all other $\rho$'s. We already know that in the same basis $\rho_{0000}$ is orthogonal to $\rho_{1111},\rho_{0011}$ and $\rho_{1100}$, i.e, $\Tr[\rho_{0000}\rho_{1111}]=\Tr[\rho_{0000}\rho_{0011}]=\Tr[\rho_{0000}\rho_{1100}]=0$. And from Eq.~(\ref{rho0}) we know that $\Tr[\rho_{0000}\rho_{1101}]=\Tr[\rho_{0000}\rho_{1110}]=\Tr[\rho_{0000}\rho_{0111}]=\Tr[\rho_{0000}\rho_{1011}]=0$. 

To obtain the remaining relations involving $\rho_{0000}$, we note that each trace term in Eqs.~(\ref{crho1}) and (\ref{crho2}) must be zero since $\Tr[\rho_x\rho_{x'}],\forall x\neq x'$ is a positive quantity. Therefore, we arrive at
\begin{eqnarray}
    &&\Tr[\rho_{0000}(\rho_{0101}-\rho_{1010}+\rho_{0110}-\rho_{1001})]=\Tr[\rho_{0000}(\rho_{0101}-\rho_{1010}-\rho_{0110}+\rho_{1001})]=0\label{rho01/4}\\
    &&\Tr[\rho_{0000}(\rho_{0100}-\rho_{1000})]=\Tr[\rho_{0000}(\rho_{0001}-\rho_{0010})]=0
\end{eqnarray}
This implies $\Tr[\rho_{0000}\rho_{0100}]=\Tr[\rho_{0000}\rho_{1000}]$ and $\Tr[\rho_{0000}\rho_{0001}]=\Tr[\rho_{0000}\rho_{0010}]$. Now, we can write 
\begin{eqnarray}
    \Tr[\rho_{0000}]=1\Rightarrow \Tr[\rho_{0000}\openone_4]&&=\begin{cases}
        \Tr[\rho_{0000}(\rho_{0100}+\rho_{1011}+\rho_{0111}+\rho_{1000})]=1\\
        \Tr[\rho_{0000}(\rho_{0001}+\rho_{1110}+\rho_{0010}+\rho_{1101})]=1
    \end{cases}\nonumber\\
    &&=\begin{cases}
        \Tr[\rho_{0000}(\rho_{0100}+\rho_{1000})]=1\\
        \Tr[\rho_{0000}(\rho_{0001}+\rho_{0010})]=1
        \end{cases}\Rightarrow \begin{cases}
        \Tr[\rho_{0000}\rho_{0100}]=\Tr[\rho_{0000}\rho_{1000}]=\frac{1}{2}\\
        \Tr[\rho_{0000}\rho_{0001}]=\Tr[\rho_{0000}\rho_{0010}]=\frac{1}{2}
        \end{cases}
\end{eqnarray}
Similarly, from Eqs.~(\ref{c2m}) and (\ref{rho01/4}), we get
$\Tr[\rho_{0000}\rho_{0101}]=\Tr[\rho_{0000}\rho_{1010}]=\Tr[\rho_{0000}\rho_{0110}]=\Tr[\rho_{0000}\rho_{1001}]=\frac{1}{4}$. The relationship among the various $\Tr[\rho_x \rho_{x'}]$'s are summarized in the following table.
\newpage
\begin{table}[ht]
\centering
\caption{Relations among all the states $\rho$, that is, the quantities $\Tr[\rho_x \rho_{x'}]$ for all $x,x' \in [15]$.}\label{tab1}
\begin{small}
\setlength{\tabcolsep}{4.5pt}
\renewcommand{\arraystretch}{1.2}
\begin{tabular}{c|cccccccccccccccc}
\hline
\hline
 & $\rho_{0000}$ & $\rho_{0001}$ & $\rho_{0010}$ & $\rho_{0011}$ & $\rho_{0100}$ & $\rho_{0101}$ & $\rho_{0110}$ & $\rho_{0111}$ & $\rho_{1000}$ & $\rho_{1001}$ & $\rho_{1010}$ & $\rho_{1011}$ & $\rho_{1100}$ & $\rho_{1101}$ & $\rho_{1110}$ & $\rho_{1111}$ \\
\hline
$\rho_{0000}$    & 1   & 1/2 & 1/2 & 0   & 1/2 & 1/4 & 1/4 & 0   & 1/2 & 1/4 & 1/4 & 0   & 0   & 0   & 0   & 0   \\
$\rho_{0001}$    & 1/2 & 1   & 0   & 1/2 & 1/4 & 1/2 & 0   & 1/4 & 1/4 & 1/2 & 0   & 1/4 & 0   & 0   & 0   & 0   \\
$\rho_{0010}$    & 1/2 & 0   & 1   & 1/2 & 1/4 & 0   & 1/2 & 1/4 & 1/4 & 0   & 1/2 & 1/4 & 0   & 0   & 0   & 0   \\
$\rho_{0011}$    & 0   & 1/2 & 1/2 & 1   & 0   & 1/4 & 1/4 & 1/2 & 0   & 1/4 & 1/4 & 1/2 & 0   & 0   & 0   & 0   \\
$\rho_{0100}$    & 1/2 & 1/4 & 1/4 & 0   & 1   & 1/2 & 1/2 & 0   & 0   & 0   & 0   & 0   & 1/2 & 1/4 & 1/4 & 0   \\
$\rho_{0101}$    & 1/4 & 1/2 & 0   & 1/4 & 1/2 & 1   & 0   & 1/2 & 0   & 0   & 0   & 0   & 1/4 & 1/2 & 0   & 1/4 \\
$\rho_{0110}$    & 1/4 & 0   & 1/2 & 1/4 & 1/2 & 0   & 1   & 1/2 & 0   & 0   & 0   & 0   & 1/4 & 0   & 1/2 & 1/4 \\
$\rho_{0111}$    & 0   & 1/4 & 1/4 & 1/2 & 0   & 1/2 & 1/2 & 1   & 0   & 0   & 0   & 0   & 0   & 1/4 & 1/4 & 1/2 \\
$\rho_{1000}$    & 1/2 & 1/4 & 1/4 & 0   & 0   & 0   & 0   & 0   & 1   & 1/2 & 1/2 & 0   & 1/2 & 1/4 & 1/4 & 0   \\
$\rho_{1001}$    & 1/4 & 1/2 & 0   & 1/4 & 0   & 0   & 0   & 0   & 1/2 & 1   & 0   & 1/2 & 1/4 & 1/2 & 0   & 1/4 \\
$\rho_{1010}$ & 1/4 & 0   & 1/2 & 1/4 & 0   & 0   & 0   & 0   & 1/2 & 0   & 1   & 1/2 & 1/4 & 0   & 1/2 & 1/4 \\
$\rho_{1011}$ & 0   & 1/4 & 1/4 & 1/2 & 0   & 0   & 0   & 0   & 0   & 1/2 & 1/2 & 1   & 0   & 1/4 & 1/4 & 1/2 \\
$\rho_{1100}$ & 0   & 0   & 0   & 0   & 1/2 & 1/4 & 1/4 & 0   & 1/2 & 1/4 & 1/4 & 0   & 1   & 1/2 & 1/2 & 0   \\
$\rho_{1101}$ & 0   & 0   & 0   & 0   & 1/4 & 1/2 & 0   & 1/4 & 1/4 & 1/2 & 0   & 1/4 & 1/2 & 1   & 0   & 1/2 \\
$\rho_{1110}$ & 0   & 0   & 0   & 0   & 1/4 & 0   & 1/2 & 1/4 & 1/4 & 0   & 1/2 & 1/4 & 1/2 & 0   & 1   & 1/2 \\
$\rho_{1111}$ & 0   & 0   & 0   & 0   & 0   & 1/4 & 1/4 & 1/2 & 0   & 1/4 & 1/4 & 1/2 & 0   & 1/2 & 1/2 & 1   \\
\hline
\hline
\end{tabular}
\end{small}
\end{table}
which is statement (ii) in Corollary \ref{coro1}. 

Substituting Eqs.~(\ref{m41}), (\ref{rhoc2}) and (\ref{c2m}) into Eqs.~(\ref{aw1})–(\ref{aw4}), we find that $\omega_y = 2\sqrt{2},\forall y \in [4]$. Consequently, in order to achieve the maximal quantum success probability, Bob’s observables must obey $B_{4,y} = \mathscr{M}_{4,y}$ for all $y \in [4]$, which corresponds to statement (iii) of the Corollary~\ref{coro1}.
%%%%%%%%%%%%%%%%%%%%%%%%%%%%%%%%%%%%%%%%%%%%%%%%%%%%%%%%%%%%%%%%%%%%%%%%%%%%%%%%%%%%%%%%%%%%%%%%%%%%%%
\section{General form of maximally entangled two qubit state in Proposition \ref{prop1}}\label{app:maxent}
The general form \cite{Horodecki1995} of a two-qubit state $\rho$ is
\begin{eqnarray}\label{maxen}
    \rho = \frac{1}{4}\Big[\openone_2\otimes\openone_2+\sum_{i=x,y,z} r_i \ \sigma_i \otimes \openone+\sum_{j=x,y,z} s_j \  \openone\otimes \sigma_j+\sum_{i,j=x,y,z}t_{ij} \ \sigma_i\otimes \sigma_j\Big]\nonumber\\
\end{eqnarray}
with $\sum\limits_{i=x,y,z} r_{i}^{2} +\sum\limits_{j=x,y,z} s_{j}^{2}+\sum\limits_{i,j=x,y,z} t_{ij}^{2}\leq 3$ where the equality holds for pure state. However, for maximally entangled state, Eq. (\ref{maxen}) reduces to
 \begin{eqnarray}\label{maxpure}
    \rho = \frac{1}{4}\Big[\openone_2\otimes\openone_2+\sum_{i,j=x,y,z}^3 t_{ij} \ \sigma_i\otimes \sigma_j\Big]
\end{eqnarray}
with $\sum\limits_{i,j=x,y,z} t_{ij}^{2}=3$. By explicitly writing Eq. (\ref{maxpure}), we get 
 \begin{eqnarray}
    \rho &=& \frac{1}{4}\Big[\openone_2\otimes\openone_2+\sigma_x\otimes(t_{xx}\sigma_x+t_{xy}\sigma_y+t_{xz}\sigma_z)+\sigma_y\otimes(t_{yx}\sigma_x+t_{yy}\sigma_y+t_{yz}\sigma_z)+\sigma_z\otimes(t_{zx}\sigma_x+t_{zy}\sigma_y+t_{zz}\sigma_z)\Big]\nonumber\\
    &=&\frac{1}{4}\Big[\openone_2\otimes\openone_2+c_1 \sigma_x\otimes A_1+c_2 \sigma_y\otimes A_2+c_3 \sigma_z\otimes A_3\Big]\label{gnst}                                                         
\end{eqnarray}
where we define 
\begin{eqnarray}
    A_{1}=\frac{(t_{xx}\sigma_x + t_{xy}\sigma_y + t_{xz}\sigma_z)}{c_1}, A_{2}=\frac{(t_{yx}\sigma_x + t_{yy}\sigma_y + t_{yz}\sigma_z)}{c_2}, A_{3} =\frac{(t_{zx}\sigma_x + t_{zy}\sigma_y + t_{zz}\sigma_z)}{c_3}
\end{eqnarray}

%$A_1$, $A_2$, and $A_3$ as  respectively. Note that $A_i^{2}=\openone_2\ \forall i=\{1,2,3\}$. 
with $c_1 = \sqrt{t_{xx}^2 + t_{xy}^2 + t_{xz}^2}$, $c_2 = \sqrt{t_{yx}^2 + t_{yy}^2 + t_{yz}^2}$, and $c_3 = \sqrt{t_{zx}^2 + t_{zy}^2 + t_{zz}^2}$. 

Since for pure state  $\rho^2-\rho=0$, we get
\begin{eqnarray}
    &&(c_1^2+c_2^2+c_3^2-3)\openone_4+\sigma_x\otimes (-2c_1 A_1+ic_2 c_3 (A_2 A_3-A_3 A_2))+\sigma_y\otimes (-2c_2 A_2+ic_1 c_3 (A_3 A_1-A_1 A_3))\nonumber\\
    &&+\sigma_y\otimes (-2c_3 A_3+ic_1 c_2 (A_1 A_2-A_2 A_1))=0
\end{eqnarray}
which gives the following conditions

\begin{minipage}{0.23\textwidth}
\begin{equation}
c_1^2 + c_2^2 + c_3^2 = 3
\label{e1}
\end{equation}
\end{minipage}\hfill
\begin{minipage}{0.23\textwidth}
\begin{equation}
c_1 A_1 = c_2 c_3 \, i \frac{[A_2, A_3]}{2}
\label{e2}
\end{equation}
\end{minipage}\hfill
\begin{minipage}{0.23\textwidth}
\begin{equation}
c_2 A_2 = c_1 c_3 \, i \frac{[A_3, A_1]}{2}
\label{e3}
\end{equation}
\end{minipage}\hfill
\begin{minipage}{0.23\textwidth}
\begin{equation}
c_3 A_3 = c_1 c_2 \, i \frac{[A_1, A_2]}{2}
\label{e4}
\end{equation}
\end{minipage}
\\

Note that the values of $c_1,c_2$ and $c_3$ should satisfy Eq.~(\ref{e1}) to (\ref{e4}). By taking $A_i=\Vec{n}_i.\vec{\sigma}$ with $|\vec{n_i}|=1$, we get $[A_i,A_j]=[\Vec{n}_i .\vec{\sigma}, \Vec{n}_j .\vec{\sigma}]=2\ i \ (\vec{n}_i\times\vec{n}_j),\forall i\neq j \in \{x,y,z\}$ that transforms Eq.~(\ref{e2}),(\ref{e3}), and (\ref{e4}) to
\begin{eqnarray}
    &&-(\vec{n}_2\cross \vec{n}_3)c_2 c_3=c_1 \vec{n}_1\quad\Rightarrow c_1^2 = -\vec{n}_1.(\vec{n}_2\cross \vec{n}_3) c_1c_2 c_3\label{e5}\\
     &&-(\vec{n}_3\cross \vec{n}_1)c_1 c_3=c_2 \vec{n}_2\quad\Rightarrow c_2^2 = -\vec{n}_2.(\vec{n}_3\cross \vec{n}_1) c_1c_2 c_3=-\vec{n}_1.(\vec{n}_2\cross \vec{n}_3) c_1c_2 c_3\label{e6}\\
      &&-(\vec{n}_1\cross \vec{n}_2)c_1 c_2=c_3\vec{n}_3\quad\Rightarrow c_3^2 = -\vec{n}_3.(\vec{n}_1\cross \vec{n}_2) c_1c_2 c_3=-\vec{n}_1.(\vec{n}_2\cross \vec{n}_3) c_1c_2 c_3\label{e7}
\end{eqnarray}
Now, Eqs.~(\ref{e5}), (\ref{e6}) and (\ref{e7})  give $ c_1^2=c_2^2=c_3^2$ and therefore, from Eq. (\ref{e1}), we get 
\begin{eqnarray}
    -\vec{n}_1.(\vec{n}_2\cross \vec{n}_3) c_1c_2c_3=1\label{e8}
\end{eqnarray}
which consequently gives two conditions
\begin{eqnarray}\label{e9}
   c_1c_2c_3=\pm 1 \ \ and \ \ \vec{n}_1.(\vec{n}_2\cross \vec{n}_3)=\mp 1
\end{eqnarray}
This gives $c_1=\pm 1$, $c_2=\pm 1$ and $c_3=\pm 1$ satisfying $c_1c_2c_3=\pm 1$. Also, $\vec{n}_1$ is either parallel or anti-parallel to $\vec{n}_2 \cross \vec{n}_3$, and similarly, $\vec{n}_2$ and $\vec{n}_3$ are aligned either parallel or anti-parallel to $\vec{n}_3 \cross \vec{n}_1$ and $\vec{n}_1 \cross \vec{n}_2$, respectively. Consequently, Eqs. (\ref{e2}), (\ref{e3}), and (\ref{e4}) lead to 
\begin{eqnarray}\label{e10}
    [A_i,A_j]_{i\neq j}= 2i\epsilon_{ijk}  A_k
\end{eqnarray}
 which is satisfied by three mutually anti-commuting qubit observables. That is $A_1$, $A_2$ and $A_3$ are mutually anti-commuting qubit operators.

%By taking the squared norm we get $|\vec{n}_i|^2|\vec{n}_j|^2-(\vec{n}_i.\vec{n}_j)^2=1\Rightarrow 1-(\vec{n}_i.\vec{n}_j)^2=1$\begin{eqnarray} |\vec{n}_i\times\vec{n}_j|^2=|-\Vec{n}_k|^2 \quad (\text{Taking the squared norm in both-side})\Rightarrow|\vec{n}_i|^2|\vec{n}_j|^2-(\vec{n}_i.\vec{n}_j)^2 \end{eqnarray}Therefore, we get $\vec{n}_i.\vec{n}_j=0,\quad \forall i\neq j$Now, using $\vec{n}_i.\vec{n}_j=0, \forall i\neq j$ we derive the anti-commuting relation as\begin{eqnarray} \{A_i,A_j\}_{i\neq j}=\{\Vec{n}_i .\vec{\sigma}, \Vec{n}_j .\vec{\sigma}\}=2(\Vec{n}_i.\Vec{n}_j) \openone_2=0\end{eqnarray}}Again Solving Eq.~(\ref{e1}) to (\ref{e4}) we get   $c_1=\pm1, c_2=\pm 1$ and $c_3=\pm1$. In our scenario, the shared entangled state is $\rho_{000} = \frac{1}{4}\big(\openone_4 - M_1 - M_2 - M_3\big)$. Here, comparing this with Eq.~(\ref{gnst}) we get \begin{eqnarray} M_1=\mp\sigma_x\otimes A_1;M_2=\mp\sigma_y\otimes A_2; M_3=\mp\sigma_z\otimes A_3\quad \text{(one can choose other combinations also)}\end{eqnarray}

Next, by choosing $c_1=c_2=c_3=-1$, the shared entangled state $\rho$ in Eq.~(\ref{gnst}) can be written as 
\begin{eqnarray}
    \rho = \frac{1}{4}\big(\openone_4 - \sigma_x\otimes A_1 - \sigma_y\otimes A_2 - \sigma_z\otimes A_3\big)
\end{eqnarray}

Hence, one can always use a global unitary to get the initial stats as
\begin{eqnarray}\label{initials}
    \rho_{000}=U \rho \ U^{\dagger}=\frac{1}{4}(\openone_4-Q_1\otimes Q_2 - P_1\otimes P_2 - R_1\otimes R_2)
\end{eqnarray}
where $\{P_1, Q_1, R_1\}$ and $\{P_2, Q_2,R_2\}$ are the sets of mutually anti-commuting qubit operators. This provides our choice of two-qubit operators expressed as a tensor product of observables is in the most general form given in Eq.~(\ref{entan}) of the main text.\\

%%%%%%%%%%%%%%%%%%%%%%%%%%%%%%%%%%%%%%%%%%%%%%%%%%%%%%%%%%%%%%%%%%%%%%%%%%%%%%%%%%%%%%%%%%%%%%%%%%
\section{Construction of Unitaries from the self-testing condition}\label{app:unitary}
From Corollary \ref{coro1} and Proposition \ref{prop1}, we know the states satisfies the following conditions,
\begin{eqnarray}\label{state basis}
&&\rho_{0000}+\rho_{1111}+\rho_{0011}+\rho_{1100}=\openone_4;\ \ \ \  \rho_{0001}+\rho_{1110}+\rho_{0010}+\rho_{1101}=\openone_4; \\
	&&\rho_{0101}+\rho_{1010}+\rho_{0110}+\rho_{1001}=\openone_4;\ \ \ \ 
	\rho_{0100}+\rho_{1011}+\rho_{0111}+\rho_{1000}=\openone_4.
\end{eqnarray}
and the shared states are maximally entangled. Hence, we can write the forms of state in the following way, which will satisfy the Eq.~(\ref{state basis})
\begin{eqnarray}
    &&\rho_{0000}=\frac{\openone_4-u-v-w}{4};\quad \rho_{1111}=\frac{\openone_4+u+v-w}{4};\quad \rho_{0011}=\frac{\openone_4+u-v+w}{4};\quad 
    \rho_{1100}=\frac{\openone_4-u+v+w}{4};\label{s1} \\
    &&\rho_{0001}=\frac{\openone_4+d-e-f}{4};\quad \rho_{1110}=\frac{\openone_4-d+e-f}{4};\quad \rho_{0010}=\frac{\openone_4-d-e+f}{4};\quad  \rho_{1101}=\frac{\openone_4+d+e+f}{4};\label{s2} \\
    &&\rho_{0100}=\frac{\openone_4-x-y+z}{4};\quad \rho_{0111}=\frac{\openone_4+x+y+z}{4};\quad \rho_{1011}=\frac{\openone_4+x-y-z}{4};\quad\rho_{1000}=\frac{\openone_4-x+y-z}{4};  \label{s3}\\
    &&\rho_{0101}=\frac{\openone_4-p-q-r}{4};\quad \rho_{0110}=\frac{\openone_4+p+q-r}{4};\quad \rho_{1010}=\frac{\openone_4+p-q+r}{4};\quad \rho_{1001}=\frac{\openone_4-p+q+r}{4}\label{s4}; 
\end{eqnarray}
where $\{u,v,w\},\{d,e,f\},\{x,y,z\}$, and $\{p,q,r\}$ mutually anti-commuting. The initial state shared by Alice and Bob is a maximally entangled state of the form     \begin{eqnarray}
\label{ent}
    \rho=\frac{1}{4}\left(\openone_4-Q_1\otimes Q_2-P_1\otimes P_2-R_1\otimes R_2\right)
    \end{eqnarray}
as proved in Appendix.~\ref{app:maxent}. Now, comparing $\rho_{0000}$ from Eq.~(\ref{s1}) to Eq.~(\ref{ent}) we get
\begin{eqnarray}\label{QPR}
    u=Q_1\otimes Q_2, \quad v=P_1\otimes P_2, \quad w=R_1\otimes R_2
\end{eqnarray}
where $[u,v]=[v,w]=[w,a]=0,$ with $u=\pm vw$. The qubit operators $P_1$, $Q_1$ and $R_1$ are mutually anti-commuting and the same holds for $P_2$, $Q_2$ and $R_2$. 

Without loss of generality, we consider that Alice and Bob initially share the state $\rho\equiv\rho_{000}$, so that $U_{0000}=\openone_2$. Then 
 \begin{eqnarray}\label{Uijk}
    \rho_{x}=(U^\dagger_{x}\otimes\openone_2) \ \rho_{0000} \ (U_{x}\otimes\openone_2), \quad \forall x\in \{0,1\}^4
\end{eqnarray}
Let us now generate $\rho_{1100}$ from $\rho_{0000}$, i.e., $\rho_{1100}=U^{\dagger}_{1100} \ \rho_{0000}U_{1100}$. This requires $U_{1100}^{\dagger}Q_1 U_{1100}=Q_1$, $U_{1100}^{\dagger} P_1 U_{1100}=-P_1$ and $U_{1100}^{\dagger} R_1 U_{1100}=-R_1$. This in turn gives $U_{1100}=i\ Q_1$. A similar derivation gives $U_{0011}=-i \ P_1$ and $U_{1111}=-i \ R_1$. 

Now, to generate the states in Eq. (\ref{s2}), we simply need to find a grand unitary $U^1_G$ that transforms the basis in Eq. (\ref{s1}) to Eq. (\ref{s2}), i.e., 
\begin{eqnarray}
{U^1_G}^\dagger\{\rho_{0000},\rho_{1111},\rho_{0011},\rho_{1100}\} U^1_G \rightarrow \{\rho_{0001},\rho_{1110},\rho_{0010},\rho_{1101}\}
\end{eqnarray}
which require 
\begin{eqnarray}\label{gu1}
({U^1_{G}}^\dagger\otimes\openone_2)u(U^1_{G}\otimes\openone_2)=-d;\quad ({U^1_{G}}^\dagger\otimes\openone_2)v(U^1_{G}\otimes\openone_2)=e;\quad ({U^1_{G}}^\dagger\otimes\openone_2)w(U^1_{G}\otimes\openone_2)=f
\end{eqnarray}
Now, after putting Eq.~(\ref{s1}) and (\ref{s2}) in Table.~\ref{tab1} we get the following result.
\begin{eqnarray}
    \Tr[ve]=4; \quad\Tr[ud]=\Tr[ue]=\Tr[uf]=\Tr[vd]=\Tr[vf]=\Tr[wd]=\Tr[we]=\Tr[wf]=0\label{tr01}
\end{eqnarray}
In the two qubit scenario $\Tr[ve]=4$. This is only possible when $v=e$, i.e., $e=P_1\otimes P_2$. Since the grand unitary($U_G^1$) operates solely on the first qubit, we can conclude that the second qubits of $d$ and $f$ are identical to those of $u$ and $w$, respectively. Hence, without loss of generality, we can write
\begin{eqnarray}
    d=(\alpha_1 Q_1+\beta_1 P_1 + \gamma_1 R_1)\otimes Q_2; \quad  f=(\alpha_2 Q_1+\beta_2 P_1 + \gamma_2 R_1)\otimes R_2; \quad\forall \{||\alpha_i||,||\beta_i||,||\gamma_i||\}\in[0,1]
\end{eqnarray}
Again we know from Eq.~(\ref{tr01}) that $\Tr[ud]=\Tr[wf]=0$ implies $\alpha_1=\gamma_2=0$, which gives $ d=(\beta_1 P_1 + \gamma_1 R_1)\otimes Q_2; \quad  f=(\alpha_2 Q_1+\beta_2 P_1 )\otimes R_2$. Further we know that
\begin{eqnarray}
    [d,e]=0\Rightarrow \beta_1=0;\quad [e,f]=0 \Rightarrow \beta_2=0
\end{eqnarray}
Hence, $d=R_1\otimes Q_2$ and $f=Q_1\otimes R_2$. By avoiding the second qubit in Eq.~(\ref{gu1}), we can write
\begin{eqnarray}\label{gu1n}
    &&{U^1_{G}}^\dagger Q_1U^1_{G}=-R_1;\quad {U^1_{G}}^\dagger P_1U^1_{G}=P_1;\quad {U^1_{G}}^\dagger R_1 U^1_{G}=Q_1
\end{eqnarray}
Thus, the general form of $U^1_{G}$ can be written as 
\begin{equation}\label{ug1}
U^1_{G}=p_1\openone_2+q_1 Q_1+r_1 P_1+s_1 R_1
\end{equation}where $\{||p_1||,||q_1||,||r_1||,||s_1||\}\in[0,1]$. Putting Eq.~(\ref{gu1n}) into Eq. (\ref{ug1}), we get $p_1=-\frac{1}{\sqrt{2}},q_1=0,r_1=-\frac{i}{\sqrt{2}}$ and $ s_1=0$. Hence, the grand unitary is derived as follows
\begin{eqnarray}
    U^1_{G}=-\frac{\openone_2+i \ P_1}{\sqrt{2}}\equiv \frac{-\openone_2+U_{0011}}{\sqrt{2}}
\end{eqnarray}
Similarly, to generate the states in Eq.~(\ref{s3}), we simply need to find a grand unitary $U^2_G$ that transforms the basis in Eq. (\ref{s1}) to Eq. (\ref{s3}), i.e., 
\begin{eqnarray}
{U^2_G}^\dagger\{\rho_{0000},\rho_{1111},\rho_{0011},\rho_{1100}\} U^2_G \rightarrow \{\rho_{0100},\rho_{0111},\rho_{1011},\rho_{1000}\}
\end{eqnarray}
which require 
\begin{eqnarray}\label{gu2}
({U^2_{G}}^\dagger\otimes\openone_2)u(U^2_{G}\otimes\openone_2)=x;\quad ({U^2_{G}}^\dagger\otimes\openone_2)v(U^2_{G}\otimes\openone_2)=y;\quad ({U^2_{G}}^\dagger\otimes\openone_2)w(U^2_{G}\otimes\openone_2)=- z
\end{eqnarray}
Now, after putting Eq.~(\ref{s1}) and (\ref{s3}) in Table.~\ref{tab1} we get the following relations
\begin{eqnarray}
    \Tr[ux]=4; \quad\Tr[uy]=\Tr[uz]=\Tr[vx]=\Tr[vy]=\Tr[vz]=\Tr[wx]=\Tr[wy]=\Tr[wz]=0\label{tr02}
\end{eqnarray}
In two qubit scenario $\Tr[ux]=4$ it is only possible when $u=x$, i.e., $x=Q_1\otimes Q_2$. As the grand unitary $U_G^2$ acts only on the first qubit, the second qubit of $y$ and $z$ must coincide with those of $v$ and $w$, respectively. Hence, we can write
\begin{eqnarray}
    y=(\alpha_3 Q_1+\beta_3 P_1 + \gamma_3 R_1)\otimes P_2; \quad  z=(\alpha_4 Q_1+\beta_4 P_1 + \gamma_4 R_1)\otimes R_2; \quad\forall \{||\alpha_i||,||\beta_i||,||\gamma_i||\}\in[0,1]
\end{eqnarray}
Again, we know from Eq.~(\ref{tr02}) that $\Tr[vy]=\Tr[wz]=0$ implies $\beta_3=\gamma_4=0$, which gives $ y=(\alpha_3 Q_1 + \gamma_3 R_1)\otimes P_2; \quad  z=(\alpha_4 Q_1+\beta_4 P_1 )\otimes R_2$. In addition, we know that
\begin{eqnarray}
    [x,y]=0\Rightarrow \alpha_3=0;\quad [x,z]=0 \Rightarrow \alpha_4=0
\end{eqnarray}
Hence, $y=R_1\otimes P_2$ and $z=P_1\otimes R_2$. Hence, by avoiding the second qubit in Eq.~(\ref{gu2}), we can write
\begin{eqnarray}\label{gu2n}
    &&{U^2_{G}}^\dagger Q_1U^2_{G}=Q_1;\quad {U^2_{G}}^\dagger P_1U^2_{G}=R_1;\quad {U^2_{G}}^\dagger R_1 U^2_{G}=-P_1
\end{eqnarray}
 The general form of $U^2_{G}$ can be written as 
\begin{equation}\label{ug2}
U^2_{G}=p_2\openone_2+q_2 Q_1+r_2 P_1+s_2 R_1
\end{equation}where $\{||p_2||,||q_2||,||r_2||,||s_2||\}\in[0,1]$. Putting Eq.~(\ref{gu2n}) into Eq. (\ref{ug2}), we get $p_2=\frac{1}{\sqrt{2}},q_2=\frac{i}{\sqrt{2}},r_2=0$ and $ s_2=0$. Hence, the grand unitary is derived as follows
\begin{eqnarray}
    U^2_{G}=\frac{\openone_2+i \ Q_1}{\sqrt{2}}\equiv \frac{\openone_2+U_{1100}}{\sqrt{2}}
\end{eqnarray}
Similarly, to generate the states in Eq.~(\ref{s4}), we simply need to find a grand unitary $U^3_G$ that transforms the basis in Eq. (\ref{s1}) to Eq. (\ref{s4}), i.e., 
\begin{eqnarray}\label{U3rho}
{U^3_G}^\dagger\{\rho_{0000},\rho_{1111},\rho_{0011},\rho_{1100}\} U^3_G \rightarrow \{\rho_{0101},\rho_{0110},\rho_{1010},\rho_{1001}\}
\end{eqnarray}
which require 
\begin{eqnarray}\label{gu3}
({U^3_{G}}^\dagger\otimes\openone_2)u(U^3_{G}\otimes\openone_2)=p;\quad ({U^3_{G}}^\dagger\otimes\openone_2v(U^3_{G}\otimes\openone_2)=q;\quad ({U^3_{G}}^\dagger\otimes\openone_2)w(U^3_{G}\otimes\openone_2)=r
\end{eqnarray}
We substitute Eqs.~(\ref{s2}) and (\ref{s4}) into Table~\ref{tab1} and obtain
\begin{eqnarray}\label{cpx}
    \Tr[dp]=-4;\Tr[dq]=\Tr[dr]=\Tr[ep]=\Tr[eq]=\Tr[er]=\Tr[fp]=\Tr[fq]=\Tr[fr]=0
\end{eqnarray}
This implies $p=-d=-R_1\otimes Q_2$. Similarly putting Eq.~(\ref{s3}) and (\ref{s4}) in Table.~\ref{tab1} we get
\begin{eqnarray}
    \Tr[zq]=-4;\Tr[xp]=\Tr[xq]=\Tr[xr]=\Tr[yp]=\Tr[yq]=\Tr[yr]=\Tr[zp]=\Tr[zr]=0
\end{eqnarray}
This further implies $r=-z=-P_1\otimes R_2$.\\
 As the grand unitary($U_G^3$) operates solely on the first qubit, we can write
\begin{eqnarray}
    q=(\alpha_5 Q_1+\beta_5 P_1 + \gamma_5 R_1)\otimes P_2; \quad\forall \{||\alpha_i||,||\beta_i||,||\gamma_i||\}\in[0,1]
\end{eqnarray}
From Eq.~(\ref{cpx}), we observe that $\Tr[eq]=0$ leads to $\beta_5=0$, which in turn yields $ q=(\alpha_5 Q_1 + \gamma_5 R_1)\otimes P_2$. Moreover, we have 
\begin{eqnarray}
    [p,q]=0\Rightarrow \gamma_5=0
\end{eqnarray}
 Hence, $q=Q_1\otimes P_2$. As the grand unitary acts only on the first qubit, the second qubit in Eq.~(\ref{gu3}) can be dropped, yielding
\begin{eqnarray}\label{gu3n}
    &&{U^3_{G}}^\dagger Q_1U^3_{G}=-R_1;\quad {U^3_{G}}^\dagger P_1U^3_{G}=Q_1;\quad {U^3_{G}}^\dagger R_1 U^3_{G}=-P_1
\end{eqnarray}
 Thus, the general form of $U^3_{G}$ can be written as 
\begin{equation}\label{ug3}
U^3_{G}=p_3\openone_2+q_3 Q_1+r_3 P_1+s_3 R_1
\end{equation}where $\{||p_3||,||q_3||,||r_3||,||s_3||\}\in[0,1]$. Putting Eq.~(\ref{gu3n}) into Eq. (\ref{ug3}), we get $p_3=\frac{1}{2},q_3=\frac{i}{2},r_3=\frac{i}{2}$ and $ s_3=-\frac{i}{2}$. Hence, the grand unitary is derived as follows
\begin{eqnarray}
    U^3_G=\frac{\openone_2 +i\ (Q_1+P_1-R_1)}{2}\equiv \frac{\openone_2+U_{1100}-U_{0011}+U_{1111}}{2}
\end{eqnarray}
%%%%%%%%%%%%%%%%%%%%%%%%%%%%%%%%%%%%%%%%%%%%%%%%%%%%%%%%%%%%%%%%%%%%%%%%%%%%%%%%%%%%%%%%%%%%%%%%%%%%%%%%%%%%%%%%%%%%%%%%%%%%%%%%%%%%%%%%%%%%%%%%%%%%%%%%%%%%%%%%%%
\section{Detailed derivation of optimal quantum success probability for \texorpdfstring{$4 \to 2$}{4→2} entanglement-assisted PMRAC}
\label{optm42}
In the $4\rightarrow 2$ entanglement-assisted PMRAC game, the quantum success probability from Eq. (\ref{VQSP}) in the main text can be explicitly written as
\begin{eqnarray}
	\label{B5}
	\nonumber\mathcal{P}^{4\rightarrow2}_{Q}=\dfrac{1}{2}+\dfrac{1}{128}&\Tr&[(\rho_{0000}-\rho_{1111}+\rho_{0011}-\rho_{1100}+\rho_{0101}-\rho_{1010}+\rho_{0110}-\rho_{1001})B_{4,1}\\
	\nonumber
	&+&(\rho_{0000}-\rho_{1111}+\rho_{0011}-\rho_{1100}-\rho_{0101}+\rho_{1010}-\rho_{0110}+\rho_{1001})B_{4,2}\\
	\nonumber
	&+&
	(\rho_{0000}-\rho_{1111}-\rho_{0011}+\rho_{1100}+\rho_{0101}-\rho_{1010}-\rho_{0110}+\rho_{1001})B_{4,3}\\
	\nonumber
	&+&
	(\rho_{0000}-\rho_{1111}-\rho_{0011}+\rho_{1100}-\rho_{0101}+\rho_{1010}+\rho_{0110}-\rho_{1001})B_{4,4}\\
	&+&
	(\rho_{0001}-\rho_{1110}+\rho_{0010}-\rho_{1101}+\rho_{0100}-\rho_{1011}+\rho_{0111}-\rho_{1000})B_{4,1}\\
	\nonumber
	&+&(\rho_{0001}-\rho_{1110}+\rho_{0010}-\rho_{1101}-\rho_{0100}+\rho_{1011}-\rho_{0111}+\rho_{1000})B_{4,2}\\
	\nonumber
	&+&
	(\rho_{0001}-\rho_{1110}-\rho_{0010}+\rho_{1101}+\rho_{0100}-\rho_{1011}-\rho_{0111}+\rho_{1000})B_{4,3}\\
	\nonumber
	&+&
	(-\rho_{0001}+\rho_{1110}+\rho_{0010}-\rho_{1101}+\rho_{0100}-\rho_{1011}-\rho_{0111}+\rho_{1000})B_{4,4}]
\end{eqnarray}
Let us consider first term in Eq. (\ref{B5}). To attain its maximal value, the states $\rho_{0000},\rho_{1111},\rho_{0011},\rho_{1100},\rho_{0101},\rho_{1010},\rho_{0110}$, and $\rho_{1001}$ must all be eigenstates of $B_{4,1}$. This requirement implies that these states form a set of mutually orthogonal pure states and satisfy 
\begin{eqnarray}
    &&\rho_{0000}+\rho_{1111}+\rho_{0011}+\rho_{1100}+\rho_{0101}+\rho_{1010}+\rho_{0110}+\rho_{1001}
=\openone_8\\
&&\rho_{0001}+\rho_{1110}+\rho_{0010}+\rho_{1101}+\rho_{0100}+\rho_{1011}+\rho_{0111}+\rho_{1000}=\openone_8
\end{eqnarray}

An analogous reasoning applies to the remaining terms in Eq. (\ref{B5}). Let us assume 
\begin{eqnarray}
	\nonumber
	N^1_{4,1}&=&\rho_{0000}-\rho_{1111}+\rho_{0011}-\rho_{1100}+\rho_{0101}-\rho_{1010}+\rho_{0110}-\rho_{1001}\\
	\nonumber
	N^2_{4,1}&=&\rho_{0000}-\rho_{1111}+\rho_{0011}-\rho_{1100}-\rho_{0101}+\rho_{1010}-\rho_{0110}+\rho_{1001}\\
	\nonumber
	N^3_{4,1}&=&
	\rho_{0000}-\rho_{1111}-\rho_{0011}+\rho_{1100}+\rho_{0101}-\rho_{1010}-\rho_{0110}+\rho_{1001}\\
	\nonumber
	N^4_{4,1}&=&
	\rho_{0000}-\rho_{1111}-\rho_{0011}+\rho_{1100}-\rho_{0101}+\rho_{1010}+\rho_{0110}-\rho_{1001}\\
	N^1_{4,2}&=&
	\rho_{0001}-\rho_{1110}+\rho_{0010}-\rho_{1101}+\rho_{0100}-\rho_{1011}+\rho_{0111}-\rho_{1000}\\
	\nonumber
	N^2_{4,2}&=&\rho_{0001}-\rho_{1110}+\rho_{0010}-\rho_{1101}-\rho_{0100}+\rho_{1011}-\rho_{0111}+\rho_{1000}\\
	\nonumber
	N^3_{4,2}&=&
	\rho_{0001}-\rho_{1110}-\rho_{0010}+\rho_{1101}+\rho_{0100}-\rho_{1011}-\rho_{0111}+\rho_{1000}\\
	\nonumber
	N^4_{4,2}&=&-\rho_{0001}+\rho_{1110}+\rho_{0010}-\rho_{1101}+\rho_{0100}-\rho_{1011}-\rho_{0111}+\rho_{1000}
\end{eqnarray}

Then the quantum success probability in Eq. (\ref{B5}) can be casted as
\begin{eqnarray}\label{QSP42aap}
	\mathcal{P}^{4\rightarrow2}_{Q}&=&\dfrac{1}{2}+\dfrac{\mathscr{I}_{4\rightarrow2}}{128}.
\end{eqnarray}
where the function $\mathscr{I}^{4\rightarrow2}$ is given by 
\begin{eqnarray}
\mathscr{I}^{4\rightarrow2}=\Tr[(N^1_{4,1}+N^1_{4,2}) B_{4,1}+ (N^2_{4,1}+N^2_{4,2}) B_{4,2}+ (N^3_{4,1}+N^3_{4,2}) B_{4,3}+( N^4_{4,1}+N^4_{4,2}) B_{4,4}]
\end{eqnarray}

Now, we derive the QM maximum value of $\mathscr{I}^{4\rightarrow2}$. The correlation function in Eq.~(\ref{p1}) from the main text can be written as
\begin{eqnarray}
\mathscr{I}^{4\rightarrow2}&=&\Tr[\sum\limits_{y=1}^{4}\mu_y\mathscr{N}_{4,y} B_{4,y} ]\label{wab43aap}
\end{eqnarray}
where $\mathscr{N}_{y}$ and $\omega_y$'s are defined as following
\begin{eqnarray}
     &&\mathscr{N}_{4,1} = \frac{N^1_{4,1}+N^1_{4,2}}{\mu_1} \ ; \ \ \mathscr{N}_{4,2} = \frac{N^2_{4,1}+N^2_{4,2}}{\mu_2} \ ; \ \ \mathscr{N}_{4,3} = \frac{N^3_{4,1}+N^3_{4,2}}{\mu_3} \ ; \ \ \mathscr{N}_{4,4} = \frac{ N^4_{4,1}+N^4_{4,2}}{\mu_4}; \ \ \mu_y=||N^j_{4,1}+N^j_{4,2}||\ \forall j=y. \nonumber \label{omegai43aap}
\end{eqnarray}
  such that $||\mathscr{N}_{4,y}||=1$ and $||\cdot||$ is the scaled Frobenious norm, given by $||\mathcal{O}||=\frac{1}{\sqrt{8}}\sqrt{Tr[\mathcal{O}^{\dagger}\mathcal{O}]}$.
  
  Interestingly, as $\mathscr{N}_{4,y}$ and $ B_{4,y}$ are dichotomic normalized observables, the maximum value of $\mathscr{N}_{4,y} B_{4,y}$ occurs only when $\mathscr{N}_{4,y}=B_{4,y}$, which further implies
\begin{equation}\label{optbn42aap}
\mathscr{I}^{4\rightarrow2}_{opt}= \Tr[\max\left(\sum\limits_{y=1}^{4}\mu_{y} \right)\openone_8]=8\max\left(\sum\limits_{y=1}^{4}\mu_{y} \right)
\end{equation}

Now, by evaluating $\mu_1$ from Eq.~(\ref{omegai43aap}), we get
\begin{eqnarray}
\mu_1&=&\frac{1}{\sqrt{8}}\sqrt{\Tr[2\openone_8+\{N^1_{4,1}, N^1_{4,2}\}\openone_8]}=\sqrt{2+\Tr[N^1_{4,1} N^1_{4,2}]/4}
\end{eqnarray}
Similarly,
\begin{eqnarray}
&&\mu_{2}=\sqrt{2  + \Tr[N^2_{4,1}N^2_{4,2}]/4};\quad
\mu_{3}=\sqrt{2+\Tr[N^3_{4,1}N^3_{4,2}]/4};\quad
\mu_{4}=\sqrt{2+\Tr[N^4_{4,1}N^4_{4,2}]/4} \label{omegai143aap}
\end{eqnarray}

Next, using the convex inequality $\sum\limits_{y=1}^{n}\mu_y\leq \sqrt{n \sum\limits_{y=1}^{n} (\mu_{y})^{2}}$ where equality holds only when $\mu_y=\mu_{y}, \forall y\neq y'\in [n]$, from Eqs.~(\ref{optbn42aap}), we get
\begin{eqnarray}\label{A143aap}
\mathscr{I}^{4\rightarrow2}&\leq&8\qty( \max \sqrt{4 \ \qty(\mu_1^2+\mu_2^2+\mu_3^2+\mu_4^2)} )\nonumber \\
&\leq&8\max\Bigg[4 \ \Big(8+\Tr[N^1_{4,1} N^1_{4,2}+N^2_{4,1},N^2_{4,2} +N^3_{4,1},N^3_{4,2}+N^4_{4,1}, N^4_{4,2}]/4\Big)\Bigg]^\frac{1}{2}
\end{eqnarray}
Rewriting Alice's anti-commuting relations in terms of $\rho_x$ and simplifying, we get
\begin{eqnarray} \nonumber
\mathscr{I}^{4\rightarrow2}\leq&&8\max \Bigg[4 \ \Big[12-\Big( \Tr[\rho_{0000}( \rho_{1110}+\rho_{1101}+\rho_{1011}+\rho_{0111})]+\Tr[\rho_{1111}(\rho_{0001}+\rho_{0010}+\rho_{0100}+\rho_{1000})]\nonumber\\
&&\hspace{0.8cm}+\Tr[\rho_{0011}( \rho_{1110}+\rho_{1101}+\rho_{0100}+\rho_{1000})]+ \Tr[\rho_{1100}(\rho_{0001}+\rho_{0010}+\rho_{1011}+\rho_{0111})]\nonumber\\
&&\hspace{0.8cm}+\Tr[\rho_{0101}(\rho_{1110}+\rho_{0010}+\rho_{1011}+\rho_{1000})]+\Tr[\rho_{1010}(\rho_{0001}+\rho_{1101}+\rho_{0100}+\rho_{0111})]\\
&&\hspace{0.8cm}+\Tr[\rho_{0110}(\rho_{0001}+\rho_{1101}+\rho_{1011}+\rho_{1000})]+\Tr[\rho_{1001}(\rho_{1110}+\rho_{0010}+\rho_{0100}+\rho_{0111})]\Big)\Big]\Bigg]^\frac{1}{2}\nonumber
\end{eqnarray}
Thus, the maximum quantum value of the Bell functional $\mathscr{I}^{4\rightarrow 2}$ is 
\begin{equation}
    \mathscr{I}^{4\rightarrow 2}_{opt}=32\sqrt{3} \label{optbn143aap}
\end{equation} 
when $\Tr[\rho_{0000}( \rho_{1110}+\rho_{1101}+\rho_{1011}+\rho_{0111})]=\Tr[\rho_{1111}(\rho_{0001}+\rho_{0010}+\rho_{0100}+\rho_{1000})]=\Tr[\rho_{0011}( \rho_{1110}+\rho_{1101}+\rho_{0100}+\rho_{1000})]=\Tr[\rho_{1100}(\rho_{0001}+\rho_{0010}+\rho_{1011}+\rho_{0111})]=\Tr[\rho_{0101}(\rho_{1110}+\rho_{0010}+\rho_{1011}+\rho_{1000})]=\Tr[\rho_{1010}(\rho_{0001}+\rho_{1101}+\rho_{0100}+\rho_{0111})]=\Tr[\rho_{0110}(\rho_{0001}+\rho_{1101}+\rho_{1011}+\rho_{1000})]=\Tr[\rho_{1001}(\rho_{1110}+\rho_{0010}+\rho_{0100}+\rho_{0111})]=0$ as $\Tr[\rho_{i} \rho_{j}]\in \{0,1\}$.  Also, this gives $ \Tr[N^1_{4,1} N^1_{4,2}+N^2_{4,1},N^2_{4,2} +N^3_{4,1},N^3_{4,2}+N^4_{4,1}, N^4_{4,2}]=16$.

Hence, the quantum success probability from Eq.~(\ref{QSP42aap}) is derived as
\begin{equation}
\mathcal{P}^{4\rightarrow2}_Q=\dfrac{1}{2}+\dfrac{32\sqrt{3}}{128}=\dfrac{1}{2}+\frac{\sqrt{3}}{4} = 0.933
\end{equation}
which is Eq. (\ref{p4max}) in the main text.

\subsection{An illustrative \texorpdfstring{4$\rightarrow$2}{4→2} example of states and measurements that attains the optimal success probability}\label{app:42u}
\noindent
 Let, the joint state shared by Alice and Bob is a GHZ state of the following form,
{\small\begin{eqnarray}
\rho_{0000} = \frac{1}{8}\Big(
\openone_2 \otimes \openone_2 \otimes \openone_2
+ \sigma_z \otimes \sigma_z \otimes \openone_2
+ \sigma_z \otimes \openone_2 \otimes \sigma_z
+ \openone_2 \otimes \sigma_z \otimes \sigma_z
+ \sigma_x \otimes \sigma_x \otimes \sigma_x
- \sigma_x \otimes \sigma_y \otimes \sigma_y
- \sigma_y \otimes \sigma_x \otimes \sigma_y
- \sigma_y \otimes \sigma_y \otimes \sigma_x
\Big)\nonumber \\
\end{eqnarray}}
where Alice holds first two qubits while Bob holds one.

Alice performs local unitary operations $(U_{x\in[15]} = U_{x\in\{0,1\}^4})$ on her part of the system and subsequently transmits it to Bob. In this manner, with certain states belonging to the same basis, the first set of states can be generated as
{\small\begin{eqnarray}
    \bigl(\{U_{15},U_{12},U_{10},U_9,U_6,U_5,U_3,U_0\}^\dagger\otimes \openone_2\bigr)\,
    \rho_{0000}\,
    \bigl(\{U_{0},U_{3},U_{5},U_{6},U_{9},U_{10},U_{12},U_{15}\}\otimes \openone_2\bigr)
    \rightarrow \{\rho_{0000}, \rho_{0011},\rho_{0101},\rho_{0110}, \rho_{1001}, \rho_{1010}, \rho_{1100}, \rho_{1111}\}\, \nonumber \\
\end{eqnarray}}
where the relevant two–qubit unitaries are specified by
{\small\begin{eqnarray}
    &&U_0=\openone_2 \otimes \openone_2,\quad
    U_3=\sigma_z\otimes \openone_2,\quad
    U_5=\sigma_x\otimes \openone_2,\quad
    U_6=\sigma_x\otimes \sigma_z,\quad
    U_9=\openone_2\otimes\sigma_x,\quad
    U_{10}=\sigma_z\otimes\sigma_x,\quad
    U_{12}=\sigma_x\otimes\sigma_x,\quad
    U_{15}=\sigma_z\sigma_x\otimes\sigma_x\nonumber 
\end{eqnarray}}
Analogously, the second set of states is produced by
{\small\begin{eqnarray}
    \bigl(\{U_{14},U_{13},U_{11},U_8,U_7,U_4,U_2,U_1\}^\dagger\otimes\openone_2\bigr)\,
    \rho_{0000}\,
    \bigl(\{U_{1},U_{2},U_{4},U_{7},U_{8},U_{11},U_{13},U_{14}\}\otimes\openone_2\bigr)
    \rightarrow \{\rho_{0001}, \rho_{0010},\rho_{0100},\rho_{0111}, \rho_{1000}, \rho_{1011}, \rho_{1101}, \rho_{1110}\}\, .\nonumber \\
\end{eqnarray}}
The associated unitary operators are obtained in Python and are given by
\begin{eqnarray}
&&U_1 =
\begin{pmatrix}
0.6505 + 0.2909 i & 0.4710 + 0.1293 i & 0.1324 - 0.4858 i & 0.0015 - 0.0008 i \\
-0.0952 + 0.0799 i & 0.5104 - 0.2831 i & 0.1441 + 0.3768 i & -0.6573 + 0.2217 i \\
-0.4151 - 0.5518 i & 0.3163 + 0.2556 i & 0.4335 - 0.3934 i & 0.0377 + 0.1176 i \\
0.0029 + 0.0019 i & 0.1328 - 0.4876 i & 0.4721 + 0.1347 i & 0.2726 - 0.6552 i
\end{pmatrix}\\
&&U_2 =
\begin{pmatrix}
0.6857 - 0.1709 i & 0.4563 - 0.1945 i & -0.1975 - 0.4642 i & 0.0029 - 0.0012 i \\
0.2285 - 0.1949 i & 0.2998 + 0.0097 i & 0.3094 + 0.5641 i & -0.6258 + 0.1194 i \\
0.6006 - 0.2230 i & -0.6049 + 0.2175 i & 0.2270 + 0.1886 i & 0.2944 + 0.0516 i \\
-0.0026 - 0.0010 i & 0.1987 + 0.4597 i & -0.4555 + 0.1923 i & 0.1809 + 0.6870 i
\end{pmatrix}\\
&&U_4 =
\begin{pmatrix}
0.2888 - 0.4172 i & 0.0031 + 0.0008 i & 0.6356 + 0.3027 i & 0.4080 + 0.2836 i \\
0.3442 - 0.0981 i & -0.2874 - 0.6176 i & 0.0758 + 0.1610 i & -0.1680 - 0.5899 i \\
0.6016 - 0.0993 i & -0.0545 - 0.1681 i & -0.2190 - 0.6517 i & 0.0560 + 0.3481 i \\
0.4063 + 0.2784 i & -0.3066 + 0.6409 i & 0.0010 - 0.0034 i & 0.2863 - 0.4131 i
\end{pmatrix}\\
&&U_7 =
\begin{pmatrix}
-0.4944 - 0.1044 i & 0.0001 + 0.0009 i & 0.0371 - 0.7098 i & -0.0978 + 0.4795 i \\
-0.1400 + 0.0116 i & -0.1679 + 0.3799 i & -0.1098 + 0.5579 i & -0.2905 + 0.6326 i \\
-0.6818 + 0.1377 i & -0.1402 - 0.5530 i & 0.1880 + 0.3690 i & 0.0434 - 0.1314 i \\
0.0996 - 0.4808 i & -0.7069 - 0.0483 i & -0.0010 + 0.0002 i & 0.4969 + 0.1000 i
\end{pmatrix}\\
&&U_8 =
\begin{pmatrix}
0.4429 - 0.2267 i & -0.5788 + 0.4089 i & 0.0019 - 0.0004 i & -0.2304 - 0.4440 i \\
-0.3536 - 0.1243 i & 0.1655 + 0.0012 i & 0.6763 + 0.1107 i & 0.1555 - 0.5817 i \\
-0.3500 + 0.4906 i & -0.2989 + 0.6173 i & 0.0941 - 0.1346 i & 0.3194 + 0.1933 i \\
-0.2270 - 0.4441 i & -0.0014 + 0.0002 i & -0.4026 - 0.5843 i & 0.4423 - 0.2284 i
\end{pmatrix}\\
&&U_{11} =
\begin{pmatrix}
0.2891 - 0.3938 i & -0.3273 + 0.6328 i & -0.0015 + 0.0023 i & 0.4086 + 0.2947 i \\
0.5087 + 0.1420 i & 0.3735 + 0.3806 i & 0.3675 - 0.2714 i & -0.4751 - 0.0488 i \\
-0.4544 - 0.1442 i & -0.3214 + 0.3237 i & -0.3229 - 0.4296 i & -0.5196 - 0.0767 i \\
-0.4126 - 0.2915 i & -0.0019 + 0.0014 i & 0.6224 + 0.3393 i & -0.2829 + 0.4028 i
\end{pmatrix}\\
&&U_{11} =
\begin{pmatrix}
0.2891 - 0.3938 i & -0.3273 + 0.6328 i & -0.0015 + 0.0023 i & 0.4086 + 0.2947 i \\
0.5087 + 0.1420 i & 0.3735 + 0.3806 i & 0.3675 - 0.2714 i & -0.4751 - 0.0488 i \\
-0.4544 - 0.1442 i & -0.3214 + 0.3237 i & -0.3229 - 0.4296 i & -0.5196 - 0.0767 i \\
-0.4126 - 0.2915 i & -0.0019 + 0.0014 i & 0.6224 + 0.3393 i & -0.2829 + 0.4028 i
\end{pmatrix}\\
&&U_{13} =
\begin{pmatrix}
-0.0034 + 0.0038 i & -0.2448 + 0.4395 i & -0.4215 - 0.2303 i & 0.4143 - 0.5870 i \\
0.3400 + 0.5120 i & 0.3523 + 0.0875 i & 0.5725 - 0.2231 i & 0.1372 - 0.3067 i \\
0.2441 + 0.2243 i & 0.1747 + 0.5927 i & -0.2889 - 0.2325 i & -0.0970 + 0.6012 i \\
0.5817 + 0.4170 i & -0.4264 - 0.2270 i & -0.2445 + 0.4411 i & -0.0013 + 0.0087 i
\end{pmatrix}\\
&&U_{14} =
\begin{pmatrix}
-0.0008 + 0.0014 i & -0.4920 - 0.0800 i & 0.0765 - 0.4905 i & -0.6778 - 0.2137 i \\
-0.2808 - 0.1102 i & -0.1509 + 0.3202 i & -0.2182 - 0.5789 i & 0.5596 - 0.2964 i \\
-0.1949 - 0.6065 i & 0.4147 + 0.4539 i & 0.3496 + 0.0057 i & -0.2839 - 0.1153 i \\
-0.2192 + 0.6746 i & -0.0752 + 0.4926 i & 0.4918 + 0.0817 i & -0.0046 - 0.0030 i
\end{pmatrix}
\end{eqnarray}

Building on the optimization condition introduced in the main text, we can express the relevant observables in the following general form
 \begin{eqnarray} B_{1} &=& \frac{\sqrt{3}}{2}X_1 - \frac{12}{83}X_{14} - \frac{1}{7}X_{12} - \frac{12}{85}X_{13} + \frac{8}{57}X_{15}  + \frac{11}{94}X_{25} - \frac{8}{69}X_{27} + \frac{9}{79}X_{29} + \frac{7}{62}X_{31} + \frac{9}{80}X_{24} + \frac{9}{80}X_{28} + \frac{10}{89}X_{26}  \nonumber \\&&+ \frac{11}{100}X_{30} + \frac{2}{23}X_{39} + \frac{7}{81}X_{36} - \frac{7}{82}X_{45} - \frac{7}{82}X_{46} + \frac{7}{82}X_{40} - \frac{8}{95}X_{47} + \frac{8}{95}X_{44} + \frac{8}{95}X_{37} \end{eqnarray}

\begin{eqnarray} B_{2} &=& \frac{\sqrt{3}}{2}X_6 - \frac{14}{97}X_{12} + \frac{1}{7}X_{13} + \frac{1}{7}X_{14} + \frac{1}{7}X_{15}- \frac{5}{42}X_{18} - \frac{9}{76}X_{16} + \frac{2}{17}X_{22} + \frac{11}{94}X_{23} - \frac{7}{60}X_{17}  - \frac{5}{43}X_{19} + \frac{3}{26}X_{21} - \frac{4}{35}X_{20} \nonumber \\&&+ \frac{1}{11}X_{32} + \frac{9}{100}X_{33} + \frac{3}{34}X_{41} + \frac{8}{91}X_{38} + \frac{5}{57}X_{43} - \frac{7}{80}X_{34} + \frac{8}{93}X_{35}+ \frac{7}{82}X_{42} \end{eqnarray}

\begin{eqnarray} B_{3} &=& \frac{\sqrt{3}}{2}X_2 + \frac{4}{27}X_8 - \frac{13}{89}X_9 + \frac{7}{48}X_{11} - \frac{7}{48}X_{10} + \frac{5}{42}X_{17} - \frac{9}{76}X_{19} - \frac{2}{17}X_{16} + \frac{5}{43}X_{18} - \frac{11}{95}X_{24}+ \frac{11}{95}X_{26} + \frac{3}{26}X_{25} + \frac{4}{35}X_{27} \nonumber \\&&  + \frac{9}{98}X_{35} + \frac{7}{79}X_{34}- \frac{7}{79}X_{37} - \frac{5}{57}X_{33} + \frac{7}{81}X_{36} - \frac{8}{93}X_{39} + \frac{3}{35}X_{32}+ \frac{3}{35}X_{40} \end{eqnarray}

\begin{eqnarray}
B_{4} &=& 
- \frac{\sqrt{3}}{2}X_5 - \frac{5}{34}X_9 - \frac{7}{48}X_8 - \frac{10}{69}X_{11} - \frac{14}{97}X_{10} - \frac{1}{8}X_{21} - \frac{1}{8}X_{20} + \frac{1}{8}X_{23} - \frac{7}{61}X_{22} - \frac{9}{79}X_{31} + \frac{5}{44}X_{29} + \frac{11}{97}X_{30} - \frac{11}{97}X_{28}  \nonumber \\ && - \frac{3}{34}X_{42} + \frac{3}{34}X_{38} + \frac{2}{23}X_{44} + \frac{2}{23}X_{43} - \frac{4}{47}X_{41}+ \frac{5}{59}X_{45} - \frac{6}{71}X_{47} + \frac{8}{95}X_{46} 
\end{eqnarray}
where we defined the $X_{i\in[57]}$ operators in the following form
{\small\begin{eqnarray}
    &&X_0=\openone_2 \otimes \openone_2 \otimes \openone_2,X_1=\openone_2 \otimes \sigma_z \otimes \sigma_z,X_2=\openone_2 \otimes \sigma_z \otimes \sigma_z,X_3=\sigma_x \otimes \sigma_y \otimes \sigma_y, X_4=\sigma_y \otimes \sigma_x \otimes \sigma_y, X_5=\sigma_y \otimes \sigma_y \otimes \sigma_x, X_6=\sigma_z \otimes \openone_2 \otimes \sigma_z,\nonumber\\
    &&X_7=\sigma_z \otimes \sigma_z \otimes \openone_2, X_8=\sigma_x \otimes \sigma_y \otimes \sigma_x, X_9=\sigma_y \otimes \sigma_y \otimes \sigma_y, X_{10}=\sigma_x \otimes \sigma_x \otimes \sigma_y, X_{11}=\sigma_y \otimes \sigma_x \otimes \sigma_x, X_{12}=\sigma_x \otimes \sigma_y \otimes \sigma_z, X_{13}=\openone_2 \otimes \sigma_z \otimes \sigma_y,\nonumber\\
    && X_{14}=\sigma_z \otimes \openone_2 \otimes \sigma_y, X_{15}=\sigma_y \otimes \sigma_x \otimes \sigma_z, X_{16}=\sigma_x \otimes \openone_2 \otimes \sigma_z, X_{17}=\sigma_z \otimes \sigma_y \otimes \sigma_y, X_{18}=\sigma_x \otimes \sigma_z \otimes \openone_2, X_{19}=\sigma_z \otimes \sigma_x \otimes \sigma_x, X_{20}=\sigma_z \otimes \sigma_x \otimes \sigma_y,\nonumber\\ &&X_{21}=\sigma_z \otimes \sigma_y \otimes \sigma_x, X_{22}=\sigma_y \otimes \openone_2 \otimes \sigma_z, X_{23}=\sigma_y \otimes \openone_2 \otimes \sigma_z, X_{24}=\openone_2 \otimes \sigma_y \otimes \openone_2, X_{25}=\sigma_z \otimes \sigma_y \otimes \sigma_z, X_{26}=\sigma_y \otimes \openone_2 \otimes \sigma_x, X_{27}=\sigma_x \otimes \openone_2 \otimes \sigma_y, \nonumber\\
    &&X_{28}=\sigma_x \otimes \openone_2 \otimes \sigma_x, X_{29}=\sigma_y \otimes \openone_2 \otimes \sigma_y, X_{30}=\openone_2 \otimes \sigma_x \otimes \openone_2, X_{31}=\sigma_z \otimes \sigma_x \otimes \sigma_z, X_{32}=\openone_2 \otimes \sigma_y \otimes \sigma_x, X_{33}=\openone_2 \otimes \sigma_y \otimes \sigma_x, X_{34}=\openone_2 \otimes \sigma_x \otimes \sigma_y,\nonumber\\
    &&X_{35}=\sigma_y \otimes \sigma_z \otimes \sigma_z, X_{36}=\openone_2 \otimes \sigma_x \otimes \sigma_z, X_{37}=\sigma_y \otimes \sigma_z \otimes \sigma_y, X_{38}=\sigma_x \otimes \openone_2 \otimes \openone_2, X_{39}=\sigma_z \otimes \sigma_x \otimes \openone_2, X_{40}=\sigma_x \otimes \sigma_z \otimes \sigma_x, X_{41}=\openone_2 \otimes \sigma_x \otimes \sigma_x\nonumber\\
    &&X_{42}=\sigma_x \otimes \sigma_z \otimes \sigma_z, X_{43}=\openone_2 \otimes \sigma_y \otimes \sigma_y, X_{44}=\sigma_x \otimes \sigma_z \otimes \sigma_y, X_{45}=\openone_2 \otimes \sigma_y \otimes \sigma_z, X_{46}=\sigma_y \otimes \sigma_z \otimes \sigma_x, X_{47}=\sigma_z \otimes \sigma_y \otimes \openone_2, X_{48}=\sigma_z \otimes \sigma_z \otimes \sigma_x,\nonumber\\
    &&X_{49}=\sigma_x \otimes \sigma_y \otimes \openone_2, X_{50}=\sigma_x \otimes \sigma_x \otimes \sigma_z, X_{51}=\sigma_z \otimes \sigma_z \otimes \sigma_z, X_{52}=\sigma_y \otimes \sigma_y \otimes \sigma_z, X_{53}=\sigma_y \otimes \sigma_x \otimes \openone_2, X_{54}=\openone_2 \otimes \sigma_z \otimes \sigma_x, X_{55}=\sigma_z \otimes \openone_2 \otimes \sigma_x\nonumber\\
    &&X_{56}=\sigma_y \otimes \sigma_y \otimes \openone_2, X_{57}=\sigma_x \otimes \sigma_x \otimes \openone_2\nonumber
\end{eqnarray}}
Substituting all the observables and states in Eq.~(\ref{B5}), we get the optimal quantum success probability as 

\begin{equation}
(\mathcal{P}^{4\rightarrow2}_Q)^{opt}\approx 0.933
\end{equation}
which is Eq. (\ref{p4max}) in the main text.
%%%%%%%%%%%%%%%%%%%%%%%%%%%%%%%%%%%%%%%%%%%%%%%%%%%%%%%%%%%%%%%%%%%%%%%%%%%%%%%%%%%%%%%%%%%%%%%%%%%%%%%%%%%%%%%%%%%%%%%%%%%%%%%%%%%%%%%%%%%%%%%%%%%%%%%%%%%%%%%%%%%%%%%%%%%%%%%%%%%%%%%%%%%%%%%%%%%%%%%%%%%%%%%%%%%%%%%%%%%%%%%%%%%%%%%%%%%%%

%%%%%%%%%%%%%%%%%%%%%%%%%%%%%%%%%%%%%%%%%%%%%%%%%%%%%%%%%%%%%%%%%%%%%%%%%%%%%%%%%%%%%%%%%%%%%%%%%%%%%%%%%%%%%%%%%%%%%%%%%%%%%%%%%%%%%%%%%%%%%%%%%%%%%%%%%%%%%%%%%%%%%%%%%%%%%%%%%%%%%%%%%%%%%%%%%%%%%%%%%%%%%%%%%%%%%%%%%%%%%%%%%%%%%%%%%%%%%

\section{Detailed derivation of optimal quantum success probability for \texorpdfstring{$5\rightarrow 1$}{5→1} entanglement-assisted PMRAC}
\label{optm51}
In the $5\rightarrow 1$ entanglement-assisted PMRAC game, the quantum success probability from Eq. (\ref{VQSP}) in the main text can be explicitly written as
\begin{eqnarray}
\mathcal{P}^{5\rightarrow 1}_{Q}=\dfrac{1}{160}\sum_{\{x_1,x_2,x_3,x_4,x_5\}\in\{0,1\}} \Tr[\rho_{x_1 x_2 x_3 x_4 x_5} \qty((-1)^{x_1}\Pi_{5,1}^{x_1}+(-1)^{x_2}\Pi_{5,2}^{x_2}+(-1)^{x_3}\Pi_{5,3}^{x_3}+(-1)^{x_4}\Pi_{5,4}^{x_4}+(-1)^{x_5}\Pi_{5,5}^{x_5})]
	\label{eq58}
\end{eqnarray}
Considering $\Pi_{5,y}^{0}=(\textbf{1}+B_{5,y})/2, \forall y\in[5]$, Eq. (\ref{eq58}) can be rewritten as
\begin{eqnarray}\label{suc51}
	\nonumber
	\mathcal{P}^{5\rightarrow 1}_{Q}=\dfrac{1}{2}&+&\dfrac{1}{320}\ Tr[(M^1_{5,1}+M^1_{5,3}+M^1_{5,5}+M^1_{5,7}+M^1_{5,2}+M^1_{5,4}+M^1_{5,6}+M^1_{5,8})B_{5,1}+(M^1_{5,1}+M^1_{5,3}-M^1_{5,5}-M^1_{5,7}+M^1_{5,2}+M^1_{5,4}\nonumber\\
    &&\hspace{0.8cm}-M^1_{5,6}-M^1_{5,8})B_{5,2}+(M^1_{5,1}-M^1_{5,3}+M^1_{5,5}-M^1_{5,7}+M^1_{5,2}-M^1_{5,4}+M^1_{5,6}-M^1_{5,8})B_{5,3}+(M^2_{5,1}+M^2_{5,3}+M^2_{5,5}\nonumber\\
    &&\hspace{0.8cm}+M^2_{5,7}+M^2_{5,2}+M^2_{5,4}+M^2_{5,6}+M^2_{5,8})B_{5,4}+(M^3_{5,1}+M^3_{5,3}+M^3_{5,5}+M^3_{5,7}+M^3_{5,2}+M^3_{5,4}+M^3_{5,6}+M^3_{5,8})B_{5,5}]
\end{eqnarray}
This leads us to define two-qubit states of the form 
\begin{eqnarray}
	\label{obserg51}
	\nonumber
	M^1_{5,1}&=&\rho_0+\rho_3-\rho_{29}-\rho_{30}; \ \ \ 
	M^1_{5,3}=
	\rho_{5}+\rho_{6}-\rho_{24}-\rho_{27}; \ \ \
	M^1_{5,5}=\rho_{9}+\rho_{10}-\rho_{20}-\rho_{23}; \ \ \ M^1_{5,7}=\rho_{12}+\rho_{15}-\rho_{17}-\rho_{18}\\
    \nonumber
	M^2_{5,1}&=&
	\rho_{0}-\rho_{3}+\rho_{29}-\rho_{30};\ \ \ 
 M^2_{5,3}=
	\rho_{5}-\rho_{6}+\rho_{24}-\rho_{27};\ \ \
	M^2_{5,5}=
	\rho_{9}-\rho_{10}+\rho_{20}-\rho_{23};\ \ \ 
	M^2_{5,7}=\rho_{12}-\rho_{15}+\rho_{17}-\rho_{18}\\
    M^3_{5,1}&=&
	\rho_{0}-\rho_{3}-\rho_{29}+\rho_{30};\ \ \ 
 M^3_{5,3}=
	-\rho_{5}+\rho_{6}+\rho_{24}-\rho_{27};\ \ \ 
    M^3_{5,5}=
	-\rho_{9}+\rho_{10}+\rho_{20}-\rho_{23};\ \ \ 
	M^3_{5,7}=\rho_{12}-\rho_{15}-\rho_{17}+\rho_{18}\ \ \\
    M^1_{5,2}&=&\rho_{1}+\rho_{2}-\rho_{28}-\rho_{31};\ \ \ M^1_{5,4}=\rho_{4}+\rho_{7}-\rho_{25}-\rho_{26};\ \ \
    M^1_{5,6}=\rho_{8}+\rho_{11}-\rho_{21}-\rho_{22};\ \ \ M^1_{5,8}=\rho_{13}+\rho_{14}-\rho_{16}-\rho_{19}\nonumber\\
    M^2_{5,2}&=&\rho_{1}-\rho_{2}+\rho_{28}-\rho_{31};\ \ \ M^2_{5,4}=\rho_{4}-\rho_{7}+\rho_{25}-\rho_{26};\ \ \
    M^2_{5,6}=\rho_{8}-\rho_{11}+\rho_{21}-\rho_{22};\ \ \ M^2_{5,8}=\rho_{13}-\rho_{14}+\rho_{16}-\rho_{19}\nonumber\\
    M^3_{5,2}&=&-\rho_{1}+\rho_{2}+\rho_{28}-\rho_{31};\ \ \ M^3_{5,4}=\rho_{4}-\rho_{7}-\rho_{25}+\rho_{26};\ \ \
    M^3_{5,6}=\rho_{8}-\rho_{11}-\rho_{21}+\rho_{22};\ \ \ M^3_{5,8}=-\rho_{13}+\rho_{14}+\rho_{16}-\rho_{19}\nonumber
\end{eqnarray}

where
\begin{eqnarray}
	&&\rho_0+\rho_3+\rho_{29}+\rho_{30}=\openone_4; \ \ \ 
	\rho_{5}+\rho_{6}+\rho_{24}+\rho_{27}=\openone_4;\ \ \ \rho_{9}+\rho_{10}+\rho_{20}+\rho_{23}=\openone_4; \ \ \ \rho_{12}+\rho_{15}+\rho_{17}+\rho_{18}=\openone_4\nonumber\\
     &&\rho_{1}+\rho_{2}+\rho_{28}+\rho_{31}=\openone_4;\ \ \ \rho_{4}+\rho_{7}+\rho_{25}+\rho_{26}=\openone_4\nonumber; \ \ \rho_{8}+\rho_{11}+\rho_{21}+\rho_{22}=\openone_4;\ \ \ \rho_{13}+\rho_{14}+\rho_{16}+\rho_{19}=\openone_4
\end{eqnarray}
Then the quantum success probability in Eq. (\ref{suc51}) can be casted as
\begin{eqnarray}\label{QSP51aap}
	\mathcal{P}^{5\rightarrow 1}_{Q}&=&\dfrac{1}{2}+\dfrac{\mathscr{I}_{5\rightarrow 1}}{320}
\end{eqnarray}
where the function $\mathscr{I}^{5\rightarrow1}$ is given by 
\begin{eqnarray}\label{I51}
\mathscr{I}^{5\rightarrow 1}&=&Tr[(M^1_{5,1}+M^1_{5,3}+M^1_{5,5}+M^1_{5,7}+M^1_{5,2}+M^1_{5,4}+M^1_{5,6}+M^1_{5,8})B_{5,1}+(M^1_{5,1}+M^1_{5,3}-M^1_{5,5}-M^1_{5,7}+M^1_{5,2}+M^1_{5,4}\nonumber\\
    &&\hspace{0.cm}-M^1_{5,6}-M^1_{5,8})B_{5,2}+(M^1_{5,1}-M^1_{5,3}+M^1_{5,5}-M^1_{5,7}+M^1_{5,2}-M^1_{5,4}+M^1_{5,6}-M^1_{5,8})B_{5,3}+(M^2_{5,1}+M^2_{5,3}+M^2_{5,5}\nonumber\\
    &&\hspace{0.4cm}+M^2_{5,7}+M^2_{5,2}+M^2_{5,4}+M^2_{5,6}+M^2_{5,8})B_{5,4}+(M^3_{5,1}+M^3_{5,3}+M^3_{5,5}+M^3_{5,7}+M^3_{5,2}+M^3_{5,4}+M^3_{5,6}+M^3_{5,8})B_{5,5}]
\end{eqnarray}

Now, we derive the QM maximum value of $\mathscr{I}_{5\rightarrow 1}$. The correlation function in Eq.~(\ref{I51})  can be written as
\begin{eqnarray}
\mathscr{I}^{5\rightarrow 1}&=&\Tr[\sum\limits_{y=1}^{5}\omega_y\mathscr{M}_{5,y} B_{5,y} ]\label{wab51aap}
\end{eqnarray}
where $\mathscr{M}_{5,y}$ and $\omega_y$'s are defined as following
\begin{eqnarray}
     &&\mathscr{M}_{5,1} = \frac{M^1_{5,1}+M^1_{5,3}+M^1_{5,5}+M^1_{5,7}+M^1_{5,2}+M^1_{5,4}+M^1_{5,6}+M^1_{5,8}}{\omega_1} \ ; \ \ \mathscr{M}_{5,2} = \frac{M^1_{5,1}+M^1_{5,3}-M^1_{5,5}-M^1_{5,7}+M^1_{5,2}+M^1_{5,4}-M^1_{5,6}-M^1_{5,8}}{\omega_2}\nonumber\\
     &&\mathscr{M}_{5,3} = \frac{M^1_{5,1}-M^1_{5,3}+M^1_{5,5}-M^1_{5,7}+M^1_{5,2}-M^1_{5,4}+M^1_{5,6}-M^1_{5,8}}{\omega_3}\ ; \ \ \mathscr{M}_{5,4} = \frac{M^2_{5,1}+M^2_{5,3}+M^2_{5,5}+M^2_{5,7}+M^2_{5,2}+M^2_{5,4}+M^2_{5,6}+M^2_{5,8}}{\omega_4}\nonumber\\
     &&\mathscr{M}_{5,5} = \frac{M^3_{5,1}+M^3_{5,3}+M^3_{5,5}+M^3_{5,7}+M^3_{5,2}+M^3_{5,4}+M^3_{5,6}+M^3_{5,8}}{\omega_5}\ \nonumber \\
  &&\omega_1=||M^1_{5,1}+M^1_{5,3}+M^1_{5,5}+M^1_{5,7}+M^1_{5,2}+M^1_{5,4}+M^1_{5,6}+M^1_{5,8}||;\ \
    \omega_2=||M^1_{5,1}+M^1_{5,3}-M^1_{5,5}-M^1_{5,7}+M^1_{5,2}+M^1_{5,4}-M^1_{5,6}-M^1_{5,8}|| \nonumber\\&&\omega_3=||M^1_{5,1}-M^1_{5,3}+M^1_{5,5}-M^1_{5,7}+M^1_{5,2}-M^1_{5,4}+M^1_{5,6}-M^1_{5,8}||;\ \
    \omega_4=||M^2_{5,1}+M^2_{5,3}+M^2_{5,5}+M^2_{5,7}+M^2_{5,2}+M^2_{5,4}+M^2_{5,6}+M^2_{5,8}|| \nonumber\\
     &&\omega_5=||M^3_{5,1}+M^3_{5,3}+M^3_{5,5}+M^3_{5,7}+M^3_{5,2}+M^3_{5,4}+M^3_{5,6}+M^3_{5,8}|| \quad  \label{omegai51aap}
\end{eqnarray}
  such that $||\mathscr{M}_{5,y}||=1$ and $||\cdot||$ is the scaled Frobenious norm, given by $||\mathcal{O}||=\frac{1}{\sqrt{4}}\sqrt{Tr[\mathcal{O}^{\dagger}\mathcal{O}]}$.
  
  Interestingly, as $\mathscr{M}_{5,y}$ and $ B_{5,y}$ are dichotomic normalized observables, the maximum value of $\mathscr{M}_{5,y} B_{5,y}$ occurs only when $\mathscr{M}_{5,y}=B_{5,y}$, which further implies
\begin{equation}\label{optbn51aap}
\mathscr{I}^{5\rightarrow 1}_{opt}= \Tr[\max\left(\sum\limits_{y=1}^{5}\omega_{y} \right)\openone_4]=4\max\left(\sum\limits_{y=1}^{5}\omega_{y} \right)
\end{equation}
Now, by evaluating $\omega_1$ from Eq.~(\ref{omegai51aap}), we get
\begin{eqnarray}
\omega_1
&=&\Bigg[8+\frac{1}{2}\Big(\Tr[M^1_{5,1}M^1_{5,3}+M^1_{5,1}M^1_{5,5}+M^1_{5,1}M^1_{5,7} + M^1_{5,1}M^1_{5,2} + M^1_{5,1}M^1_{5,4} + M^1_{5,1}M^1_{5,6} +M^1_{5,1}M^1_{5,8} +M^1_{5,3}M^1_{5,5}] \nonumber\\
&+& \Tr[ M^1_{5,3}M^1_{5,7} + M^1_{5,3}M^1_{5,2}+M^1_{5,3}M^1_{5,4}+ M^1_{5,3}M^1_{5,6} +M^1_{5,3}M^1_{5,8}+M^1_{5,5}M^1_{5,7}+M^1_{5,5}M^1_{5,2}+M^1_{5,5}M^1_{5,4} ]\nonumber\\
&+&\Tr[M^1_{5,5}M^1_{5,6}+M^1_{5,5} M^1_{5,8} +  + M^1_{5,7}M^1_{5,4} + M^1_{5,7}M^1_{5,6}+ M^1_{5,7}M^1_{5,8} + M^1_{5,2}M^1_{5,4} + M^1_{5,2}M^1_{5,6}]\nonumber\\
&+&\Tr[M^1_{5,2}M^1_{5,8} + M^1_{5,4} M^1_{5,6} + M^1_{5,4}M^1_{5,8} + M^1_{5,6}M^1_{5,8}]\Big) \Bigg]^\frac{1}{2}
\end{eqnarray}

Similarly, evaluating $\omega_2,\omega_3,\omega_4$ and $\omega_5$ and using the convex inequality $\sum\limits_{y=1}^{5}\omega_{y}\leq \sqrt{5 \sum\limits_{y=1}^{5} (\omega_{y})^{2}}$ where equality holds only when $\omega_y=\omega_{y'} \forall y\neq y\in [5]$, from Eqs.~(\ref{optbn51aap}), we get
\begin{eqnarray}\label{A151aap}
\mathscr{I}^{5\rightarrow 1}&\leq&4\qty( \max \sqrt{5\ \qty(\omega_1^2+\omega_2^2+\omega_3^2+\omega_4^2+\omega_5^2)} )\nonumber \\
&\leq&4\max\Big[5 \ \Big[40+\frac{1}{2}\Big(\Tr[ M^1_{5,1}M^1_{5,3}+M^2_{5,1}M^2_{5,3}+M^3_{5,1}M^3_{5,3}]+\Tr[ M^1_{5,1}M^1_{5,5}+M^2_{5,1}M^2_{5,5}+M^3_{5,1}M^3_{5,3}]\nonumber\\
&&+\Tr[ -M^1_{5,1}M^1_{5,7}+\{M^2_{5,1},M^2_{5,7}\}+M^3_{5,1}M^3_{5,7}]+\Tr[ 3M^1_{5,1}M^1_{5,2}+M^2_{5,1}M^2_{5,2}+M^3_{5,1}M^3_{5,2}]+Tr[ M^1_{5,1}M^1_{5,4}\nonumber\\
&&+M^2_{5,1}M^2_{5,4}+M^3_{5,1}M^3_{5,4}]+\Tr[ M^1_{5,1}M^1_{5,6}+M^2_{5,1}M^2_{5,6}+M^3_{5,1}M^3_{5,6}]+Tr[ -M^1_{5,1}M^1_{5,8}+M^2_{5,1}M^2_{5,8}\nonumber\\
&&+M^3_{5,1}M^3_{5,8}]+\Tr[ -M^1_{5,3}M^1_{5,5}+M^2_{5,3}M^2_{5,5}+M^3_{5,3}M^3_{5,5}]+\Tr[ M^1_{5,3}M^1_{5,7}+M^2_{5,3}M^2_{5,7}+M^3_{5,3}M^3_{5,7}] \nonumber\\
&&+\Tr[ M^1_{5,3}M^1_{5,2}+M^2_{5,3}M^2_{5,2}+M^3_{5,3}M^3_{5,2}]+\Tr[ 3M^1_{5,3}M^1_{5,4}+M^2_{5,3}M^2_{5,4}+M^3_{5,3}M^3_{5,4}]+Tr[ -M^1_{5,3}M^1_{5,6}\nonumber\\
&&+M^2_{5,3}M^2_{5,6}+M^3_{5,3}M^3_{5,6}]+\Tr[M^1_{5,3}M^1_{5,8}+M^2_{5,3}M^2_{5,8}+M^3_{5,3}M^3_{5,8}]+Tr[M^1_{5,5}M^1_{5,7}+M^2_{5,5}M^2_{5,7}\nonumber\\
&&+M^3_{5,5}M^3_{5,7}]+\Tr[M^1_{5,5}M^1_{5,2}+M^2_{5,5}M^2_{5,2}+M^3_{5,5}M^3_{5,2}]+\Tr[-M^1_{5,5}M^1_{5,4}+M^2_{5,5}M^2_{5,4}+M^3_{5,5}M^3_{5,4}]\nonumber\\
&&+\Tr[3M^1_{5,5}M^1_{5,6}+M^2_{5,5}M^2_{5,6}+M^3_{5,5}M^3_{5,6}]+\Tr[M^1_{5,5}M^1_{5,8}+M^2_{5,5}M^2_{5,8}+M^3_{5,5}M^3_{5,8}]+Tr[-M^1_{5,7}M^1_{5,2}\nonumber\\
&&+M^2_{5,7}M^2_{5,2}+M^3_{5,7}M^3_{5,2}]+\Tr[M^1_{5,7}M^1_{5,4}+M^2_{5,7}M^2_{5,4}+M^3_{5,7}M^3_{5,4}]+Tr[M^1_{5,7}M^1_{5,6}+M^2_{5,7}M^2_{5,6}\nonumber\\
&&+M^3_{5,7}M^3_{5,6}]+\Tr[3M^1_{5,7}M^1_{5,8}+M^2_{5,7}M^2_{5,8}+M^3_{5,7}M^3_{5,8}]+\Tr[M^1_{5,2}M^1_{5,4}+M^2_{5,2}M^2_{5,4}+M^3_{5,2}M^3_{5,4}]\nonumber\\
&&+\Tr[M^1_{5,2}M^1_{5,6}+M^2_{5,2}M^2_{5,6}+M^3_{5,2}M^3_{5,6}]+\Tr[-M^1_{5,2}M^1_{5,8}+M^2_{5,2}M^2_{5,8}+M^3_{5,2}M^3_{5,8}]+Tr[-M^1_{5,4}M^1_{5,6}\nonumber\\
&&+M^2_{5,4}M^2_{5,6}+M^3_{5,4}M^3_{5,6}]+Tr[M^1_{5,4}M^1_{5,8}+M^2_{5,4}M^2_{5,8}+M^3_{5,4}M^3_{5,8}]+Tr[M^1_{5,6}M^1_{5,8}+M^2_{5,6}M^2_{5,8}+M^3_{5,6}M^3_{5,8}]\Big)\Big]\Big]^\frac{1}{2}
\end{eqnarray}
Here in Eq.~(\ref{A151aap}), each of the $\Tr[..]$ are unique functions, hence we can write the following things.
\begin{eqnarray}\label{5to1eq1}
    \Tr[ M^1_{5,1}M^1_{5,3}+M^2_{5,1}M^2_{5,3}+M^3_{5,1}M^3_{5,3}]&=& 4-4\Tr[\rho_0\rho_{27}+ \rho_3\rho_{24}+\rho_{29}\rho_{6}+ \rho_{30}\rho_{5}]
\end{eqnarray}
Eq.~(\ref{5to1eq1}) will be maximum when 
$\Tr[\rho_0\rho_{27}]=\Tr[\rho_3\rho_{24}]=\Tr[\rho_{29}\rho_{6}]= \Tr[\rho_{30}\rho_{5}]=0$, which gives
\begin{eqnarray}
    \Tr[ M^1_{5,1}M^1_{5,3}+M^2_{5,1}M^2_{5,3}+M^3_{5,1}M^3_{5,3}]=4
\end{eqnarray}
Similarly, from the second to the twenty-eight block, we get 
\begin{eqnarray}
    \label{ea51}
 &&   \Tr[ M^1_{5,1}M^1_{5,5}+M^2_{5,1}M^2_{5,5}+M^3_{5,1}M^3_{5,3}]=
    \Tr[ -M^1_{5,1}M^1_{5,7}+\{M^2_{5,1},M^2_{5,7}\}+M^3_{5,1}M^3_{5,7}]=
    \Tr[ -M^1_{5,3}M^1_{5,5}+M^2_{5,3}M^2_{5,5}+M^3_{5,3}M^3_{5,5}]\nonumber \\ 
    &&=\Tr[ M^1_{5,3}M^1_{5,7}+M^2_{5,3}M^2_{5,7}+M^3_{5,3}M^3_{5,7}]=
    \Tr[ M^1_{5,2}M^1_{5,4}+\{M^2_{5,2}, M^2_{5,4}\}+M^3_{5,2}M^3_{5,4}]=
     \Tr[ M^1_{5,2}M^1_{5,6}+\{M^2_{5,2}, M^2_{5,6}\}+M^3_{5,2}M^3_{5,6}]\nonumber \\ 
    &&=\Tr[ -M^1_{5,2}M^1_{5,8}+M^2_{5,2} M^2_{5,8}+M^3_{5,2}M^3_{5,8}]=
     \Tr[ -M^1_{5,4}M^1_{5,6}+M^2_{5,4} M^2_{5,6}+M^3_{5,4}M^3_{5,6}]=
    \Tr[ M^1_{5,4}M^1_{5,8}+M^2_{5,4}M^2_{5,8}+M^3_{5,4}M^3_{5,8}]\nonumber \\ 
    &&=\Tr[ M^1_{5,6}M^1_{5,8}+M^2_{5,6}M^2_{5,8}+M^3_{5,6}M^3_{5,8}]=4
\end{eqnarray}
and, 
    \begin{eqnarray}\label{ea52}
   && \Tr[ 3M^1_{5,1}M^1_{5,2}+M^2_{5,1}M^2_{5,2}+M^3_{5,1}M^3_{5,2}]=
    \Tr[ M^1_{5,1}M^1_{5,4}+M^2_{5,1}M^2_{5,4}+M^3_{5,1}M^3_{5,4}]=
     \Tr[ M^1_{5,1}M^1_{5,6}+M^2_{5,1}M^2_{5,6}+M^3_{5,1}M^3_{5,6}] \nonumber \\
   && =\Tr[-M^1_{5,1}M^1_{5,8}+M^2_{5,1}M^2_{5,8}+M^3_{5,1}M^3_{5,8}]=
    \Tr[ M^1_{5,3}M^1_{5,2}+M^2_{5,3}M^2_{5,2}+M^3_{5,3}M^3_{5,2}]=
    \Tr[ 3M^1_{5,3}M^1_{5,4}+M^2_{5,3}M^2_{5,4}+M^3_{5,3}M^3_{5,4}]\nonumber \\
   &&=\Tr[ -M^1_{5,3}M^1_{5,6}+M^2_{5,3}M^2_{5,6}+M^3_{5,3}M^3_{5,6}]=
    \Tr[ M^1_{5,3}M^1_{5,8}+M^2_{5,3}M^2_{5,8}+M^3_{5,3}M^3_{5,8}]=\Tr[ M^1_{5,5}M^1_{5,7}+M^2_{5,5}M^2_{5,7}+M^3_{5,5}M^3_{5,7}]\nonumber \\
   &&=\Tr[ M^1_{5,5}M^1_{5,2}+M^2_{5,5}M^2_{5,2}+M^3_{5,5}M^3_{5,2}]=
   \Tr[ -M^1_{5,5}M^1_{5,4}+M^2_{5,5}M^2_{5,4}+M^3_{5,5}M^3_{5,4}]=
    \Tr[ 3M^1_{5,5}M^1_{5,6}+M^2_{5,5}M^2_{5,6}+M^3_{5,5}M^3_{5,6}]\nonumber \\
   &&=\Tr[ M^1_{5,5}M^1_{5,8}+M^2_{5,5}M^2_{5,8}+M^3_{5,5}M^3_{5,8}]=
    \Tr[ -M^1_{5,7}M^1_{5,2}+M^2_{5,7}M^2_{5,2}+M^3_{5,7}M^3_{5,2}]=
   \Tr[ M^1_{5,7}M^1_{5,6}+M^2_{5,7}M^2_{5,6}+M^3_{5,7}M^3_{5,6}]\nonumber \\
   &&=\Tr[ M^1_{5,7}M^1_{5,8}+M^2_{5,7}M^2_{5,8}+M^3_{5,7}M^3_{5,8}]=12
    \end{eqnarray}

Eqs.~(\ref{ea51})–(\ref{ea52}) reach their maximum only when all the $\Tr[..]=0$. Substituting all these equations into Eq.~(\ref{A151aap}), we obtain the following.
\begin{eqnarray}
    \mathscr{I}_{opt}^{5\rightarrow 1}\leq 4\sqrt{5(154)}=4\sqrt{770}
\end{eqnarray}
Hence, the quantum success probability in Eq.~(\ref{QSP51aap}) becomes
\begin{eqnarray}
    \mathcal{P}^{5\rightarrow 1}_{Q}&\leq&\dfrac{1}{2}+\dfrac{4\sqrt{770}}{320}\approx 0.847
\end{eqnarray}
which is Eq. (\ref{p51max}) in the main text.
%%%%%%%%%%%%%%%%%%%%%%%%%%%%%%%%%%%%%%%%%%%%%%%%%%%%%%%%%%%%%%%%%%%%%%%%%%%%%%
\section{Detailed derivation of optimal quantum success probability for $5\rightarrow 2$  entanglement-assisted PMRAC }
\label{optm52}
In the $5\rightarrow 2$ entanglement-assisted PMRAC game, the quantum success probability from Eq. (\ref{VQSP}) in the main text can be explicitly written as
\begin{eqnarray}\label{suc52}
	\nonumber
	\mathcal{P}^{5\rightarrow 2}_{Q}=\dfrac{1}{2}&+&\dfrac{1}{320}\ Tr[(N^1_{5,1}+N^1_{5,3}+N^1_{5,2}+N^1_{5,4})B_{5,1}+(N^1_{5,1}-N^1_{5,3}+N^1_{5,2}-N^1_{5,4})B_{5,2}+(N^2_{5,1}+N^2_{5,3}+N^2_{5,2}+N^2_{5,4})B_{5,3}\nonumber\\
    &&\hspace{1.3cm}+(N^3_{5,1}+N^3_{5,3}+N^3_{5,2}+N^3_{5,4})B_{5,4}+(N^4_{5,1}+N^4_{5,3}+N^4_{5,2}+N^4_{5,4})B_{5,5}]
\end{eqnarray}
where we define two-qubit states of the form 
\begin{eqnarray}
	\label{obserg52}
	\nonumber
	N^1_{5,1}&=&\rho_0+\rho_3-\rho_{29}-\rho_{30}+\rho_{5}+\rho_{6}-\rho_{24}-\rho_{27}; \ \ \ 
	N^1_{5,3}=\rho_{9}+\rho_{10}-\rho_{20}-\rho_{23}+\rho_{12}+\rho_{15}-\rho_{17}-\rho_{18} \ \ \\\
    N^2_{5,1}&=&\rho_0+\rho_3-\rho_{29}-\rho_{30}-
	\rho_{5}-\rho_{6}+\rho_{24}+\rho_{27}; \ \ \ N^2_{5,3}=\rho_{9}+\rho_{10}-\rho_{20}-\rho_{23}-\rho_{12}-\rho_{15}+\rho_{17}+\rho_{18} \nonumber
	\\
    \nonumber
	N^3_{5,1}&=&
	\rho_{0}-\rho_{3}+\rho_{29}-\rho_{30}+
	\rho_{5}-\rho_{6}+\rho_{24}-\rho_{27};\ \ \
	N^3_{5,3}=
	\rho_{9}-\rho_{10}+\rho_{20}-\rho_{23}+\rho_{12}-\rho_{15}+\rho_{17}-\rho_{18}\\
    N^4_{5,1}&=&
	\rho_{0}-\rho_{3}-\rho_{29}+\rho_{30}-\rho_{5}+\rho_{6}+\rho_{24}-\rho_{27};\ \ \ 
    N^4_{5,3}=-\rho_{9}+\rho_{10}+\rho_{20}-\rho_{23}+\rho_{12}-\rho_{15}-\rho_{17}+\rho_{18}\\
    N^1_{5,2}&=&\rho_{1}+\rho_{2}-\rho_{28}-\rho_{31}+\rho_{4}+\rho_{7}-\rho_{25}-\rho_{26};\ \ \ 
    N^1_{5,4}=\rho_{8}+\rho_{11}-\rho_{21}-\rho_{22}+\rho_{13}+\rho_{14}-\rho_{16}-\rho_{19} \nonumber \\
    N^2_{5,2}&=&\rho_{1}+\rho_{2}-\rho_{28}-\rho_{31}-\rho_{4}-\rho_{7}+\rho_{25}+\rho_{26};\ \ \ N^2_{5,4}=\rho_{8}+\rho_{11}-\rho_{21}-\rho_{22}-\rho_{13}-\rho_{14}+\rho_{16}+\rho_{19}\nonumber\\
    N^3_{5,2}&=&\rho_{1}-\rho_{2}+\rho_{28}-\rho_{31}+\rho_{4}-\rho_{7}+\rho_{25}-\rho_{26};\ \ \
    N^3_{5,4}=\rho_{8}-\rho_{11}+\rho_{21}-\rho_{22}+\rho_{13}-\rho_{14}+\rho_{16}-\rho_{19}\nonumber\\
    N^4_{5,2}&=&-\rho_{1}+\rho_{2}+\rho_{28}-\rho_{31}+\rho_{4}-\rho_{7}-\rho_{25}+\rho_{26};\ \ \
    N^4_{5,4}=\rho_{8}-\rho_{11}-\rho_{21}+\rho_{22}-\rho_{13}+\rho_{14}+\rho_{16}-\rho_{19}\nonumber
\end{eqnarray}

and the states satisfy,
\begin{eqnarray}
	&&\rho_0+\rho_3+\rho_{29}+\rho_{30}+\rho_{5}+\rho_{6}+\rho_{24}+\rho_{27}=\openone_8;\ \ \ \rho_{9}+\rho_{10}+\rho_{20}+\rho_{23}+\rho_{12}+\rho_{15}+\rho_{17}+\rho_{18}=\openone_8\nonumber\\
     &&\rho_{1}+\rho_{2}+\rho_{28}+\rho_{31}+\rho_{4}+\rho_{7}+\rho_{25}+\rho_{26}=\openone_8; \ \ \  \rho_{8}+\rho_{11}+\rho_{21}+\rho_{22}+\rho_{13}+\rho_{14}+\rho_{16}+\rho_{19}=\openone_8
\end{eqnarray}
Then the quantum success probability in Eq. (\ref{suc52}) can be casted as
\begin{eqnarray}\label{QSP52aap}
	\mathcal{P}^{5\rightarrow 2}_{Q}&=&\dfrac{1}{2}+\dfrac{\mathscr{I}_{5\rightarrow 2}}{320}
\end{eqnarray}
where the function $\mathscr{I}^{5\rightarrow2}$ is given by 
\begin{eqnarray}\label{I52a}
\mathscr{I}^{5\rightarrow 2}&=&Tr[(N^1_{5,1}+N^1_{5,3}+N^1_{5,2}+N^1_{5,4})B_{5,1}+(N^1_{5,1}-N^1_{5,3}+N^1_{5,2}-N^1_{5,4})B_{5,2}+(N^2_{5,1}+N^2_{5,3}+N^2_{5,2}+N^2_{5,4})B_{5,3}\nonumber\\
    &&+(N^3_{5,1}+N^3_{5,3}+N^3_{5,2}+N^3_{5,4})B_{5,4}+(N^4_{5,1}+N^4_{5,3}+N^4_{5,2}+N^4_{5,4})B_{5,5}]
\end{eqnarray}

Now, we derive the QM maximum value of $\mathscr{I}_{5\rightarrow 2}$. The correlation function in Eq.~(\ref{I52a})  can be written as
\begin{eqnarray}
\mathscr{I}^{5\rightarrow 2}&=&\Tr[\sum\limits_{y=1}^{5}\mu_y\mathscr{N}_{5,y} B_{5,y} ]\label{wab52aap}
\end{eqnarray}
where $\mathscr{N}_{5,y}$ and $\mu_y$'s are defined as following
\begin{eqnarray}
     &&\mathscr{N}_{5,1} = \frac{N^1_{5,1}+N^1_{5,3}+N^1_{5,2}+N^1_{5,4}}{\mu_1}; \ \ \mathscr{N}_{5,2} = \frac{N^1_{5,1}-N^1_{5,3}+N^1_{5,2}-N^1_{5,4}}{\mu_2}; \ \ \mathscr{N}_{5,3} = \frac{N^2_{5,1}+N^2_{5,3}+N^2_{5,2}+N^2_{5,4}}{\mu_3}\nonumber\\
     &&\mathscr{N}_{5,4} = \frac{N^3_{5,1}+N^3_{5,3}+N^3_{5,2}+N^3_{5,4}}{\mu_4}; \ \ \mathscr{N}_{5,5} = \frac{N^4_{5,1}+N^4_{5,3}+N^4_{5,2}+N^4_{5,4}}{\mu_5};\  \ \mu_1=||N^1_{5,1}+N^1_{5,3}+N^1_{5,2}+N^1_{5,4}||\nonumber\\
     &&\mu_2=||N^1_{5,1}-N^1_{5,3}+N^1_{5,2}-N^1_{5,4}||; \ \ \mu_3=||N^2_{5,1}+N^2_{5,3}+N^2_{5,2}+N^2_{5,4}||; \ \ \mu_4=||N^3_{5,1}+N^3_{5,3}+N^3_{5,2}+N^3_{5,4}||\nonumber\\
     &&\mu_5=||N^4_{5,1}+N^4_{5,3}+N^4_{5,2}+N^4_{5,4}|| \label{omegai52aap}
\end{eqnarray}
  such that $||\mathscr{N}_{5,y}||=1$.
  
  The maximization condition $\mathscr{N}_{5,y}=B_{5,y}$ gives
  \begin{equation}
\mathscr{I}^{5\rightarrow 2}_{opt}= \Tr[\max\left(\sum\limits_{y=1}^{5}\mu_{y} \right)\openone_8]=8\max\left(\sum\limits_{y=1}^{5}\mu_{y} \right)
\end{equation}
Now, by evaluating $\mu_y,\forall y\in[5]$ from Eq.~(\ref{omegai52aap}), we get
\begin{eqnarray}\label{omegai152aap}
\mu_1
&=&\Bigg[4+\frac{1}{4}\Tr[N^1_{5,1}N^1_{5,3}+ N^1_{5,1}N^1_{5,2} + N^1_{5,1}N^1_{5,4} +  N^1_{5,3}N^1_{5,2}+N^1_{5,3}N^1_{5,4}+  N^1_{5,2} N^1_{5,4}] \Bigg]^\frac{1}{2}\\
\mu_2&=&
\Bigg[4+\frac{1}{4}\Tr[-N^1_{5,1}N^1_{5,3}+ N^1_{5,1}N^1_{5,2} - N^1_{5,1}N^1_{5,4} -  N^1_{5,3}N^1_{5,2}+N^1_{5,3}N^1_{5,4}-N^1_{5,2} N^1_{5,4}] \Bigg]^\frac{1}{2}\\
\mu_3&=&
\Bigg[4+\frac{1}{4}\Tr[N^2_{5,1}N^2_{5,3}+ N^2_{5,1}N^2_{5,2} + N^2_{5,1}N^2_{5,4} +  N^2_{5,3}N^2_{5,2}+N^2_{5,3}N^2_{5,4}+  N^2_{5,2}N^2_{5,4}] \Bigg]^\frac{1}{2}\\
\mu_4&=&\Bigg[4+\frac{1}{4}\Tr[N^3_{5,1}N^3_{5,3}+ N^3_{5,1}N^3_{5,2} + N^3_{5,1}N^3_{5,4} +N^3_{5,3}N^3_{5,2}+N^3_{5,3}N^3_{5,4}+  N^3_{5,2},N^3_{5,4}] \Bigg]^\frac{1}{2}\\
\mu_5&=&\Bigg[4+\frac{1}{4}\Tr[N^4_{5,1}N^4_{5,3}+ N^4_{5,1}N^4_{5,2} + N^4_{5,1}N^4_{5,4} +N^4_{5,3}N^4_{5,2}+N^4_{5,3}N^4_{5,4}+  N^4_{5,2} N^4_{5,4}] \Bigg]^\frac{1}{2}
\end{eqnarray}
Next, using the convex inequality $\sum\limits_{y=1}^{5}\mu_{y}\leq \sqrt{5 \sum\limits_{y=1}^{5} (\mu_{y})^{2}}$, we get
\begin{eqnarray}\label{A152aap}
\mathscr{I}^{5\rightarrow 2}&\leq&8\qty( \max \sqrt{5\ \qty(\mu_1^2+\mu_2^2+\mu_3^2+\mu_4^2+\mu_5^2)} )\nonumber \\
&\leq&8\max\Bigg[5 \ \Big[20+\frac{1}{4}\Tr[ N^2_{5,1}N^2_{5,3}+N^3_{5,1}N^3_{5,3}+N^4_{5,1}N^4_{5,3}]+\frac{1}{4}\Tr[ 2N^1_{5,1}N^1_{5,2}+N^2_{5,1}N^2_{5,2}+N^3_{5,1}N^3_{5,2}+N^4_{5,1}N^4_{5,2}]\nonumber\\
&&+\frac{1}{4}\Tr[ N^2_{5,1}N^2_{5,4}+N^3_{5,1}N^3_{5,4}+N^4_{5,1}N^4_{5,4}]+\frac{1}{4}\Tr[N^2_{5,2}N^2_{5,4}+N^3_{5,2}N^3_{5,4}+N^4_{5,2}N^4_{5,4}]+\frac{1}{4}\Tr[N^2_{5,3}N^2_{5,2}+N^3_{5,3}N^3_{5,2}+N^4_{5,3}N^4_{5,2}]\nonumber\\
&&+\frac{1}{4}\Tr[ 2N^1_{5,3}N^1_{5,4}+N^2_{5,3}N^2_{5,4}+N^3_{5,3}N^3_{5,4}+N^4_{5,3}N^4_{5,4}]\Big]\Bigg]^\frac{1}{2}
\end{eqnarray}
Here in Eq.~(\ref{A152aap}), each of the $\Tr[..]$ has unique functions, and hence we can write the following 
\begin{eqnarray}
    &&\hspace{-0.25cm}\Tr[ N^2_{5,1}N^2_{5,3}+N^3_{5,1}N^3_{5,3}+N^4_{5,1}N^4_{5,3}]\nonumber\\
    &&=8-4 \ Tr\big[\rho_0\rho_{15}+\rho_0\rho_{23}+\rho_3\rho_{12}+\rho_3\rho_{20}+\rho_5\rho_{10}+\rho_5\rho_{18}+\rho_6\rho_{17}+\rho_6\rho_{9}+\rho_{24}\rho_{15}+\rho_{24}\rho_{23}+\rho_{27}\rho_{12}+\rho_{27}\rho_{20}+\rho_{29}\rho_{10}\nonumber\\
    && +\rho_{29}\rho_{18}+\rho_{30}\rho_{17}+\rho_{30}\rho_{9}\big]\nonumber\\
    &&=8 \quad (\text{At max all the} \Tr[\rho_x \rho_{x'}]_{x\neq x'}=0)
\end{eqnarray}
Similarly, we derive
\begin{eqnarray}
&&\Tr[N^2_{5,3}N^2_{5,2}+N^3_{5,3}N^3_{5,2}+N^4_{5,3}N^4_{5,2}]=\Tr[ 2N^1_{5,3}N^1_{5,4}+N^2_{5,3}N^2_{5,4}+N^4_{5,3}N^3_{5,4}+N^4_{5,3}N^4_{5,4}]\nonumber \\
&&=\Tr[ 2N^1_{5,1}N^1_{5,2}+N^2_{5,1}N^2_{5,2}+N^3_{5,1}N^3_{5,2}+N^4_{5,1}N^4_{5,2}]=\Tr[ N^2_{5,1}N^2_{5,4}+N^3_{5,1}N^3_{5,4}+N^4_{5,1}N^4_{5,4}]=24 
\end{eqnarray}
and,
\begin{eqnarray}
&&\hspace{-0.25cm}\Tr[N^2_{5,2}N^2_{5,4}+N^3_{5,2}N^3_{5,4}+N^4_{5,2}N^4_{5,4}]=8\\ 
\end{eqnarray}

Now, putting all these equations into Eq.~(\ref{A152aap}), we get
\begin{eqnarray}
    \mathscr{I}_{opt}^{5\rightarrow 2}\leq8\sqrt{5(20+28)}=32\sqrt{15}
\end{eqnarray}
Hence the quantum success probability in Eq.~(\ref{QSP52aap}) becomes
\begin{eqnarray}
    \mathcal{P}^{5\rightarrow 2}_{Q}&\leq&\dfrac{1}{2}+\dfrac{32\sqrt{15}}{320}\approx 0.887
\end{eqnarray}  
which is Eq. (\ref{p52max}) in the main text.
%%%%%%%%%%%%%%%%%%%%%%%%%%%%%%%%%%%%%%%%%%%%%%%%%%%%%%%%%%%%%%%%%%%%%%%%%%%%%%
\section{Detailed derivation of optimal quantum success probability for $5\rightarrow 3$  entanglement-assisted PMRAC }
\label{optm53}
In the $5\rightarrow 3$ entanglement-assisted PMRAC game, the quantum success probability from Eq. (\ref{VQSP}) in the main text can be explicitly written as
\begin{eqnarray}\label{suc53}
	\mathcal{P}^{5\rightarrow 2}_{Q}=\dfrac{1}{2}&+&\dfrac{1}{320}\Tr[(L^1_{5,1}+L^1_{5,2})B_{5,1}+(L^2_{5,1}+L^2_{5,2})B_{5,2}+(L^3_{5,1}+L^3_{5,2})B_{5,3}+(L^4_{5,1}+L^4_{5,2})B_{5,4}+(L^5_{5,1}+L^5_{5,2})B_{5,5}]
\end{eqnarray}
where the two-qubit states are of the form
\begin{eqnarray}
    &&L^1_{5,1}=\rho_0+\rho_3-\rho_{29}-\rho_{30}+\rho_{5}+\rho_{6}-\rho_{24}-\rho_{27}+\rho_{9}+\rho_{10}-\rho_{20}-\rho_{23}+\rho_{12}+\rho_{15}-\rho_{17}-\rho_{18}\label{531} \\
    &&L^2_{5,1}=\rho_0+\rho_3-\rho_{29}-\rho_{30}+\rho_{5}+\rho_{6}-\rho_{24}-\rho_{27}-\rho_{9}-\rho_{10}+\rho_{20}+\rho_{23}-\rho_{12}-\rho_{15}+\rho_{17}+\rho_{18}\\
    &&L^3_{5,1}=\rho_0+\rho_3-\rho_{29}-\rho_{30}-
	\rho_{5}-\rho_{6}+\rho_{24}+\rho_{27}+\rho_{9}+\rho_{10}-\rho_{20}-\rho_{23}-\rho_{12}-\rho_{15}+\rho_{17}+\rho_{18}\\
    &&L^4_{5,1}=\rho_{0}-\rho_{3}+\rho_{29}-\rho_{30}+
	\rho_{5}-\rho_{6}+\rho_{24}-\rho_{27}+
	\rho_{9}-\rho_{10}+\rho_{20}-\rho_{23}+\rho_{12}-\rho_{15}+\rho_{17}-\rho_{18}\\
    &&L^5_{5,1}=\rho_{0}-\rho_{3}-\rho_{29}+\rho_{30}-\rho_{5}+\rho_{6}+\rho_{24}-\rho_{27}-\rho_{9}+\rho_{10}+\rho_{20}-\rho_{23}+\rho_{12}-\rho_{15}-\rho_{17}+\rho_{18}\\
    &&L^1_{5,2}=\rho_{1}+\rho_{2}-\rho_{28}-\rho_{31}-\rho_{4}-\rho_{7}+\rho_{25}+\rho_{26}+\rho_{8}+\rho_{11}-\rho_{21}-\rho_{22}-\rho_{13}-\rho_{14}+\rho_{16}+\rho_{19}r\\
    &&L^2_{5,2}=\rho_{1}+\rho_{2}-\rho_{28}-\rho_{31}-\rho_{4}-\rho_{7}+\rho_{25}+\rho_{26}-\rho_{8}-\rho_{11}+\rho_{21}+\rho_{22}+\rho_{13}+\rho_{14}-\rho_{16}-\rho_{19}\\
    &&L^3_{5,3}=\rho_{1}+\rho_{2}-\rho_{28}-\rho_{31}-\rho_{4}-\rho_{7}+\rho_{25}+\rho_{26}+\rho_{8}+\rho_{11}-\rho_{21}-\rho_{22}-\rho_{13}-\rho_{14}+\rho_{16}+\rho_{19}\\
    &&L^4_{5,2}=\rho_{1}-\rho_{2}+\rho_{28}-\rho_{31}+\rho_{4}-\rho_{7}+\rho_{25}-\rho_{26}+\rho_{8}-\rho_{11}+\rho_{21}-\rho_{22}+\rho_{13}-\rho_{14}+\rho_{16}-\rho_{19}\\
    &&L^5_{5,2}=-\rho_{1}+\rho_{2}+\rho_{28}-\rho_{31}+\rho_{4}-\rho_{7}-\rho_{25}+\rho_{26}+\rho_{8}-\rho_{11}-\rho_{21}+\rho_{22}-\rho_{13}+\rho_{14}+\rho_{16}-\rho_{19}\label{532}
\end{eqnarray}
and
\begin{eqnarray}
	&&\rho_0+\rho_3+\rho_{29}+\rho_{30}+\rho_{5}+\rho_{6}+\rho_{24}+\rho_{27}+\rho_{9}+\rho_{10}+\rho_{20}+\rho_{23}+\rho_{12}+\rho_{15}+\rho_{17}+\rho_{18}=\openone_{16}\nonumber\\
     &&\rho_{1}+\rho_{2}+\rho_{28}+\rho_{31}+\rho_{4}+\rho_{7}+\rho_{25}+\rho_{26}+\rho_{8}+\rho_{11}+\rho_{21}+\rho_{22}+\rho_{13}+\rho_{14}+\rho_{16}+\rho_{19}=\openone_{16}
\end{eqnarray}
Then the quantum success probability in Eq. (\ref{suc53}) can be casted as
\begin{eqnarray}\label{QSP53aap}
	\mathcal{P}^{5\rightarrow 3}_{Q}&=&\dfrac{1}{2}+\dfrac{\mathscr{I}_{5\rightarrow 3}}{320}
\end{eqnarray}
where the function $\mathscr{I}^{5\rightarrow3}$ is given by 
\begin{eqnarray}\label{I53}
\mathscr{I}^{5\rightarrow 3}&=&\Tr[(L^1_{5,1}+L^1_{5,2})B_{5,1}+(L^2_{5,1}+L^2_{5,2})B_{5,2}+(L^3_{5,1}+L^3_{5,2})B_{5,3}+(L^4_{5,1}+L^4_{5,2})B_{5,4}+(L^5_{5,1}+L^5_{5,2})B_{5,5}]
\end{eqnarray}

Now, we derive the QM maximum value of $\mathscr{I}_{5\rightarrow 3}$. The correlation function in Eq.~(\ref{I53})  can be written as
\begin{eqnarray}
\mathscr{I}^{5\rightarrow 3}&=&\Tr[\sum\limits_{y=1}^{5}\eta_y\mathscr{L}_{5,y} B_{5,y} ]\label{wab53aap}
\end{eqnarray}
where $\mathscr{L}_{5,y}$ and $\eta_y$'s are defined as following
\begin{eqnarray}
     &&\mathscr{L}_{5,y} = \frac{L^y_{5,1}+L^y_{5,2}}{\eta_y};\quad \eta_y=||L^y_{5,1}+L^y_{5,2}||,\quad \forall y\in [5] \label{omegai53aap}
\end{eqnarray}
  such that $||\mathscr{L}_{5,y}||=1$ and $||\cdot||$ is the scaled Frobenious norm, given by $||\mathcal{O}||=\frac{1}{\sqrt{16}}\sqrt{Tr[\mathcal{O}^{\dagger}\mathcal{O}]}$.
  
  Interestingly, since $\mathscr{L}_{5,y}$ and $ B_{5,y}$ are dichotomic normalized observables, the maximum value of $\mathscr{L}_{5,y} B_{5,y}$ occurs only when $\mathscr{L}_{5,y}=B_{5,y}$, which further implies
\begin{equation}\label{optbn53aap}
\mathscr{I}^{5\rightarrow 2}_{opt}= \Tr[\max\left(\sum\limits_{y=1}^{5}\eta_y \right)\openone_{16}]=16\max\left(\sum\limits_{y=1}^{5}\eta_y \right)
\end{equation}
Now, by evaluating $\eta_y$ from Eq.~(\ref{omegai53aap}), we get
\begin{eqnarray}
\eta_y
&=&\Bigg[2+\frac{1}{8}\Tr[L^y_{5,1}L^y_{5,2}] \Bigg]^\frac{1}{2},\quad \forall y\in[5]
\end{eqnarray}
Next, using the convex inequality $\sum\limits_{y=1}^{5}\eta_{y}\leq \sqrt{5 \sum\limits_{y=1}^{5} (\eta_{y})^{2}}$ where equality holds only when $\eta_y=\eta_{y'} \forall y\neq y'\in [5]$, from Eqs.~(\ref{optbn53aap}), we get

\begin{eqnarray}\label{A153aap}
\mathscr{I}^{5\rightarrow 3}&\leq&16\qty( \max \sqrt{5\ \qty(\eta_1^2+\eta_2^2+\eta_3^2+\eta_4^2+\eta_5^2)} )\nonumber \\
&\leq&16\max\Bigg[5 \ \Big[10+\frac{1}{8}\Tr[ L^1_{5,1}L^1_{5,2}+L^2_{5,1}L^2_{5,2}+L^3_{5,1}L^3_{5,2}+L^4_{5,1}L^4_{5,2}+L^5_{5,1}L^5_{5,2}]\Big]\Bigg]^\frac{1}{2}
\end{eqnarray}
Now, each $\Tr[L^y_{5,1}L^y_{5,2}] \ \forall y\in [5]$ can be further rewritten in terms of $\Tr[\rho_x\rho_{x'}]$ using Eq.~(\ref{531})-(\ref{532}). For maximizing Eq.~(\ref{A153aap}), each $\Tr[\rho_x\rho_{x'}]$ with $x,x \in \{0,1,2,...,31\}$ should be zero, which gives
\begin{eqnarray}
    \mathscr{I}_{opt}^{5\rightarrow 3}\leq 64\sqrt{5}
\end{eqnarray}
Hence the quantum success probability in Eq.~(\ref{QSP53aap}) becomes
\begin{eqnarray}
    \mathcal{P}^{5\rightarrow 3}_{Q}\leq\dfrac{1}{2}+\dfrac{64\sqrt{5}}{320}\approx 0.947
\end{eqnarray}
which is Eq. (\ref{p53max}) in the main text.
%%%%%%%%%%%%%%%%%%%%%%%%%%%%%%%%%%%%%%%%%%%%%%%%%%%%%%%%%%%%%%%%%%%%%%%%%%%%%%%%%%%%%%%%%%%%%%%%%%%%%%%%%%%%%%%%%%%%%%%%%%%%%%%%%%%%%%%%%%

\end{widetext}
%\section*{References}
%\bibliographystyle{unsrt}
\bibliographystyle{apsrev4-2}
\bibliography{references}

\end{document}